\begin{document}

\title{Efficient Graph Embedding at Scale: Optimizing CPU-GPU-SSD Integration} %\thanks{Grants or other notes
%about the article that should go on the front page should be
%placed here. General acknowledgments should be placed at the end of the article.}

% \title[Efficient Graph Embedding at Scale: Optimizing CPU-GPU-SSD Integration]{\texorpdfstring{Efficient Graph Embedding at Scale: Optimizing \\CPU-GPU-SSD Integration}{Efficient Graph Embedding at Scale: Optimizing CPU-GPU-SSD Integration}}
% \subtitle{Do you have a subtitle?\\ If so, write it here}

%\titlerunning{Short form of title}        % if too long for running head

\author{Zhonggen Li  \and
        Xiangyu Ke  \and
        Yifan Zhu \and
        Yunjun Gao \and
        Feifei Li
}

%\authorrunning{Short form of author list} % if too long for running head

\institute{
           Zhonggen Li\and Xiangyu Ke \and Yifan Zhu \and Yunjun Gao (\Letter) \at
           % College of Computer Science and Technology \at
           Zhejiang University, Hangzhou, China \at
              \email{\{zgli, xiangyu.ke, xtf\_z, gaoyj\}@zju.edu.cn} 
        %     \and
        % Xiangyu Ke \at
        % Zhejiang University \at
        %     xiangyu.ke@zju.edu.cn
        %     \and
        % Yifan Zhu \at
        % Zhejiang University \at
        %     xtf\_z@zju.edu.cn
        %     \and
        % Yunjun Gao (\Letter) \at
        % Zhejiang University \at
        %       gaoyj@zju.edu.cn
            \and
        Feifei Li \at
        Alibaba Group, Hangzhou, China \at
            lifeifei@alibaba-inc.com
}

\date{Received: date / Accepted: date}
% The correct dates will be entered by the editor

\maketitle

\begin{abstract}
Graph embeddings map graph nodes to continuous vectors and are foundational to community detection, recommendation, and many scientific applications. 
% \sout{However, existing methods either struggle with scalability due to the storage limitations of RAM and multiple GPUs or encounter efficiency bottlenecks caused by high I/O overhead.}
{At billion-scale, however, existing graph embedding systems face a trade-off: they either rely on large in-memory footprints across many GPUs (limited scalability) or repeatedly stream data from disk (incurring severe I/O overhead and low GPU utilization).}

In this paper, we propose {\sf Legend}, a \underline{l}ightweight h\underline{e}te-
ro\underline{g}\underline{en}eous system for graph embe\underline{d}ding that systematically redesigns data management across CPU, GPU, and NVMe SSD resources. 
% \sout{{\sf Legend} is built on a foundation of efficient data placement and retrieval strategies tailored to the unique strengths of each hardware. 
% Key innovations include a prefetch-friendly embedding loading strategy, enabling GPUs to directly prefetch data from SSDs with minimal I/O overhead, and a high-throughput GPU-SSD direct access driver optimized for graph embedding tasks.
% Furthermore, we propose a customized parallel execution strategy to maximize GPU utilization, ensuring efficient handling of billion-scale datasets.}
{{\sf Legend} combines three practical ideas: (1) a prefetch-friendly embedding-loading order that lets GPUs efficiently prefetch necessary embeddings directly from NVMe SSD with low I/O amplification; (2) a high-throughput GPU–SSD direct-access driver tuned for the access patterns of embedding training; and (3) a customized parallel execution strategy that maximizes GPU utilization. 
Together, these components let {\sf Legend} store and stream vast embedding data without overprovisioning GPU memory or suffering I/O stalls.}
% tailored explicitly for graph learning workloads and achieving high throughput with minimal resource usage. 
Extensive experiments on billion-scale graphs demonstrate that {\sf Legend} speeds up end-to-end workloads by up to 4.8$\times$ versus state-of-the-art systems, and matches their performance on the largest workloads while using only one quarter of the GPUs.

\keywords{Graph Embedding \and Heterogeneous Hardware Architecture \and Data Partition \and GPU Acceleration}

% \PACS{PACS code1 \and PACS code2 \and more}
% \subclass{MSC code1 \and MSC code2 \and more}
\end{abstract}

\section{Introduction}
\label{sec:introduction}
% \balance
Graphs are a fundamental model for representing relationships across domains~\cite{kim2020densely,li2021analyzing,fang2020effective}. 
Recent advances in graph machine learning have expanded the scope of graph analytics to tasks such as link prediction~\cite{zhu2019aligraph,zou2023embedx,yuan2023comprehensive} and node classification~\cite{song2022towards,zhang2022mul,wen2021meta}. 
Central to these successes are graph embeddings -- continuous vector representations that capture structural properties of nodes and serve as the basis for applications in recommendation systems~\cite{xia2021multi,zhang2021group,miao2022het}, dialogue and conversational agents~\cite{zhou2024atom,edge2024local}, drug discovery~\cite{zhong2023knowledge,zhang2025predicting,he2022webmile}, etc. 
{Despite the success, acquiring high-quality graph embeddings entails prohibitive computational overhead, hindering the scalability of large-scale graphs~\cite{zheng2024ge2}.} 
% \textcolor{red}{[XY: a little bit prolix...can be somehow compressed...if possible, link to recent LLM-based ecosystem]}
% \textcolor{red}{[XY: there shall be one sentence saying that the computational cost of embedding is heavy in common practice, hence can be bridged to the next paragraph.]}

%To meet the computational demands of graph embedding for large-scale graphs, a variety of systems have been developed, such as {\sf DGL-KE} by Amazon~\cite{zheng2020dgl}, {\sf PyTorch Big Graph} ({\sf PBG}) by Meta~\cite{lerer2019pytorch}, {\sf Marius}~\cite{mohoney2021marius}, and {\sf GE$^2$}~\cite{zheng2024ge2}. These systems harness the computational power of GPUs to accelerate training and enable scalable graph processing. 
%However, as real-world graphs often contain millions or more nodes, these systems require tens or thousands of gigabytes of memory to maintain the embedding vectors and optimizer states, making it impractical to keep all data in GPU memory~\cite{mohoney2021marius,zheng2024ge2}. 
%Consequently, two primary solutions have emerged. 
To scale embedding training to large graphs, a range of systems have been proposed, including Amazon’s {\sf DGL-KE}~\cite{zheng2020dgl}, Meta’s {\sf PyTorch Big Graph} ({\sf PBG})~\cite{lerer2019pytorch}, {\sf Marius}~\cite{mohoney2021marius}, and {\sf GE$^2$}~\cite{zheng2024ge2}. 
These systems leverage GPUs to accelerate computation, but real-world graphs often contain millions or even billions of nodes, so storing embeddings and optimizer state entirely in GPU memory\footnote{For instance, the Freebase86M dataset—comprising 86 million nodes—demands 68 GB of memory to store 100-dimensional embeddings and optimizer states.} becomes impractical~\cite{mohoney2021marius,zheng2024ge2}. 
As a result, two dominant engineering strategies have emerged:
\textbf{(i)} {\em RAM-based systems} (e.g., {\sf DGL-KE} and {\sf GE$^2$}) keep embeddings and optimizer state in host RAM and transfer required partitions to GPUs over PCIe on demand.
% Prominent examples include {\sf DGL-KE} and {\sf GE$^2$}. 
% As shown in Table \ref{tab:metrics_comparison}, training a graph embedding model on the dataset \textit{Twitter} with \textsf{LHE} incurs a quarter of the computing cost and reduces storage cost by a factor of 16 compared to {\sf GE$^2$}. 
% For example, training a graph embedding model over Twitter graph requires 35 hours for DGL-KE using a V100 GPU and more than 33GB of memory for storing embeddings and optimizer states. The price of 64G RAM to contain the Twitter dataset is 16 times higher than NVMe SSD. \textcolor{red}{(Use twitter to compare the device price of LHE and GE2)}
% For example, training freebase86m with 86M nodes using AWS P3.16xlarge instance with 128G memory and 8 V100 GPUs requires 277s per epoch, which costs 2.61$\$$. 
\textbf{(ii)} {\em Disk-based systems} (such as {\sf PBG} and {\sf Marius}) persist embeddings on disk and stream partitions into RAM (and then to GPUs) as needed.
Figures~\ref{fig:architrcture}(a) and (b) illustrate the architectures of {\em RAM-based systems} and {\em Disk-based systems}, respectively.
Although both approaches enable large-scale training, each has important limitations. 
% \sout{We compare the device cost and embedding efficiency of notable systems on a billion-scale dataset \textit{Twitter} as shown in Table \ref{tab:metrics_comparison}. 
% The limitations of RAM capacity and its associated costs hinder the scalability of large graphs, increasing the storage expenses for RAM-based systems by 16$\times$.}
% the capacity and cost of RAM impose limitations on the scalability of large graphs and elevate training expenses for RAM-based systems by 16$\times$~\cite{haas2023modern}. 
% \sout{In contrast, disk-based systems such as {\sf Marius} suffer from a long data transfer path of disk-CPU-GPU and suboptimal GPU utilization, averaging only 58\%. }
% Moreover, their computing throughput is restricted by the data transmission between the CPU and GPU. 
{
RAM-based designs are constrained by host-memory capacity and incur high provisioning costs: on a billion-node dataset (\textit{Twitter}), RAM-based deployments inflate device costs substantially (Table~\ref{tab:metrics_comparison}).
Disk-based designs avoid large RAM footprints but introduce long data paths (disk-CPU-GPU) and suffer from suboptimal GPU utilization; in our measurements, disk-based systems exhibit average GPU utilization of only about 58\%, resulting in reduced end-to-end efficiency. 
These trade-offs motivate a rethinking of data placement and I/O strategies across CPU, GPU, and NVMe tiers to improve both scalability and utilization for bill-
ion-scale graph embedding workloads.
}

% \begin{figure}
%     \centering
%     \includegraphics[width=0.48\textwidth]{}
%     \vspace{-7mm}
%     \caption{Overhead of learning embeddings on \textit{Twitter}.}
%     \label{fig:compcost}
%     \vspace{-6mm}
% \end{figure}
\begin{figure}
    \centering
    \includegraphics[width=0.48\textwidth]{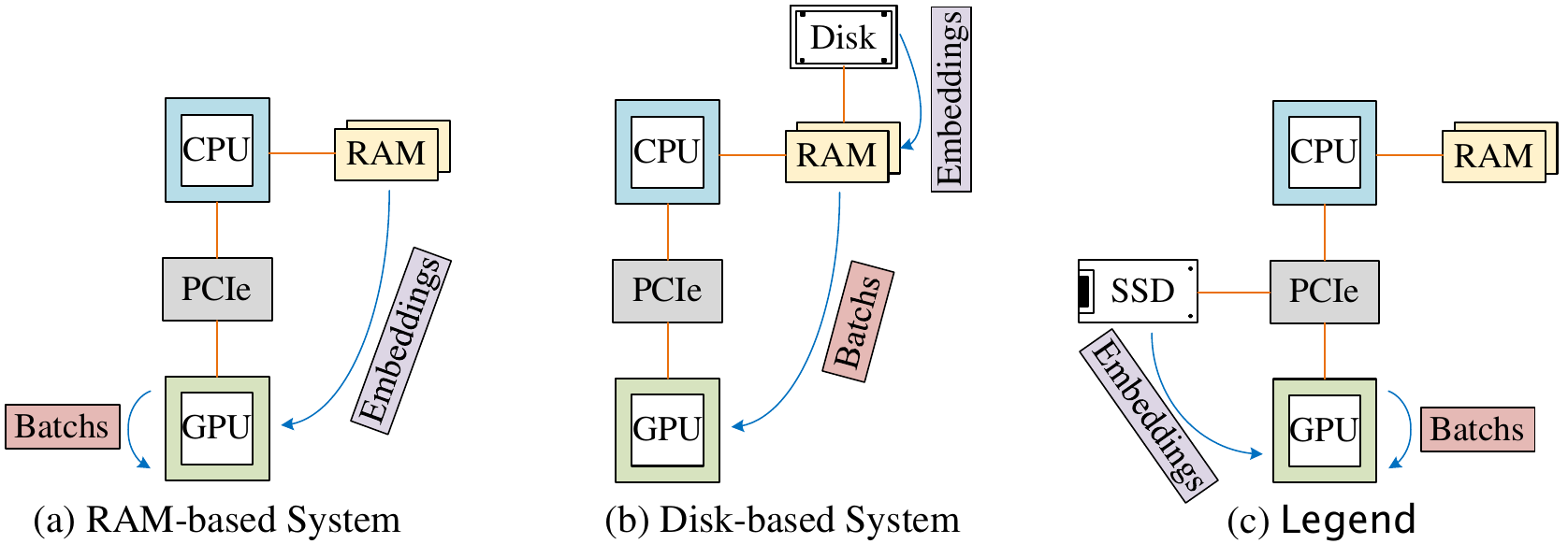}
    % \vspace{-5mm}
    % \vspace{-6.5mm}
    \caption{Comparison of system architectures.}
    \label{fig:architrcture}
    % \vspace{-5mm}
\end{figure}

% \begin{table}
%   \caption{Simple Table}
%   \label{tab:simple}
%   \centering
%   \begin{tabular}{ll}
%     \toprule
%     Heading1 & Heading2 \\
%     \midrule
%     One      & Two      \\
%     Thee     & Four     \\
%     \bottomrule
%   \end{tabular}
% \end{table}

\begin{table*}[tbp]
    \centering
    \renewcommand{\arraystretch}{1.25}
    \small
    \caption{{Statistics of cost and efficiency on dataset \textit{Twitter} (Edges: 1.3B, Nodes: 41.6M). Prices of devices are taken from Amazon when writing the paper. Our experiments evaluate the bandwidths, throughputs, and execution overheads.} }
    \vspace{1mm}
    \begin{tabular}{p{1.4cm}<{\centering}p{1.7cm}<{\centering}p{2.5cm}<{\centering}p{3.5cm}<{\centering}p{2.6cm}<{\centering}p{1.55cm}<{\centering}p{1.45cm}<{\centering}}
    % \begin{tabular}{ccccccc}
    \toprule
       {Systems} & {Storage cost} & {Computing cost} & {Comm. bandwidth} & {Comp. throughput} & {Batch time} & {Total time} \\
    \midrule
       {\sf GE$^2$~\cite{zheng2024ge2}} & 2.02 \$/GB & 33.2k \$ (A100$\times$4) & 10.05 GB/s (CPU-GPU) & 6.75 $\times 10^{6}$ edges/s & 18.5 ms & 32 min \\
    % \midrule
       {\sf Marius~\cite{mohoney2021marius}} & 0.13 \$/GB & 8.3k \$ (A100$\times$1) & 3.12 GB/s (SSD-CPU) & 1.49 $\times 10^{6}$ edges/s & 315.6 ms & 146 min \\
    % \midrule
       {\sf {Legend}} & {0.13 \$/GB} & {8.3k \$ (A100$\times$1)} & {3.06 GB/s (SSD-GPU)} & {7.18 ${\times 10^{6}}$ edges/s} & {12.0 ms} & {30 min} \\
    \bottomrule
    \end{tabular}
    \label{tab:metrics_comparison}
    % \vspace{-2mm}
\end{table*}

In addition to the {\it storage-related inefficiencies}, current graph-embedding systems suffer from two critical {\it computational bottlenecks}: 
{\em First}, many systems (e.g., {\sf Marius}) remain CPU-centric for tasks such as batch construction, negative sampling, and embedding updates. This CPU dependence increases CPU–GPU communication and underutilizes accelerator compute, producing substantial per-batch slowdowns (up to 26$\times$ in our measurements; see Table~\ref{tab:metrics_comparison}). 
{\em Second}, even systems that offload batch construction to the GPU (e.g., {\sf GE$^2$}) can still exhibit suboptimal GPU utilization because their designs emphasize embedding-framework features over training-path optimizations. 
In our experiments, the GPU-side batch computation dominates end-to-end batch time -- accounting for more than 80\% of processing -- thereby becoming the principal performance bottleneck. 
Together, these two limitations, {\it excessive CPU dependence} and {\it inefficient GPU use}, significantly reduce throughput and hamper the scalability of billion-scale embedding workloads.

% To tackle these limitations, we propose a graph learning system, {\sf Legend}, which offers efficient and scalable graph learning within a limited budget. {\sf Legend} comprises a CPU, a GPU, and an NVMe SSD
% to alleviate the need for costly and limited RAM. It enhances the efficiency of graph embedding learning by directly accessing the NVMe SSD via GPU. Figure \ref{fig:architrcture}(c) depicts the architecture of {\sf Legend}.
Recently, NVMe SSDs have gained traction due to their favorable balance of cost and performance~\cite{haas2023modern,park2022ginex,fang2024enabling}. 
%Building on this, we employ the architecture in Figure~\ref{fig:architrcture}(c) that {\em replaces traditional storage mediums (RAM or disk) with NVMe SSD} and {\em employs GPU-SSD direct access} to achieve cost-effective and efficient storage. 
%Furthermore, we accelerate the training by {\em strategically allocating tasks} and {\em optimizing GPU computations}. 
% Various research has explored moving data directly from SSD to GPU for diverse applications~\cite{huang2024neos, bae2021flashneuron,jeongmin2024accelerating}. However, due to the difference in applications, it's impractical to directly adopt their workflow (\S~\ref{relatedworks}).
% However, employing the architecture in Figure \ref{fig:architrcture}(c) to achieve efficient and scalable graph learning faces three unique challenges:
%While prior research has explored GPU-SSD direct access for various applications~\cite{huang2024neos,bae2021flashneuron,jeongmin2024accelerating}, adapting this architecture for graph embedding systems and optimizing the GPU utilization during embedding training introduces three unique challenges: 
Recent advances in NVMe SSDs offer an attractive cost–performance trade-off for large-scale storage~\cite{haas2023modern,park2022ginex,fang2024enabling}. 
Accordingly, we adopt the architecture in Figure~\ref{fig:architrcture}(c), which replaces conventional RAM- or disk-backed storage with NVMe SSDs and enables GPU–SSD direct access to provide both economical capacity and lower-latency data movement. 
To realize efficient embedding training on this platform, we 
(i) map graph-embedding tasks across CPU, GPU, and SSD, 
and (ii) optimize GPU-side computations. 
Although prior work has investigated GPU–SSD direct access in other domains~\cite{huang2024neos,bae2021flashneuron,jeongmin2024accelerating}, applying this model to graph embedding and maximizing GPU utilization during training raises three key challenges:
\begin{itemize}
\vspace{0.5mm}
    \item \textbf{Task Mapping.} 
    %Existing workflows for graph embedding tasks are not well-suited for the CPU-GPU-SSD architecture, resulting in suboptimal performance. 
    %For instance, systems like {\sf Marius}, which treat the NVMe SSD as traditional disk storage, place excessive computational burdens on the CPU. The CPU must handle data transfers from the NVMe SSD to RAM and subsequently to the GPU, significantly impacting efficiency. 
    %Alternatively, workflows like {\sf GE$^2$}, which treat SSDs as RAM, also degrade performance because the bandwidth between the SSD and GPU is more than three times lower than that between RAM and GPU (Table \ref{tab:metrics_comparison}).
    Existing embedding workflows are not engineered for a CPU–GPU–SSD hardware stack and thus perform poorly when naively ported. 
    For example, systems that treat the NVMe device as a slow disk (e.g., {\sf Marius}) overload the CPU with transfer work, while approaches that treat the SSD like RAM (e.g., {\sf GE$^2$}) suffer because SSD-GPU bandwidth is substantially lower than RAM-GPU bandwidth (three times lower as shown in Table~\ref{tab:metrics_comparison}).
    \item \textbf{I/O Bottlenecks.} 
    %Directly loading embeddings and optimizer states from the NVMe SSD to the GPU, as depicted in Figure~\ref{fig:architrcture}(c), eliminates the overhead of CPU-mediated data transfers. 
    %However, our experiments indicate that I/O operations still account for more than 25\% of the whole training pipeline time. 
    %The inherently lower data transfer bandwidth of SSDs compared to RAM remains a significant impediment to high training efficiency.
    Even with GPU–SSD direct access (eliminating some CPU-mediated copies), I/O remains a nontrivial fraction of end-to-end training time -- in our measurements, it accounts for over 25\% of the pipeline. The lower raw transfer rates of SSDs versus RAM, therefore, remain a limiting factor for throughput.
    \item \textbf{Computational Intensity.} 
    %After the partial embeddings and optimizer states to be calculated are loaded into the GPU, tasks such as batch construction and training rely on intensive operations like exponentiation and multiplication. These operations fail to fully exploit GPU resources, making the batch computations a dominant computational bottleneck during graph embedding.
    % Gradient computation, which is neglected by existing graph learning systems, is the bottleneck of training due to the extensive exponentiation and multiplication operations. 
    Once the required embedding shards and optimizer state are resident on the GPU, downstream steps (batch construction, negative sampling, and weight updates) involve computationally intensive element-wise and reduction operations (e.g., exponentiation, matrix multiplication). 
    These operations must be carefully organized to saturate the GPU; otherwise, GPU-side batch computation becomes the dominant bottleneck.
% \vspace{0.5mm}
\end{itemize}
% \textcolor{red}{[XY: carefully check if my understanding is correct.]}

In this paper, we propose {\sf Legend}, a scalable graph-embedding system that tightly integrates CPU, GPU and NVMe SSD resources and applies three complementary optimization families to address the aforementioned challenges of SSD-backed training. 
Below we summarize each component; implementation and formal details are presented in \S\ref{sec:workflow}–\S\ref{subsec:optGPU}.
% {\sf Legend} {\em stores embeddings in NVMe SSD} and {\em employs GPU-NVMe SSD direct access} via PCIe to minimize reliance on expensive and limited RAM.
% Figure \ref{fig:architrcture}(c) illustrates the overall architecture of {\sf Legend}.
% {\sf Legend} employs the architecture in Figure \ref{fig:architrcture}(c) which integrates CPU, GPU and NVMe SSD. 
% We {\em store embeddings in NVMe SSD} and {\em employ GPU-NVMe SSD direct access} to minimize reliance on expensive and limited RAM. Furthermore, we {\em calculate gradients by customized GPU kernels}, significantly enhancing GPU utilization. 

\vspace{0.2cm}
\noindent{\bf Storage Arrangement and Task Allocation. }
% Additionally, existing workflows overburden the CPU with excessive computational tasks, which leads to inefficient computation. 
% To achieve efficient and scalable graph learning, we carefully design the workflow for the CPU-GPU-NVMe SSD heterogeneous system.
To exploit the capacity and bandwidth characteristics of the platform in Figure~\ref{fig:architrcture}(c), {\sf Legend} separates hot and cold state: graph topology and frequently accessed metadata remain in host RAM, while large, infrequently accessed arrays (node embeddings and optimizer state) are resident on NVMe SSD. 
Control and orchestration tasks are mapped to the CPU, whereas heavy linear algebra and per-batch computation execute on the GPU. 
This partitioning reduces host-memory pressure, keeps latency-sensitive graph data immediately available in RAM, and leverages the GPU for compute-intensive kernels, yielding a pragmatic trade-off between capacity, bandwidth, and compute.
%To efficiently map the graph embedding tasks on the architecture depicted in Figure \ref{fig:architrcture}(c), we carefully design the workflow to enable seamless integration of the NVMe SSD (\S~\ref{sec:workflow}).
%The embeddings and optimizer states are stored in the NVMe SSD, while the graph data is kept in RAM.
%This distribution allows for efficient memory utilization by offloading large, less frequently accessed data to the NVMe SSD, reducing the storage burden on host memory. At the same time, storing graph data in RAM ensures faster access to frequently used information, facilitating high-speed computation.
%Additionally, we allocate control tasks to the CPU and offload computational tasks to the GPU, maximizing the unique strengths of each hardware component.

\vspace{0.1cm}
\noindent{\bf I/O Optimizations.} We introduce two techniques tailored for GPU–SSD embedding training: 
% , \textbf{embedding prefetching} and \textbf{customized GPU-SSD direct access mechanism}. 
\textbf{(i)} {\textbf{\em Prefetch-friendly embedding ordering}.} 
% Prefetching is a widely used technique to reduce I/O overhead~\cite{yang2020leaper,zirak2023selep}. 
{For existing graph embedding systems, embeddings are usually partitioned and loaded to the GPU in an I/O-optimized order~\cite{lerer2019pytorch,zheng2020dgl}. 
Although prefetching is supported in some existing graph embedding systems~\cite{mohoney2021marius}, the loading orders hinder the effective overlap of the I/O and computation, and the CPU-managed prefetching to RAM fails to tackle the bottleneck of CPU-GPU data transfer, leading to significant data transfer overhead.}
% , missing the opportunity to parallelize data transmission and computation.  
We prove that generating a prefetch-friendly order while minimizing I/O times is an NP-hard problem. 
% Given $n$ embedding partitions and a buffer capacity of $m$ in GPU memory, identifying a loading order that is prefetch-friendly and has minimal I/O times leads to a search space of $O((nm-m^2)^{h})$, where $h\ge \lceil \frac{n(n-1)-m(m-1)}{2(m-1)} \rceil$ represents the expected I/O times. 
% Existing systems focus on minimizing I/O times but struggle to prefetch embeddings, resulting in delays during computation due to I/O bottlenecks. 
{To get practical, we implement an efficient order-generation algorithm based on a column-separation covering strategy that produces a partition-swap order supporting effective I/O–compute overlap; this order achieves I/O latency comparable to state-of-the-art heuristics while enabling GPU-driven prefetches directly from SSD into GPU memory. }
% By prefetching embeddings, the I/O overhead can be hidden within the computing process, significantly enhancing system performance. This solution overcomes the limitations of existing approaches that prioritize I/O reduction but neglect efficient prefetching of embeddings. 
\textbf{(ii)} \textbf{{\em Customized GPU-SSD direct access driver.}} Off-the-shelf GPU–SSD drivers expose substantial overhead from fine-grained locking and doorbell operations when used naively for embedding workloads~\cite{qureshi2023gpu,jeongmin2024accelerating,Markussen2021}. 
{In {\sf Legend} we redesign NVMe queue management for our access patterns: queue positions are precomputed so threads can concurrently enqueue and dequeue without costly locks, and we use a full-coalesced doorbell ringing and batch polling strategy to minimize doorbell overhead and maximize throughput. These changes remove the partial-coalescing inefficiencies of prior approaches and substantially increase GPU–SSD communication bandwidth (\S~\ref{subsec:optNVMe}). }

\vspace{0.1cm}
\noindent{\bf Computation Optimizations.} 
To efficiently perform the calculations during embedding, we design a parallel strategy tailored to graph embedding learning workloads that fully leverages Tensor cores, registers, and shared memory in the GPU. This strategy reduces heavy memory access while ensuring high parallelism. 
Additionally, we identify redundant calculations and reuse intermediate results to further reduce computational costs (\S~\ref{subsec:optGPU}). 
These optimizations address the neglected issue of embedding computation in existing graph embedding systems, fully utilizing GPU resources, and achieving higher GPU utilization. 

% \textbf{For the NVMe SSD}, while previous research has explored 
% the use of GPU parallelism to construct NVMe instructions and optimize the serial doorbell mechanism
% GPU-NVMe SSD direct access mechanism~\cite{qureshi2023gpu,jeongmin2024accelerating}, these methods are too complex to achieve higher throughput in graph learning. Furthermore, they require a large number of GPU threads to achieve high-throughput communication. This becomes problematic during data prefetching, where both the data access kernel and the training kernel run simultaneously on the GPU. Excessive GPU thread allocation to the data access kernel can degrade the performance of the training kernel. 
% Thus, achieving high NVMe SSD throughput with minimal GPU resource usage is another challenge. 
% We address these issues by reconsidering the NVMe queue management mechanism in the context of graph learning. We assign one thread block to manage each NVMe queue pair. Threads in a block can enqueue and dequeue simultaneously without locks or atomic operations, which ensures high parallelism with few threads. We also propose a novel NVMe controller doorbell ringing strategy, where the doorbell only rings after all threads in a block have completed their operations on the queue pair, significantly reducing the overhead of doorbell ringing (\S~\ref{subsec:optNVMe}). 

Comprehensive experiments demonstrate the efficiency and scalability of {\sf Legend}. It achieves a speedup of up to 4.8$\times$ compared to the state-of-the-art graph embedding systems. {\sf Legend} is also lightweight and exhibits comparable performance on a single GPU to the state-of-the-art system using 4 GPUs on the billion-scale \textit{Twitter} dataset (\S~\ref{sec:evaluation}).
% The ablation experiments confirm the effectiveness of the proposed techniques (\S~\ref{sec:evaluation}). 

% In this paper, we propose {\sf LHE}, a lightweight heterogeneous system for massive graph learning. 
In summary, our key contributions are as follows: 
\begin{itemize}
% \vspace{-1mm}
    \item We design a workflow to reasonably allocate tasks for graph embedding in the CPU-GPU-NVMe SSD heterogeneous systems, considering the respective characteristics of each hardware component (\S~\ref{sec:workflow}). 
    % \item We propose a prefetch-friendly iteration order for prefetching data with minimal I/O times, and design an efficient algorithm to generate the order (\S~\ref{sec:order}). 
    % \item We propose a prefetch-friendly iteration order (\S~\ref{sec:order}) and a customized GPU-SSD direct access mechanism (\S~\ref{subsec:optNVMe}) to achieve efficient I/O during embedding training. 
     \item We prove the NP-hardness of identifying a prefetch-friendly iteration order and propose a heuristic algorithm to solve the problem (\S~\ref{sec:order}). And we devise a customized GPU-SSD direct access mechanism (\S~\ref{subsec:optNVMe}) to achieve efficient I/O during embedding training. 
    % \item We optimize the mechanism of GPU-NVMe SSD direct access (\S~\ref{subsec:optNVMe}) and the gradient computation on the GPU (\S~\ref{subsec:optGPU}), fully exploiting the performance of the hardware.
    \item We optimize the batch computation on the GPU by devising a specific parallel strategy and exploiting the computation and storage resources to enhance GPU utilization (\S~\ref{subsec:optGPU}). 
    \item We conduct comprehensive evaluations demonstrating that {\sf Legend} outperforms existing graph embedding systems, achieving up to 4.8$\times$ speedup for large-scale graph embedding (\S~\ref{sec:evaluation}). 
% \vspace{1mm}
\end{itemize}

% \textbf{Roadmap}. 
The rest of this paper is organized as follows.  
Section \ref{sec:preliminaries} briefly introduces the background of graph embedding, GPU architectures, and data access mechanism in NVMe SSD. 
Section \ref{sec:workflow} presents the workflow design in {\sf Legend}. 
Section \ref{sec:order} describes the prefetch-friendly loading order. 
Section \ref{subsec:optNVMe} and Section \ref{subsec:optGPU} illustrate the optimization on SSD and GPU, respectively. 
Section \ref{sec:evaluation} exhibits the experimental results. 
Section \ref{relatedworks} reviews the related studies.
We conclude the paper in Section \ref{sec:conclusion}. 
% \balance
% \vspace{-2mm}
% \vspace{-2mm}
\section{Preliminaries}
\label{sec:preliminaries}
In this section, we first provide an overview of graph embedding learning. Subsequently, we offer a concise description of GPU architecture, following the data access mechanism of NVMe SSD. 

\begin{figure*}
    \centering
    \includegraphics[width=0.95\textwidth]{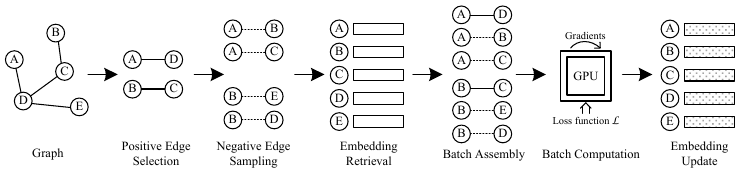}
    \caption{{Example of graph embedding.} }
    \label{fig:comExample}
\end{figure*}

% \vspace{-1.5mm}
\subsection{Graph Embedding Learning}
\label{subsec:graph_embedding_pre}
Following \textsf{PBG}~\cite{lerer2019pytorch}, \textsf{Marius}~\cite{mohoney2021marius} and \textsf{GE$^2$}~\cite{zheng2024ge2}, we focus on the multi-relation graphs denoted by $G=(V,R,E)$, where $V$ represents the set of nodes (entities), $R$ is the set of edge (relation) types and $E$ is the set of edges. Each edge $e\in E$ is a triplet denoted as $e=(s,r,d)$, where $s$ is the source node, $r$ is the relation type, and $d$ is the destination node. The triplet $(s,r,d)$ signifies that entity $s$ has a relationship $r$ with entity $d$, indicating the presence of an edge between $s$ and $d$. 
% Typically, the number of relation types is much smaller than that of nodes. 
Although {\sf Legend} primarily targets multi-relation graphs, it's also capable of handling graphs without relation types. 

An embedding is a vector $\theta$ of fixed dimension. During graph embedding learning, the elements in the embedding vectors of each node and relation type are iteratively updated based on their previous values. Specifically, graph embedding learning uses a score function $f(\theta_s,\theta_r,\theta_d)$, where $\theta_s$, $\theta_r$ and $\theta_d$ represent the embedding vectors of $s,r,d$ in the triplet $e=(s,r,d)$. 
There are various score functions (i.e., embedding models) proposed for multi-relation graph embedding, such as ComplEx~\cite{trouillon2016complex} and DistMult~\cite{yang2015embedding}, which share a similar computing procedure. Our work aims to accelerate the computing process of graph embedding, which is orthogonal to specific score functions.  
The goal of graph embedding learning is to maximize $f(\theta_s,\theta_r,\theta_d)$ for $(s,r,d)\in E$ and minimize $f(\theta_{s^{\prime}},\theta_{r^{\prime}},\theta_{d^{\prime}})$ for $(s^{\prime},r^{\prime},d^{\prime})\\ \notin E$, where $e=(s,r,d)$ is referred to as a positive edge and $e^{\prime}=(s^{\prime},r^{\prime},d^{\prime})$ is known as a negative edge, respectively. 
This objective is achieved using the contrastive loss function, as shown in Equation \ref{equa1}. 

% \vspace{-3mm}
\begin{equation}
\label{equa1}
    \mathcal{L} = -\sum_{(s,r,d)\in E}(f(\theta_s,\theta_r,\theta_d)
    -\log(\sum_{(s^{\prime},r^{\prime},d^{\prime})\notin E}e^{f(\theta_{s^{\prime}},\theta_{r^{\prime}},\theta_{d^{\prime}})}))
\end{equation}
% \vspace{-3mm}

{The loss function is optimized by stochastic gradient descent (SGD), which further promotes the updates of embeddings. The embedding models employ Adagrad as the optimizer for embedding updates.} To fully utilize the parallelism of GPUs, SGD is performed on batches. Specifically, a batch is composed of embeddings corresponding to both positive edges and negative edges. For each positive edge (edges from $E$), several negative edges are sampled using negative sampling algorithms.  
% During each epoch, all edges in $E$ are calculated once as positive edges. 
All edges in $E$ are iterated by being split into batches. 
{As shown in Figure \ref{fig:comExample}, the batch process comprises 6 key stages: \textbf{(1)} \textit{Positive edge selection:} fetch a fixed number of edges from $E$ in order (e.g., $(A,D)$ and $(B,C)$ in Figure \ref{fig:comExample}), serving as positive edges that reflect true graph connectivity. \textbf{(2)} \textit{Negative edge sampling}: generate negative samples by sampling node pairs from $V$ (the sampled node pairs are highly likely to be disconnected due to the sparsity of the graph, e.g., $B$ and $C$ for edge $(A,D)$, $E$ and $D$ for edge $(B,C)$ in Figure \ref{fig:comExample}). \textbf{(3)} \textit{Embedding retrieval}: retrieve current embeddings for nodes and relations associated with both positive and negative edges. These embeddings act as trainable parameters in the downstream loss computation. \textbf{(4)} \textit{Batch assembly}: package the edges with their corresponding embeddings into a batch, enabling parallelized computation of the contrastive loss (Equation \ref{equa1}). \textbf{(5)} \textit{Batch computation}: computes the loss function based on the current embeddings and calculates gradients according to the loss function. \textbf{(6)} \textit{Embedding update}: the original node embeddings are updated using the calculated gradients. }
% For each batch, gradients of each embedding within this batch are calculated using the loss derived from both positive and negative edges. Subsequently, the original embeddings are updated based on the gradients. 
When all edges are traversed once, an epoch is completed. It always requires several epochs to ensure the convergence of the updated embeddings. 
% When all node embeddings and relation embeddings are updated more than once because they are included in at least one edge within $E$. 
It's worth noting that existing graph embedding systems ignore the optimization of the time-consuming batch computation, which significantly restricts the training efficiency and GPU utilization.

As the number of nodes in a graph can easily reach hundreds of millions, the limited memory cannot accommodate such large-scale embedding data. To enable scalable graph embedding training, 
% existing systems employ various techniques. In {\sf Legend}, 
we adopt a partition-based scheme similar to {\sf PBG}. As illustrated in Figure \ref{fig:pbg}, \textsf{PBG} divides the node embeddings into several equal-sized partitions ($\{P_0,P_1,P_2,P_3\}$) based on the node IDs, and stores them on the disk. In practical implementations, optimizer states are stored alongside the node embeddings, although they are omitted in Figure \ref{fig:pbg} for simplicity. Correspondingly, the edges are grouped into several buckets, where the bucket ID $(i,j)$ indicates that the source nodes of these edges are located in node partition $P_i$, and the destination nodes are located in node partition $P_j$. 

During training, the edge buckets are processed in a specific order, such as the order denoted by ``$[k]$" inside each edge bucket in Figure \ref{fig:pbg}. To retrieve the embeddings related to the edges within each edge bucket, the corresponding node partitions are required to be loaded into the memory buffer from the disk. For example, the buffer in Figure \ref{fig:pbg} contains $P_0$ and $P_1$, supporting the training of edge buckets $\{(0,0),(0,1),(1,0),(1,1)\}$, as the source nodes and destination nodes of edges within these edge buckets are all from node partition $P_0$ and $P_1$. As the edge buckets are processed in order, the node partitions in the memory buffer are continuously updated. The node partitions in the memory buffer at any given time are referred to as the {\em buffer state}. For instance, the current buffer state in Figure \ref{fig:pbg} is $\{P_0,P_1\}$.    

It is important to note that the order in which node partitions are loaded and edge buckets are processed significantly affects the I/O times between the disk and the memory buffer. For example, if the edge buckets are iterated in the order of $\{(0,0),(1,3),(1,0)\}$, the I/O time is 4, as $P_0$ is loaded twice and other partitions are loaded once. In contrast, iterating in the order of $\{(0,0),(1,0),(1,3)\}$ reduces the I/O time to 3, since $P_0$ is loaded only once. 

\begin{figure}
    \centering
    \includegraphics[width=0.36\textwidth]{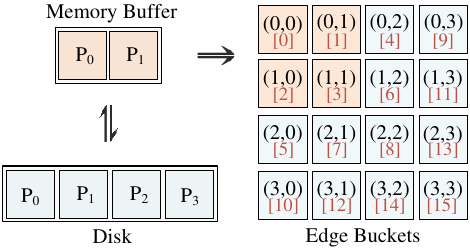}
    % \vspace{-2mm}
    \caption{Partition-based training scheme. $P_i$ denotes the node partition and $[j]$ denotes the calculating order. }
    \label{fig:pbg}
    % \vspace{-4mm}
\end{figure}

% \vspace{-1.5mm}
\subsection{GPU Architecture}
Modern graphics processing units (GPUs) are composed of numerous streaming multiprocessors (SMs), where each SM operates as an independent processing unit containing dozens or hundreds of computational cores. 
Within the CUDA programming framework, threads are grouped into 32-member execution units called {\em warps} that follow a SIMT (Single Instruction Multiple Threads) paradigm, requiring all threads in a warp to execute identical instructions synchronously.
The hierarchical execution model in CUDA further aggregates warps into {\em thread blocks}, which are scheduled on individual SMs for computation.

Modern GPUs feature two types of computing cores: CUDA cores and Tensor cores. CUDA cores serve as the primary computational units for general-purpose tasks, while Tensor cores are specialized for efficient matrix multiplication, enabling the multiplication of fixed-size matrices within a single clock cycle~\cite{zhu2019sparse}. 
The GPU typically has multiple levels of memory hierarchy, consisting of global memory, shared memory, and registers. 
Global memory, shared among all threads on the GPU, provides the largest capacity but has relatively low I/O bandwidth. When threads in a warp collectively read or write contiguous addresses of global memory, it can be performed by a single I/O transaction, which is called \textit{coalesced memory access}. 
Shared memory, accessible by all threads within each thread block, offers higher bandwidth but with a limited capacity of only several tens of KBs. 
Registers provide the fastest data access among these memory structures, which are private to individual threads once declared~\cite{cao2023gpu,owens2008gpu}. 
% \yifan{Add some citaions}

% \vspace{-1.5mm}
\subsection{Data Access Mechanism of NVMe SSD}
The NVMe SSD facilitates data transmission by leveraging queue pairs, which consist of submission queues (SQs) and completion queues (CQs). Multiple queue pairs in an NVMe SSD enable parallel responses to requests, ensuring high throughput~\cite{qureshi2023gpu,Markussen2021}. When a CPU or GPU requests data, it first constructs NVMe commands following the NVMe protocol. These commands specify the request type (read/write), request size (typically 512B or 4KB), request address, data placement address, and other parameters. Subsequently, it places the command at the end of an SQ. {Afterwards, it signals to the NVMe controller by writing the updated tail pointer into the doorbell register of the NVMe SSD via PCIe, indicating that new commands have been added to the SQ. This operation is known as doorbell ringing, which comes at a high cost. Therefore, some efforts are dedicated to reducing the doorbell ringing times.} The NVMe controller processes the commands, transfers the requested data to the host memory, and places completion entries into the CQ. Finally, the CPU or GPU retrieves the completion entries from the CQ and informs the NVMe controller by writing the new head pointer to the doorbell register, signifying that the new entries in the CQ have been processed~\cite{qureshi2023gpu}. 

{Although data access to the NVMe SSD follows the above protocol, the access driver is highly customizable for different workloads in distinct applications. For example, some customized NVMe drivers implement the protocol in the userspace of the operating system to reduce the overhead of the CPU software stack~\cite{yang2017spdk}, while others implement the protocol on the GPU to achieve higher throughput by employing substantial threads~\cite{qureshi2023gpu}. The queue management and doorbell ringing strategies are also redesigned to reduce redundant overhead. However, existing NVMe drivers are not specifically optimized for graph embedding tasks. Therefore, we propose a new driver customized for the graph embedding workload to achieve significant throughput, which will be detailed in Section \ref{subsec:optNVMe}.}
% \yifan{Add some citaions.}
% \vspace{-1.5mm}
\section{Workflow in {\sf Legend}}
\label{sec:workflow}
In this section, we introduce the workflow of {\sf Legend}, focusing on storage arrangement, task assignment across different hardware components, and the overall data flow among each hardware component.

% Figure \ref{fig:workflow} illustrates the workflow of {\sf Legend}. 
Following the partition schema used in {\sf PBG}~\cite{lerer2019pytorch}, {\sf Marius}~\cite{mohoney2021marius} and {\sf GE$^2$}~\cite{zheng2024ge2} (see Section \ref{subsec:graph_embedding_pre} and Figure \ref{fig:pbg}), {\sf Legend} divides the graph's nodes into $n$ equal-sized partitions based on their IDs ($n=4$ in Figure \ref{fig:workflow}). As a result, the node embeddings are split into $n$ corresponding partitions, and the edges are distributed into buckets. For example, the edge bucket $(1,2)$ in Figure \ref{fig:workflow} indicates that the source nodes of the edges in this edge bucket belong to node partition 1, while the destination nodes are from node partition 2.
{We carefully design the data placement and task mapping strategies during graph embedding to make full use of the unique characteristics of the CPU, GPU, and NVMe SSD. 
Next, we will introduce how {\sf Legend} maps storage and tasks to the architecture of CPU-GPU-SSD. }

\begin{figure}
    \centering
    \includegraphics[width=0.46\textwidth]{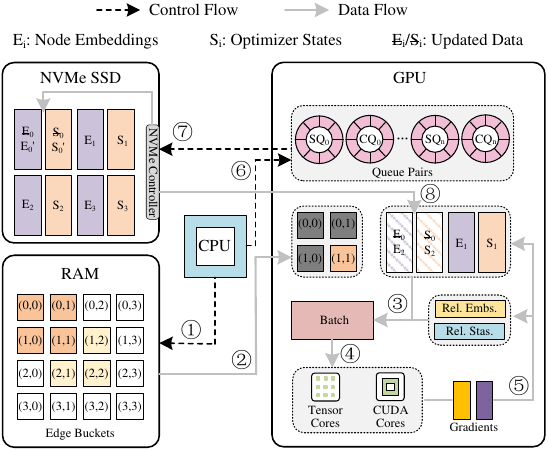}
    % \vspace{-2mm}
    \caption{Workflow of {\sf Legend}. }
    \label{fig:workflow}
    % \vspace{-4mm}
\end{figure}

% \vspace{0.2cm}
\noindent\textbf{Storage Arrangement.} {{\sf Legend} adopts a three-tiered storage architecture to separately store node embeddings, edges, and relation embeddings. 
\textbf{(1)} Node embeddings and optimizer states are stored in the NVMe SSD, which occupies the majority of memory space during the graph embedding learning process. To maximize the bandwidth and make full use of the high parallelism of NVMe SSD, the embeddings and optimizer states of each partition are stored in consecutive memory addresses. This allows embedding and optimizer states of a partition to be loaded simultaneously with a single kernel on the GPU. As shown in Figure \ref{fig:workflow}, $E_0$ and $S_0$ are embeddings and optimizer states of node partition 0, which are stored consecutively. Otherwise, if the complete node embeddings and optimizer states are stored in the NVMe SSD instead of being stored in partitions, accessing one partition is required to perform with two requests, leading to additional data transfer overhead and failing to fully utilize the bandwidth between the SSD and GPU.}
Considering the significant overhead during embedding transmission, we design an I/O-efficient partition loading order and a customized high-throughput GPU-SSD direct access driver to reduce the I/O overhead between the GPU and the NVMe SSD, which will be introduced in Section \ref{sec:order} and \ref{subsec:optNVMe}.
{\textbf{(2)} The edges, which require significantly less space compared to embeddings and optimizer states, are stored in RAM and partitioned into edge buckets according to their source and destination nodes.} Storing the edges in RAM rather than NVMe SSD offers two key advantages. First, since the CPU controls the graph embedding learning process, it can effectively track the GPU's progress—specifically, which edge bucket is currently being processed. As a result, the CPU can transfer new edge buckets to the GPU on time and instruct the GPU to fetch the required embedding data from the NVMe SSD. {Second, although the theoretical bandwidth of RAM and SSD to GPU is the same, the actual bandwidth between RAM and GPU is more than 3 times higher than that between the NVMe SSD and GPU in our experiments (Table \ref{tab:metrics_comparison}) due to the hardware restriction.} Thus, storing edge buckets in RAM allows for efficient and synchronous transfers from the CPU to the GPU, reducing the GPU's idle time. 
\textbf{(3)} For multi-relation graphs, the number of relation types is typically small, necessitating frequent synchronous updates~\cite{mohoney2021marius}. Consequently, we store the relation embeddings (denoted as Rel. Embs.) and optimizer states (denoted as Rel. Stas.) in the global memory of the GPU, following the design of existing graph embedding systems~\cite{zheng2020dgl,mohoney2021marius,zheng2024ge2}. 
{Besides, there is a buffer in the GPU to temporarily hold the embeddings and optimizer states of partial node partitions. Although {\sf Legend} differs from prior systems by placing the buffer on the GPU and transferring data directly from SSDs to the buffer, its buffer management strategy is conceptually similar: embeddings of node partitions are evicted and loaded in a specific order. }

% \vspace{0.2cm}
\noindent\textbf{Tasks Mapping.} Considering the powerful ability of the CPU to handle complex logic and control tasks, the CPU is responsible for moving data and sending commands to the GPU and NVMe SSD in {\sf Legend}, coordinating and controlling the processes of tasks on various hardware components. Meanwhile, considering the powerful parallel computing capability of the GPU, it takes on all computing tasks to achieve more efficient graph embedding computation. 
Considering the underutilization of the GPU during computation, we design an optimized parallel strategy and reorganize the computing procedures to fully utilize the resources on the GPU, which will be illustrated in Section \ref{subsec:optGPU}.
Based on this strategy, the CPU first transfers edge buckets from RAM to the GPU, and the GPU subsequently constructs batches as well as computes gradients. Once the CPU detects that the edges on the GPU are going to be used up, it instructs the GPU to fetch the next embedding partition from the SSD. Afterwards, the CPU transfers new edge buckets to the GPU, and a new round of processing begins in the same way. 

Specifically, assume that the nodes are divided into four partitions and that the buffer in the GPU global memory can accommodate two partitions at a time, i.e., the buffer capacity is 2. Initially, the embeddings and optimizer states of partition 0 and partition 1 reside in the GPU global memory and are randomly initialized, as shown in Figure \ref{fig:workflow}. With partition 0 and partition 1, the GPU conducts the computation of 4 edge buckets, namely $\{(0,0),(0,1),(1,0),(1,1)\}$, as the source and destination nodes of the edges within these edge buckets are located in these two node partitions. 
% The core idea of our prefetching strategy is to complete the computation of all edge buckets related to one partition, evict that partition, and load the next partition to be used. At the same time, we can compute the remaining edge buckets when evicting and loading the partitions. In the example depicted in Figure \ref{fig:workflow}, {\sf Legend} first computes the edge buckets corresponding to node partition 0 (i.e., $\{(0,0),(0,1),(1,0)\}$) and swaps partition 0 with partition 2 to prepare for the calculation of $\{(1,2),(2,1),(2,2)\}$. During the swapping, we calculate the remaining edge bucket $(1,1)$ simultaneously, as partition 1 is still in the GPU memory. 
To do this, as depicted in Figure \ref{fig:workflow}, the CPU \ding{172} fetches edge buckets $\{(0,0),(0,1),(1,0),(1,1)\}$ from RAM and \ding{173} transfers them to the GPU global memory. The GPU then \ding{174} fetches a fixed number of edges (i.e., positive edges) from the edge buckets, samples negative edges for each positive edge, and retrieves the corresponding embeddings and optimizer states from $\{\{E_0, E_1\}, \{S_0, S_1\}\}$ to construct a batch. Next, the gradients of this batch \ding{175} are calculated using Tensor cores and CUDA cores, which will be detailed in Section \ref{subsec:optGPU}. The embeddings and the optimizer states in the global memory are \ding{176} updated by the GPU with the computed gradients. 
The advantages of sampling negative edges and constructing batches on the GPU are threefold. First, the corresponding embeddings of the trained edge buckets are stored in the GPU rather than in RAM, minimizing the need for frequent communication between the CPU and GPU. Second, the node embeddings are updated synchronously, avoiding the staleness issues encountered in some graph embedding systems, such as {\sf Marius}. Third, both negative edge sampling and embedding retrieval are parallelizable tasks, making them well-suited for execution on the GPU. 
% {\sf GE$^2$} and {\sf PBG} adopt the same strategy that offloads the batch construction and negative sampling to the GPU but {\sf Legend} achieves higher GPU utilization and lower data movement overhead by optimizing each hardware component and employing effective prefetching strategy. 

When {\sf Legend} finishes the calculation of all four edge buckets in Figure \ref{fig:workflow}, it has to exchange an embedding and optimizer state partition in the GPU buffer ($E_0$ and $S_0$ in our example) with another partition in the NVMe SSD ($E_2$ and $S_2$ in our example). Due to the limited bandwidth between the SSD and GPU, the data transfer is not completed immediately. However, the GPU has no computational tasks during data exchange, resulting in low utilization. Consequently, necessary data is prefetched by {\sf Legend} at an appropriate time before being used, which will be introduced in Section \ref{sec:order}. The CPU \ding{177} launches a data access kernel at the appropriate time to have the GPU offload $E_0$ and $S_0$ to the NVMe SSD, and to load $E_2$ and $S_2$ into the GPU's global memory, which is the key operation of prefetching. The GPU data access kernel employs several thread blocks, each with several threads to simultaneously construct NVMe commands and enqueue them into the submission queues. Subsequently, each submission queue \ding{178} has a dedicated thread to ring the doorbells located in the controller of the NVMe SSD, informing the NVMe SSD that there are new data access requests to process. The NVMe controller \ding{179} retrieves the data and transfers the data to the required addresses in the GPU global memory using DMA (details will be introduced in Section \ref{subsec:optNVMe}). Meanwhile, the GPU calculates gradients for the remaining edge buckets during data exchange. 
For instance, in Figure \ref{fig:workflow}, when the GPU has completed the calculation of edge buckets $\{(0,0),(0,1),(1,0)\}$, the CPU launches a data access kernel on the GPU to exchange $E_0$ and $S_0$ with $E_2$ and $S_2$. This is because $E_0$ and $S_0$ will not be used for the computation of next edge buckets (i.e., $\{(1,1),(1,2),(2,1),(2,2)\}$), instead, $E_2,S_2$ will be used. At the same time of this partition exchange between the GPU and NVMe SSD, the GPU continues computing the edge bucket $(1,1)$, which will not be affected by this exchange. 
{\sf Legend} implements the kernel parallelism of data access and edge bucket computation through CUDA streams. By parallelizing the data access kernel and the computing kernel, the data transfer overhead can be hidden in the computation of the remaining edge buckets, achieving overall performance improvement. Furthermore, GPU waiting time is eliminated by prefetching the data that needs to be calculated, thereby increasing GPU utilization. 
%  {\sf Legend} reduces data access overhead and GPU waiting time, thereby increasing GPU utilization and enhancing the efficiency of graph learning. 
% \vspace{-1mm}
\section{Edge Bucket Ordering}
\label{sec:order}
In this section, we illustrate our proposed partition loading order and the corresponding edge bucket iteration order to enhance the overlap of I/O and computation. 

\begin{figure}
    \centering
    \includegraphics[width=0.48\textwidth]{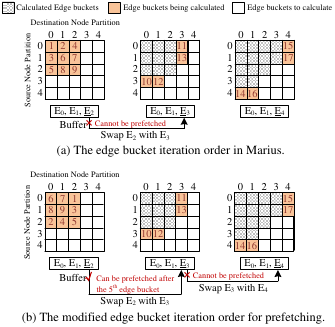}
    % \vspace{-7mm}
    \caption{Partition loading order in {\sf Marius}. The numbers inside the edge buckets denote their calculated order. }
    \label{fig:beta}
    % \vspace{-3mm}
\end{figure}
% \vspace{-5mm}
As discussed in Section \ref{sec:workflow}, when the GPU completes the training of all edge buckets associated with the current partitions in its global memory, it has to wait for the transfer of the next partitions, leading to reduced GPU utilization. If embeddings and optimizer states of the next partition are prefetched into the GPU's global memory before being used, the GPU can perform calculations for the subsequent batch without incurring waiting time. {Although prefetching is supported in existing graph embedding systems~\cite{mohoney2021marius}, their partition loading orders focus on reducing the node partition exchange counts, missing many opportunities to overlap the I/O and computation during prefetching. }
% Although a prefetching technique is implemented in {\sf Marius}~\cite{mohoney2021marius}, the loading order of partitions is not conducive to effective prefetching. 

We illustrate the issue with {\sf Marius}'s loading order through an example. As depicted in Figure \ref{fig:beta}(a), the memory buffer is initialized with partitions $\{E_0, E_1, E_2\}$. {\sf Marius} computes the edge buckets in the order of $\{(0,0),\\(0,1),(1,0),(0,2),(2,0),(1,1),(1,2),(2,1),(2,2)\}$. Next, partition $E_2$ is evicted, and $E_3$ will be swapped in. However, $E_3$ cannot be prefetched at this time because there is no remaining edge bucket in the GPU memory to calculate while fetching $E_3$. 
{The order in which node partitions are loaded or evicted is noted as the {\em partition loading order} throughout the rest of this paper}. 
% We have to ensure that all edge buckets related to $E_2$ (i.e., $\{(0,2),(2,0),(1,2),(2,1),(2,2)\}$) have been calculated prior to prefetching $E_3$. However, following the iteration order depicted in Figure \ref{fig:beta}, the edge buckets related to $E_2$ are processed at the end, hindering the prefetching of $E_3$. 

To achieve prefetching, the edge bucket iteration order can be adjusted to $\{\{(0,2),(2,0),(1,2),(2,1),(2,2)\},\\ \{(0,0),(0,1),(1,0),(1,1)\}\}$ as shown in Figure \ref{fig:beta}(b). Th-
is exchange prioritizes the computation of edge buckets related to the partition $E_2$ before swapping it out for $E_3$. Thus, we can exchange $E_2$ with $E_3$ during calculating the remaining edge buckets (i.e., $\{(0,0),(0,1),(1,0),\\(1,1)\}$). However, this adjustment is only applicable in the initial buffer state. In the subsequent buffer state, as depicted in the second subfigure in Figure \ref{fig:beta}(b), the edge buckets are all related to $E_3$, hindering the eviction of $E_3$ and the prefetching of the next partition. The underlying issue is that the partition swapped in during the previous buffer state is immediately evicted in the next buffer state, leaving insufficient time for prefetching. Although {\sf GE$^2$}~\cite{zheng2024ge2} employs another strategy to reduce the I/O times (i.e., node partition exchange counts), it also fails to support prefetching due to a similar problem. 

% From the preceding discussion, we observe the likelihood of successful prefetching will significantly increase if the node partition that is recently swapped in is not immediately swapped out of the GPU global memory buffer. 
From the preceding discussion, we observe that partition prefetching can be achieved if the recently loaded partition is not immediately swapped out of the GPU memory buffer. During the partition exchange, we calculate the edge buckets unrelated to the exchanged partitions, ensuring the overlap of computation and data transfer. {In our problem setting, we assume the number of partitions is $n (n>3)$ and the buffer capacity (the number of partitions that can be loaded) in the GPU's global memory is \textbf{fixed at 3} for problem simplification. This configuration enables training on datasets of any size, as the value of $n$ is arbitrary.} To identify a loading order that supports prefetching, we first define the concept of a \textit{Prefetching Supported Order}. 

\begin{definition}[\textbf{Prefetching Supported Order}]
    \label{def:prefetch}
    A \textit{Prefetching Supported Order} is a node partition loading order such that there is at least one edge bucket not related to the node partition scheduled for eviction in each buffer state. 
\end{definition}

According to Definition \ref{def:prefetch}, in each buffer state, we can first compute the edge buckets related to the partition that will be evicted and simultaneously load a new partition during the computation of the edge buckets unrelated to this partition. The loading orders in {\sf Marius} and {\sf GE$^2$} do not qualify as a {\em Prefetching Supported Order}, as they have no edge bucket unrelated to the partition that will be evicted in most buffer states. To effectively identify the {\em Prefetching Supported Order}, we explore its properties in Theorem \ref{theo:1}. 

\begin{theorem}
    \label{theo:1}
    For the buffer capacity of 3, a partition loading order is classified as a Prefetching Supported Order if it satisfies two properties: (1) The partition that has just been swapped in each buffer state will not be immediately evicted in the subsequent buffer state. (2) Any two partitions may appear concurrently in multiple buffer states, but only in consecutive buffer states. 
\end{theorem}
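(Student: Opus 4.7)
The plan is to fix an arbitrary buffer state $S_i = \{p, q, e_i\}$, where $e_i$ denotes the partition scheduled for eviction upon transitioning to $S_{i+1}$, and to exhibit an edge bucket whose source and destination partitions both lie in $S_i \setminus \{e_i\}$ and that has not yet been processed in any state $S_j$ with $j < i$. Such a bucket is precisely an unprocessed bucket ``not related to'' $e_i$, so its existence in every state is what Definition \ref{def:prefetch} requires.

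First I would use property (1) to identify the two partitions in $S_i$ that remain after eviction. Let $p$ be the partition that was just swapped in on the transition $S_{i-1} \to S_i$, so $p \in S_i$ but $p \notin S_{i-1}$. Property (1) guarantees $p \neq e_i$, hence $p$ survives into $S_{i+1}$. Let $q$ denote the third partition, so $S_i \setminus \{e_i\} = \{p, q\}$. The candidate unrelated buckets are $(p,q)$ and $(q,p)$: both have endpoints confined to $\{p, q\}$ and therefore avoid $e_i$.

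The crux is to show that neither of these buckets has been processed in any earlier state. For either to have been processed at some step $j < i$, both $p$ and $q$ must have resided in $S_j$ simultaneously. Suppose for contradiction such a $j$ exists. By property (2), the indices at which $\{p, q\}$ co-occurs in the buffer form a single contiguous block, and that block must also contain $i$. Consequently every index between $j$ and $i$ — in particular $i-1$ — would have $p \in S_{i-1}$, contradicting the fact that $p$ was just swapped in at step $i$. Hence no earlier state contains both $p$ and $q$, so the two buckets between them are still unprocessed in $S_i$, and one of them can be computed while $e_i$ is swapped out and a new partition is prefetched.

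The initial state requires a brief separate handling, since there is no preceding $S_{-1}$ against which to invoke property (1); at that point no bucket has been processed at all, so whichever partition is chosen as $e_0$ leaves four unrelated unprocessed buckets among the remaining two partitions. I expect the main obstacle to be the contrapositive step in the previous paragraph: one must argue carefully that property (2)'s phrase ``only in consecutive buffer states'' really forces a unique maximal consecutive co-occurrence interval for each unordered pair, so that $i$ and the hypothetical $j$ cannot sit in two disjoint consecutive runs. Once this uniqueness is nailed down, the consecutivity immediately drags $i-1$ into the interval and the contradiction with $p \notin S_{i-1}$ closes the argument.
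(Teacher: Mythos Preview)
Your proposal is correct and follows essentially the same approach as the paper: both arguments identify, in a given buffer state, the newly swapped-in partition $p$ and the surviving partition $q$, observe that the buckets $(p,q)$ and $(q,p)$ are unrelated to the evicted partition, and then use property (2) to show these buckets cannot have been processed earlier because any prior co-occurrence of $p$ and $q$ would force $p \in S_{i-1}$, contradicting that $p$ was just loaded. Your version is somewhat more carefully structured (explicitly handling the initial state and flagging the interpretation of ``consecutive''), but the underlying argument is the same.
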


\begin{proof}
    Without loss of generality, we consider the consecutive buffer states $\{E_i,\textcolor[rgb]{0.29,0.64,0.78}{E_j},\underline{E_k}\},\{E_i,\underline{E_j},\textcolor[rgb]{0.29,0.64,0.78}{E_l}\}, \{E_i,\textcolor[rgb]{0.29,0.64,0.78}{E_m}\\,E_l\}$, where the \textcolor[rgb]{0.29,0.64,0.78}{blue} color and the \underline{underline} denote the loaded node partition in the current buffer state and the partition to be evicted, respectively. $E_i$ and $E_j$ appear concurrently in the first two consecutive buffer states. For the buffer state $\{E_i,\textcolor[rgb]{0.29,0.64,0.78}{E_j},\underline{E_k}\}$, the edge buckets related to $E_i$ and $E_k$ have been calculated during loading $E_j$. $E_k$ is the node partition in the current buffer state that is going to be evicted. The edge buckets $\{(i,j),(j,i)\}$ that are not related to $E_k$ must not have been calculated due to the property (2). Otherwise, if the property (2) is not satisfied, $E_i$ and $E_j$ can appear concurrently in a previous buffer state. Their corresponding edge buckets $(i,j)$ and $(j,i)$ have already been calculated at that buffer state. Therefore we cannot exchange $E_k$ with $E_l$ while calculating other edge buckets at buffer state $\{E_i,\textcolor[rgb]{0.29,0.64,0.78}{E_j},\underline{E_k}\}$, leading to the failure of prefetching. Similarly, for next buffer state $\{E_i,\underline{E_j},\textcolor[rgb]{0.29,0.64,0.78}{E_l}\}$, $E_j$ is going to be evicted. Edge buckets $\{(i,l),(l,i)\}$ are not related to $E_j$ and have not been calculated. We can evict $E_j$ and prefetch the next node partition while calculating $\{(i,l),(l,i)\}$. 
\end{proof} 

It is important to note that the impact of the property (2) in Theorem \ref{theo:1} is minimal in practical applications. Without considering property (2), only 4 out of 36 I/O times fail to support prefetching for 12 partitions, as demonstrated in experiments. Therefore, we exclude property (2) from the algorithm design. Our objective is to design an efficient algorithm to find an order that satisfies property (1) while minimizing the I/O times. We adopt the same swapping strategy as {\sf Marius}~\cite{mohoney2021marius}, which allows a single partition to be swapped in each buffer state. Generating an order that satisfies property (1) is straightforward. However, identifying an order that meets property (1) while minimizing I/O times is an NP-hard problem, as proved in Theorem \ref{theo:2}. 

\begin{theorem}
% \vspace{-1mm}
    \label{theo:2}
    With $n$ partitions and a buffer capacity of 3, the problem of identifying an order that meets property (1) while minimizing I/O times is NP-hard.     
    % \vspace{-1mm}
\end{theorem}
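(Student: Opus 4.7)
The plan is to recast the ordering problem as a routing task on the Johnson graph $J(n,3)$ and then polynomially reduce a classical NP-hard problem to it. First I would formalize the decision version: given $n$ partitions, a collection $\mathcal{B}$ of required edge buckets, and a budget $K$, decide whether there exists a sequence of buffer states $S_1,\ldots,S_T$ with $T\le K$ such that each $S_t$ is a $3$-subset of $[n]$, consecutive states differ in exactly one element, every pair in $\mathcal{B}$ is contained in some $S_t$, and the element introduced at step $t$ is not the element evicted at step $t+1$ (property~(1)). Because the total I/O count is an affine function of $T$, establishing NP-hardness of this decision version immediately yields NP-hardness of the stated minimization problem.

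Next I would construct a polynomial reduction from Hamiltonian Path on $3$-regular graphs, which is known to be NP-complete. Given an instance $H=(V_H,E_H)$ with $|V_H|=m$, I would let the partition set be $V_H$ together with a constant number of auxiliary ``anchor'' partitions $\{\alpha_1,\alpha_2,\alpha_3\}$, and define $\mathcal{B}$ to contain (i) every edge $\{u,v\}\in E_H$, (ii) pairs linking each $v\in V_H$ to designated anchors, and (iii) structural pairs among the anchors themselves. The budget $K$ is tuned so that it is attainable exactly when the traversal threads $V_H$ along a Hamiltonian path: every vertex of $V_H$ must be swapped in at least once, and property~(1) combined with the buffer capacity of $3$ forces each freshly loaded vertex to remain for at least two consecutive states, which eliminates any slack for revisits. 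The forward direction is then constructive: from a Hamiltonian path I would exhibit an explicit walk that sweeps through $V_H$ along the path while two of the anchors stay pinned in the buffer as witnesses for the anchor obligations, hitting the bound $K$ exactly.

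The converse direction is the hard part and, I anticipate, the main obstacle. I would need to rule out ``shortcut'' orders that match the budget without tracing a Hamiltonian path. This demands a careful charging argument establishing that each edge of $H$ consumes at least one dedicated swap event, that property~(1) precludes covering two independent edges of $H$ with a single swap-in, and that interleaving anchor partitions cannot amortize any of these costs away. The analysis will proceed by classifying each transition according to which element leaves and which enters, and then bookkeeping the newly covered edges of $H$ against the residual anchor-related obligations. If the cubic-graph reduction turns out to be too brittle to make tight, a natural fallback is to reduce from a more expressive sequence problem such as Shortest Common Supersequence, whose sequence-ordering flavor aligns even more directly with the walk structure on $J(n,3)$.
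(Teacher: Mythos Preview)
Your formalization quietly changes the problem: you introduce a variable collection $\mathcal{B}$ of required edge buckets as part of the input, and the reduction then encodes the Hamiltonian-path instance entirely inside $\mathcal{B}$ (only the edges of $H$ and a handful of anchor pairs need to be covered). But that is not the problem Theorem~\ref{theo:2} is about. In the paper's setting every edge bucket must be processed in each epoch, so the set of pairs to be covered is always the \emph{complete} set $\binom{[n]}{2}$; the only free parameter is $n$. Once all $\binom{m+3}{2}$ pairs must be covered, whether a particular $\{u,v\}$ happens to lie in $E_H$ is irrelevant to any feasible order, and the tight budget $K$ that was supposed to separate Hamiltonian from non-Hamiltonian $H$ no longer depends on $H$ at all. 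You have therefore sketched hardness of a strict generalization (arbitrary $\mathcal{B}$) rather than of the stated problem, and hardness of a generalization does not transfer downward.

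For contrast, the paper's own argument does not go through Hamiltonian Path or the Johnson graph. It reduces from the $(n,3,2)$ covering-design problem: buffer states are identified with blocks, a minimum covering with $m$ blocks corresponds to an exchange sequence of length $m-1$, and property~(1) is absorbed by inserting at most polynomially many intermediate blocks. That reduction keeps the ``cover all pairs'' requirement intact and varies only $n$, which is precisely the degree of freedom the theorem's problem actually offers. If you want to repair your approach, you would need a source problem whose instance can be encoded using $n$ alone (as covering design does), not one that needs an auxiliary edge set $\mathcal{B}$ to carry the instance.
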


\begin{proof}
    % We reduce the problem of identifying a Prefetching Supported Order to a well-known NP-hard problem, i.e. set cover problem~\cite{caprara2000algorithms}. Given a universe set $U=\{1,2,...,n\}$ which represents $n$ partitions and a collection set $S=\{\{1,2,3\},\{1,2,4\},...,\{n-2, n-1, n\}\}$ which is the set of triplets composed of $1\sim n$ and represents the possible buffer states, the set cover problem is to identify the smallest sub-collection of $S$ whose union equals the universe, which corresponds to the buffer states that don't consider the order. To obtain the {\em Prefetching Supported Order}, we add two constraints to the set cover problem which makes it computationally harder and results in a problem that is also NP-hard~\cite{mann2017top}. Suppose the $k$-th selected subset in $S$ is denoted by $s_k$. The first constraint requires $|s_{k-1}\cap s_{k}|=2$ since we allow only one partition to be swapped in each buffer state. Second, $s_{k}-s_{k-1}\ne s_{k+1}-s_k$, which corresponds to property (1) in Theorem \ref{theo:1}. 
    {We demonstrate that the problem is NP-hard via a reduction from the covering design problem~\cite{gordon1995new}, a well-known NP-hard problem. Specifically, an instance of the covering design problem with parameters $(n,3,2)$, which seeks the minimum number of 3-element subsets (blocks) covering all $C_n^2$ pairs, is mapped to our problem as follows. The buffer is first initialized with a buffer state, corresponding to an initial block in the covering design problem. A node partition in the buffer is subsequently swapped with a partition out of the buffer. Each exchange operation above generates a new block in the covering design problem. The requirement that all pairs of node partitions must coexist in the buffer is equivalent to ensuring all $C_n^2$ pairs are covered by the sequence of blocks. 
    A covering design problem solution with $m$ blocks implies a valid exchange sequence of $m-1$ steps, as each block requires one exchange. Conversely, an exchange sequence of length $k$ produces $k+1$ blocks covering all pairs. 
    To address the constraint that is introduced in the property (1) of Theorem \ref{theo:1}, intermediate blocks may be inserted (e.g., exchanging two elements sequentially), which only polynomially inflates the sequence length and preserves the reduction’s validity. Since the covering design problem is NP-hard, our problem is also NP-hard. }
\end{proof}

\begin{figure*}
    \centering
    \includegraphics[width=0.8\linewidth]{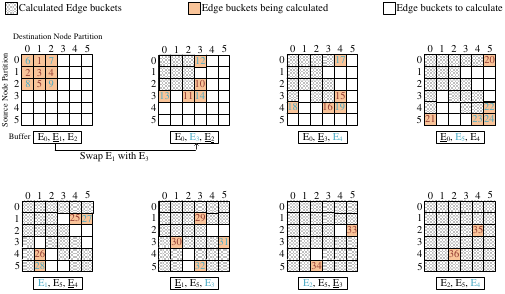}
    % \vspace{-6mm}
    \caption{Order for prefetching in {\sf Legend}. The numbers inside the edge buckets denote their calculated order. The \textcolor[rgb]{0.29,0.64,0.78}{blue} color indicates the edge buckets that can be calculated while prefetching the next partition. \textcolor[rgb]{0.29,0.64,0.78}{$E_i$} is the prefetched partition. $\underline{E_j}$ is the node partition to be evicted in the next buffer state. }
    \label{fig:order}
    % \vspace{-2mm}
\end{figure*}

The NP-hardness of this problem motivates us to devise an efficient heuristic algorithm. 
% the search space expands significantly as the number of node partitions increases, reaching $10^{13}$ with only 8 node partitions. 
% To this end, we propose a heuristic algorithm
To this end, we propose a column separation covering strategy to generate an order supporting prefetching while minimizing I/O times within one second. The key idea of the loading order is to {\em sequentially cover each column of edge buckets, greedily maximizing coverage in each column}. Figure \ref{fig:order} depicts an example of our proposed node partition loading order and edge bucket iteration order with 6 node partitions. 
% The number in the small squares denotes the iteration order of the corresponding edge buckets, while the color \textcolor[rgb]{0.29,0.64,0.78}{blue} indicates the edge buckets that can be calculated while prefetching the next node partition. 

Initially, we cover edge buckets in the first column by swapping in each node partition in order of their ID. For example, we cover edge buckets $\{(0,0),(1,0),(2,0),\\(3,0),(4,0),(5,0)\}$ in column 0 by sequentially swapping in node partitions $\{E_0, E_1, E_2, E_3, E_4, E_5\}$ (the first four buffer states in Figure \ref{fig:order}). For subsequent columns, we swap in node partitions starting from the maximal ID in the current buffer state. If all edge buckets from the maximal ID to $n$ are covered, we then switch to the minimal ID. 
For instance, after transitioning to column 1 with the buffer state $\{E_1,E_5,E_4\}$, we start with the node partition having the minimal ID to swap in, which is $E_3$. Subsequently, the buffer state changes to $\{E_1,E_5,E_3\}$ and edge buckets $\{(3,1),(1,3),(3,5),(5,3)\}$ are covered. 
Since all edge buckets in column 1 are covered, we move to column 2 after the buffer state $\{E_1,E_5,E_3\}$ by loading node partition $E_2$, and edge buckets $\{(5,2),(2,5)\}$ are covered. 
For column 2, we again start with the minimal ID, which is 4, to swap in. Therefore, the buffer state changes to $\{E_2,E_5,E_4\}$, and all edge buckets have been covered by now. 
The procedures are formalized in Algorithm \ref{alg:loading}. 

\setlength{\textfloatsep}{0pt}
\makeatletter
\patchcmd{\@algocf@start}% <cmd>
  {-1.5em}% <search>
  {0pt}% <replace>
  {}{}% <success><failure>
\makeatother
\begin{algorithm}[tp]
    \caption{Node partition loading order}
    \label{alg:loading}
    \LinesNumbered
    \setstretch{0.82}
    \KwIn{Node partitions $n$, buffer capacity $3$}
    \KwOut{Buffer states in order.}
    $EdgeBuckets \leftarrow \{0\}_{n*n}$, $BufStates \leftarrow \{\}$;\\
    $CurCol \leftarrow 0$;\\
    $BufStates.append(\{0,1,2\})$;\\ 
    $Buf \leftarrow BufStates[-1]$;\\ 
    \For{$i$ in range(3,n)}{
        $Buf \leftarrow Buf - \{i-2\} + \{i\}$;\\
        $BufStates.append(Buf)$;\\
    }
    Set the covered edge buckets to 1;\\
    \While{$sum(EdgeBuckets)<n^2$}{
        $ToEvict \leftarrow -1, ToLoad \leftarrow -1$;\\
        \If{$sum(EdgeBuckets[CurCol])=n$}{
            $Buf \leftarrow Buf - \{CurCol\} + \{CurCol+1\}$;\\
            $BufStates.append(Buf)$;\\
            Set the covered edge buckets to 1;\\
            \If{sum(EdgeBuckets[CurCol][max(Buf)+1:n]) $<$ n-max(Buf)}{
                $ToEvict \leftarrow max(Buf)$;\\
            }
            \Else{
                $ToEvict \leftarrow min(Buf)$;\\
            }
            $CurCol \leftarrow CurCol + 1$;\\
        }
        \Else{
            $ToEvict \leftarrow id \in BufStates[-1]\cap BufStates[-2]$ and $id\neq CurCol$;\\
        }
        $BeginID \leftarrow (Buf-\{ToEvict, CurCol\}) + CurCol + 2$;\\
        $ToLoad \leftarrow$ the id that covers the most edge buckets from $BeginID$ to $BeginID-1$;\\ 
        $Buf \leftarrow Buf - \{ToEvict\} + \{ToLoad\}$;\\
        $BufStates.append(Buf)$;\\
        Set the covered edge buckets to 1;\\
    }
    \Return{$BufStates$}
\end{algorithm}
\setlength{\textfloatsep}{12pt plus 2pt minus 2pt}

\vspace{0.7mm}
In Algorithm \ref{alg:loading}, we first generate the buffer states related to node partition 0 (lines 3-6). In the {\em while} loop within lines 8-24, if all edge buckets in the current column $CurCol$ have been covered, we advance to the next column by swapping node partition $E_{CurCol}$ with $E_{CurCol+1}$ and mark the corresponding edge buckets as covered (lines 11-13). Then we select the maximal ID in the current buffer as the node partition to evict, provided that subsequent IDs have corresponding edge buckets that remain uncovered. Otherwise, we opt for the minimal ID (lines 13-17). If the edge buckets in the current column have not been fully accessed, we evict a node partition that was not just swapped in the last buffer state (line 19). Finally, we greedily select a node partition that covers the most edge buckets from the $BeginID$ to swap in (lines 20-24). 

\setlength{\textfloatsep}{0pt}
\begin{algorithm}[tp]
    \caption{Edge buckets iterating order}
    \label{alg:iterating}
    \LinesNumbered
    \setstretch{0.82}
    \KwIn{Buffer states $BufStates$, node partitions $n$}
    \KwOut{Edge buckets iterating order.}
    $EdgeBuckets \leftarrow \{0\}_{n*n}$, $BufStates \leftarrow \{\}$;\\
    $IterOrder \leftarrow \{(0,1),(1,1),(1,0),(1,2),(2,1)\}$;\\
    Set the covered edge buckets to 1;\\
    $LoadedPar \leftarrow 3$;\\
    \For{i in range(len(BufStates)-1)}{
        $ToEvict \leftarrow BufStates[i] - BufStates[i+1]$;\\
        \For{$k \in BufStates[i]-\{LoadedPar\}$}{
            \If{EdgeBuckets[ToEvict][k]=0}{
                $EdgeBuckets[ToEvict][k]=1$;\\
                $IterOrder.append((ToEvict,k))$;\\
            }
            \If{EdgeBuckets[k][ToEvict]=0}{
                $EdgeBuckets[k][ToEvict]=1$;\\
                $IterOrder.append((k,ToEvict))$;\\
            }
        }
        \If{EdgeBuckets[ToEvict][LoadedPar]=0}{
            $EdgeBuckets[ToEvict][LoadedPar]=1$;\\
            $IterOrder.append((ToEvict,LoadedPar))$;\\
        }
        \If{EdgeBuckets[LoadedPar][ToEvict]=0}{
            $EdgeBuckets[LoadedPar][ToEvict]=1$;\\
            $IterOrder.append((LoadedPar,ToEvict))$;\\
        }
        $LoadedPar \leftarrow BufStates[i+1] - BufStates[i]$;\\
    }
    \Return{IterOrder}
\end{algorithm}
\setlength{\textfloatsep}{12pt plus 2pt minus 2pt}

Algorithm \ref{alg:iterating} generates the edge bucket iteration order according to the output of Algorithm \ref{alg:loading}. It first covers the edge buckets related to the node partition scheduled for eviction in the next buffer state (lines 7-13). Subsequently, it calculates the edge buckets related to both the node partition that will be evicted and the one that was just swapped in (lines 14-19). 

\noindent \textbf{Example.} Figure \ref{fig:order} exhibits an example of the node partition loading order output by Algorithm \ref{alg:loading} and the edge buckets iterating order generated by Algorithm \ref{alg:iterating}. For the initial buffer state $\{E_0,E_1,E_2\}$, the edge buckets related to $E_1$ (i.e., $\{(0,1),(1,0),(1,1),(1,2),(2,1)\}$) are first calculated. 
Subsequently, $E_1$ can be swapped with $E_3$ in advance. The remaining edge buckets in the initial buffer state (i.e., $\{(0,0),(0,2),(2,0),(2,2)\}$) are calculated during the data exchange. 
After the calculation of the initial buffer state, the second buffer state $\{E_0,E_3,E_2\}$ first calculates edge buckets related to $E_2$. Similarly, $E_2$ is replaced with $E_4$ in advance right after the calculation of $\{(2,3),(3,2)\}$. And the remaining edge buckets are calculated at the same time. 
The following buffer states adopt a similar process to prefetch the next node partition while calculating the remaining edge buckets, significantly reducing the I/O overhead. 

Although prefetching hides I/O overhead in the computation, it also raises the problem of whether the I/O overhead can be completely covered. To this end, Theorem \ref{theo:3} discusses this problem and proves that it has to do with the dataset characteristics. 

\begin{theorem}
\label{theo:3}
% \vspace{-1mm}
    Using the loading order generated by Algorithm \ref{alg:iterating}, the I/O overhead can be completely covered by the computation when ${\scriptstyle \frac{|E|}{|V|^2}\ge \frac{96d^2}{Mt(w+r)}}$, where $|E|$ and $|V|$ are the number of edges and nodes, $d$ is the embedding dimension, $M$ is the buffer size in the global memory of GPU, $t$ is the average computing time of an edge, $w$ and $r$ are the writing and reading throughput between the GPU and NVMe SSD. 
    % \vspace{-1mm}
    % \vspace{-3mm}
\end{theorem}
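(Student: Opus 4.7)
The plan is to verify, swap by swap in the schedule produced by Algorithm~\ref{alg:iterating}, that the block of computation scheduled to run concurrently with each partition exchange is long enough to hide the I/O cost of that exchange; the displayed inequality is exactly the threshold at which this balance tips.

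First, I will quantify the I/O time of a single swap. The buffer of size $M$ holds three partitions, so each partition occupies $S = M/3$ bytes. Every node contributes a $d$-dimensional embedding and a $d$-dimensional Adagrad state, amounting to $8d$ bytes under float32 storage, so we also have $S = 8d|V|/n$, which gives $n = 24d|V|/M$. With full-duplex throughputs $w$ and $r$, evicting one partition and loading another concurrently costs $2S/(w+r) = 2M/(3(w+r))$.

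Next, I bound the computation available to hide that cost. Consider a transition from buffer state $\{A,B,C\}$ to $\{A,B,D\}$: while $C$ is being written out and $D$ is being read in, only edge buckets whose source and destination partitions both lie in $\{A,B\}$ can be safely processed---exactly the four buckets $(A,A), (A,B), (B,A), (B,B)$. Under the uniform-density assumption, each bucket carries $|E|/n^2$ edges and costs $|E|t/n^2$ to process, so the overlap window lasts at least $4|E|t/n^2$. Requiring this to dominate $2M/(3(w+r))$ and substituting $n^2 = 576 d^2 |V|^2/M^2$ rearranges directly to the stated bound $|E|/|V|^2 \ge 96 d^2/(Mt(w+r))$.

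The main obstacle is justifying the constant $4$ across every buffer state and not only the initial one, because Algorithm~\ref{alg:loading} may already have scheduled some of $(A,A),(A,B),(B,A),(B,B)$ in an earlier state, shrinking the fresh-work window. I would handle this by combining Property~1 of Theorem~\ref{theo:1}, which guarantees that the freshly loaded partition persists into the next state, with the column-by-column cover structure of Algorithm~\ref{alg:loading}: each advance to a new column brings in a fresh partition pair and thereby refreshes the four-bucket overlap. Any residual shortfall inside a column can be absorbed by an amortized argument equating the total computation $|E|t$ to the total swap cost over the roughly $n^2/6$ exchanges in the schedule, which yields the same inequality up to the leading constant. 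With that caveat handled, the rest of the proof reduces to the routine algebra sketched above.
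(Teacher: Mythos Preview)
Your overall skeleton---bound the swap time, bound the overlap computation, substitute $n$ in terms of $M,d,|V|$---is exactly what the paper does, and your derivation of $n = 24d|V|/M$ matches. The trouble is that you mis-set \emph{both} constants in the core inequality, and they happen to cancel, leaving you with the right bound for the wrong reasons.

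On the I/O side: the paper takes the cost of one exchange to be $2P/(w+r)$, where $2P$ is the size of a single partition (embeddings plus optimizer state). In your notation $S = 2P = M/3$, so the paper's exchange cost is $S/(w+r) = M/(3(w+r))$, not $2S/(w+r)$ as you wrote. Your extra factor of~$2$ is not the paper's model.

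On the compute side: you claim four overlap buckets $(A,A),(A,B),(B,A),(B,B)$ are always available, and you correctly flag this as the ``main obstacle.'' It is a genuine obstacle, because after the first state the diagonal buckets $(A,A)$ and often $(B,B)$ have already been processed. The paper does not try to rescue the constant~$4$; it simply uses the guarantee from Theorem~\ref{theo:1}, namely that the two off-diagonal buckets $(A,B)$ and $(B,A)$ are \emph{always} fresh (this is exactly what Property~(2) buys). So the paper's inequality is $2\,|E|t/n^2 \ge 2P/(w+r)$, i.e., two buckets against half your I/O cost, which algebraically coincides with your $4\,|E|t/n^2 \ge 2M/(3(w+r))$.

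In short, drop your four-bucket claim and the elaborate amortization argument you sketch to defend it; use two buckets (directly justified by Theorem~\ref{theo:1}) and the paper's exchange-time model $S/(w+r)$, and the computation becomes immediate with no ``residual shortfall'' to absorb.
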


\begin{proof}
% \vspace{-1mm}
    Suppose the node embeddings are divided into $n$ partitions. For each edge bucket, the average number of edges is ${\scriptstyle \frac{|E|}{n^2}}$. Consequently, the average time to calculate an edge bucket is ${\scriptstyle t*\frac{|E|}{n^2}}$. An exchange of a partition, including writing the old partition into the NVMe SSD and loading the new one into the GPU buffer. Each partition contains embeddings and optimizer states, whose total size is $2*P$. As a result, an exchange of a partition requires time of ${\scriptstyle \frac{2*P}{w+r}}$. Following the order output by Algorithm \ref{alg:iterating} ensures that there are at least 2 edge buckets for computing during partition exchange. So if the inequality ${\scriptstyle 2t*\frac{|E|}{n^2}\ge 2*\frac{P}{w+r}}$, the I/O overhead can be completely covered by the calculation of edge buckets. As the buffer size is $M$ and it can contain 3 partitions in our hypothesis, $P$ can be calculated as ${\scriptstyle \frac{M}{6}}$ and the minimum $n$ can be calculated as ${\scriptstyle \frac{|V|*d*4*2}{M/3}}$, where $4$ denotes the bytes of a float type. Substituting $P$ and $n$ into the inequality yields ${\scriptstyle \frac{|E|}{|V|^2}\ge \frac{96d^2}{Mt(w+r)}}$. 
    % \vspace{-0.2cm}
   
\end{proof}

Theorem \ref{theo:3} displays the condition that I/O overhead can be completely covered by the computation using our prefetching strategy. {\sf Legend} has the metrics of $t\approx10^{-7}s$, $w\approx2G/s$ and $r\approx3G/s$ in our experimental setting (Section \ref{subsec:setting}). With $M=15G$ and $d=100$, the I/O overhead can be completely covered by the computation if ${\scriptstyle \frac{|E|}{|V|^2}\ge 10^{-7}}$. {For instance, the Twitter dataset has $|E|=1.46\times 10^9$ and $|V|=4.16\times 10^7$ (see Table \ref{tab:dataset} for detailed information). The condition ${\scriptstyle \frac{|E|}{|V|^2}\approx 8\times 10^{-7}\ge 10^{-7}}$ is satisfied, confirming that I/O overhead can be completely hidden by computation.}

\noindent {\textbf{Discussion.} Our proposed ordering strategy focuses on single-GPU optimization, which is valuable especially for users such as researchers and students who have limited GPU resources. The experimental results in Section \ref{subsec:overall} and Table \ref{tab:multigpus} also demonstrate that Legend is a cost-effective solution. Extending the ordering strategy to multi-GPU settings introduces additional challenges and is left to future work. In this paragraph, we only discuss possible strategies for multi-GPU environments as a potential direction for future research. Prior work supporting multi-GPU, such as DGL-KE~\cite{zheng2020dgl}, employs the METIS algorithm to partition a graph and assign subgraphs to GPUs. As shown in Figure \ref{fig:multiGPU}, with METIS partitioning, most edges reside within diagonal blocks, which aligns well with our prefetch-aware ordering strategy. Specifically, each subgraph on the GPU can be further partitioned by node IDs, enabling the straightforward application of the ordering strategy in Legend for diagonal blocks. The remaining partitions require a new ordering strategy that supports prefetching, as the partition ID on their horizontal and vertical axes is not identical. Another option is to only prefetch partitions for diagonal blocks, which have already covered the majority of edges. In this way, we can achieve partial prefetching in multi-GPU scenarios. } 

\begin{figure}
    \centering
    \includegraphics[width=0.32\textwidth]{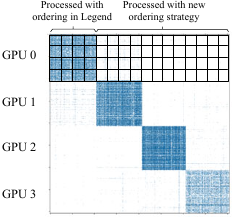}
    \caption{{Adjacent matrix of a large graph after applying METIS partitioning~\cite{zheng2020dgl}.} }
    \label{fig:multiGPU}
\end{figure}
% \vspace{-1mm}
\section{Optimizations on GPU Direct Access to SSD}
% \label{sec:optimize}
\label{subsec:optNVMe}
In this section, we introduce our proposed optimizations for the GPU-SSD direct access mechanism, {including batch enqueue, full-coalesced doorbell ringing, and batch polling techniques.} These techniques are specifically designed for graph embedding workloads and enhance the bandwidth between the GPU and SSD.  

% \subsection{Optimizations on GPU Direct Access to SSD}

{Previously, access to NVMe SSD relied on the kernel I/O stacks of the operating system, which involve context switching, data copying, interrupts, resource synchronization, etc. 
As the latency of storage devices decreases, the CPU software stack becomes a bottleneck for I/O access~\cite{qureshi2023gpu}. Consequently, customized NVMe SSD drivers such as SPDK have emerged to move all of the necessary operations into userspace~\cite{yang2017spdk}, reducing the CPU software stack's overhead.} 
Recently, to achieve high-performance direct access between the GPU and NVMe SSD, research has shifted towards offloading I/O tasks from the CPU and reconstructing the user-level I/O stack on the GPU, aiming to reduce the stacks' overhead and enhance throughput by leveraging the massive parallelism of GPU threads. 

Among these, {\sf BaM} achieves the state-of-the-art performance~\cite{qureshi2023gpu}. However, {\sf BaM} is designed to handle general workloads across various scenarios, incorporating complex mechanisms including parallel queue management strategies, atomic operations, caching strategies, etc. 
{The enqueue operation is designed to be serial by atomic operations to avoid concurrency conflicts, and the doorbell ringing strategy only coalesces doorbell writes of some threads to overcome complex read/write workload, leading to redundant overhead of lock and doorbell ringing for data transfer in graph embedding. }
Moreover, {\sf BaM} employs numerous thread blocks on the GPU to achieve high throughput between the GPU and NVMe SSD, which consumes valuable GPU resources and hinders the simultaneous execution of the data access kernel and computing kernel. 

{Other methods employ GPU-SSD direct access for specific applications such as DNN and GNN training~\cite{bae2021flashneuron,jeongmin2024accelerating}. These methods leverage existing access mechanisms implemented in {\sf GDRCopy}~\cite{refgdrcopy} or {\sf BaM}~\cite{qureshi2023gpu}, without optimizing the underlying access mechanism specific for the I/O patterns of various applications. 
For graph embedding, the embeddings and optimizer states are stored continuously in large partitions in SSD, leading to redundant door ringing and locks during data access when employing existing data access mechanisms such as {\sf BaM}. }
To implement a lightweight yet high-throughput NVMe SSD access kernel, we analyze the specific workload of graph embedding learning and optimize the direct access mechanism. 

In the context of graph embedding learning, embeddings and optimizer states are loaded from NVMe SSD to the GPU buffer only after the GPU has completed the computation of the edge buckets related to the node partitions in the current buffer state. The data loading times are determined once Algorithm \ref{alg:iterating} provides the order. Additionally, the size of the embedding and optimizer states for each node partition is fixed, allowing for sequential access page by page. Such a workload leads to opportunities to reduce the complexity of the queue management mechanism. 

To avoid building complex I/O stacks from scratch, similar to {\sf BaM}, we implement the GPU-SSD direct access driver based on an open-source codebase~\cite{Markussen2021}. We will only introduce our contributions below. To maximize the parallelism and the I/O throughput of the NVMe SSD, we employ multiple NVMe queues and utilize several thread blocks, with each thread block managing one NVMe queue pair. All threads within a thread block can enqueue and dequeue on the corresponding queue pair. This thread block allocation strategy simplifies the management of queue pairs, as synchronization among threads within the same block is straightforward and has low overhead. 

\begin{figure}
    \centering
    \includegraphics[width=0.4\textwidth]{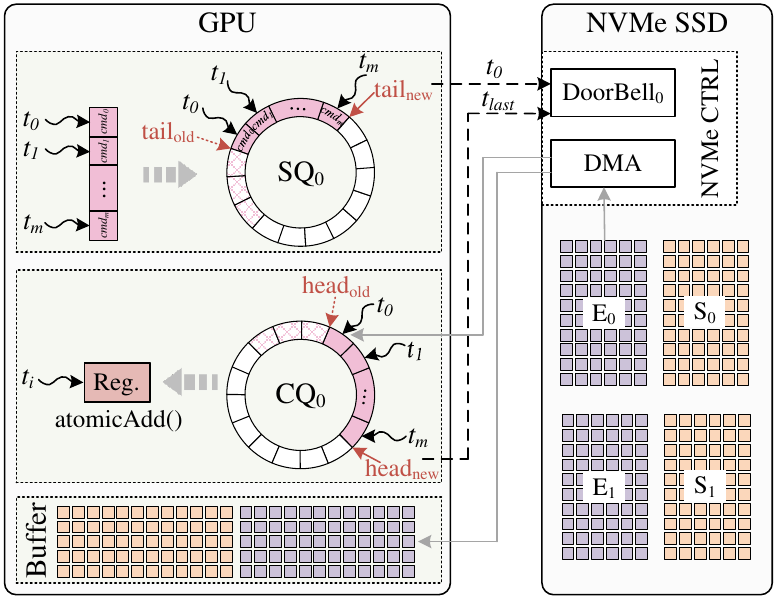}
    % \vspace{-2mm}
    \caption{Procedure of the GPU direct access to NVMe SSD.}
    \label{fig:nvme}
    % \vspace{-4mm}
\end{figure}

Figure \ref{fig:nvme} depicts our proposed optimized procedure for GPU direct access to NVMe SSD. 
{For clarity, Figure \ref{fig:nvme} exhibits a single thread block with a single queue. In practical implementation, we employ multiple queues to maximize the bandwidth between the GPU and SSD. }
The key idea of our proposed GPU direct access mechanism is to utilize the regular embedding access characteristics to precompute the positions of queue entries and minimize the doorbell ringing time, avoiding the use of locks and atomic operations, as well as reducing the overhead of doorbell ringing. 

During the command construction phase, threads in a block construct read/write commands in parallel, requesting consecutive NVMe addresses. Subsequently, these threads need to insert the commands into the submission queue (SQ). To achieve lock-free enqueue, {we design a {\em batch enqueue} strategy.} Each thread $t_i$ inserts its command into the position $tail_{old}+i$ of the corresponding SQ, where $tail_{old}$ denotes the current head pointer and $i$ denotes the thread ID. 
This enqueue process is parallelized among threads since each thread has a unique and determined position in the SQ. 
The fixed size of embeddings simplifies the enqueue process, allowing parallel operation without the complex data structures and atomic operations in {\sf BaM} for correct enqueueing. 
Following the enqueue process, the tail of SQ is updated to $tail_{new}$ as shown in Figure \ref{fig:nvme}, which is synchronized with the doorbell registers subsequently. 
The doorbell registers in the NVMe controller are write-only, necessitating serial writing from threads. Furthermore, the writing overhead of doorbell registers is high because they are located in the NVMe SSD, and the writing needs to be performed through PCIe. 
As a result, {we design a {\em full-coalesced doorbell ringing} mechanism to reduce the high cost}, where a single thread ($t_0$ in Figure \ref{fig:nvme}) is assigned to ring the doorbell only after all the threads within a thread block ($t_0\sim t_m$) have completed the enqueue process, rather than ringing the doorbell multiple times. 

Once the doorbell rings, the NVMe controller fetches commands from the SQ in the GPU's global memory. The NVMe controller analyzes these commands, retrieves data from the NVMe SSD, and transfers the data to the specified addresses in the GPU buffer according to the commands via Direct Memory Access (DMA). Following this, the NVMe controller inserts an entry into the completion queue (CQ) corresponding to the entry in the SQ.

\begin{figure*}
    \centering
    \includegraphics[width=0.8\textwidth]{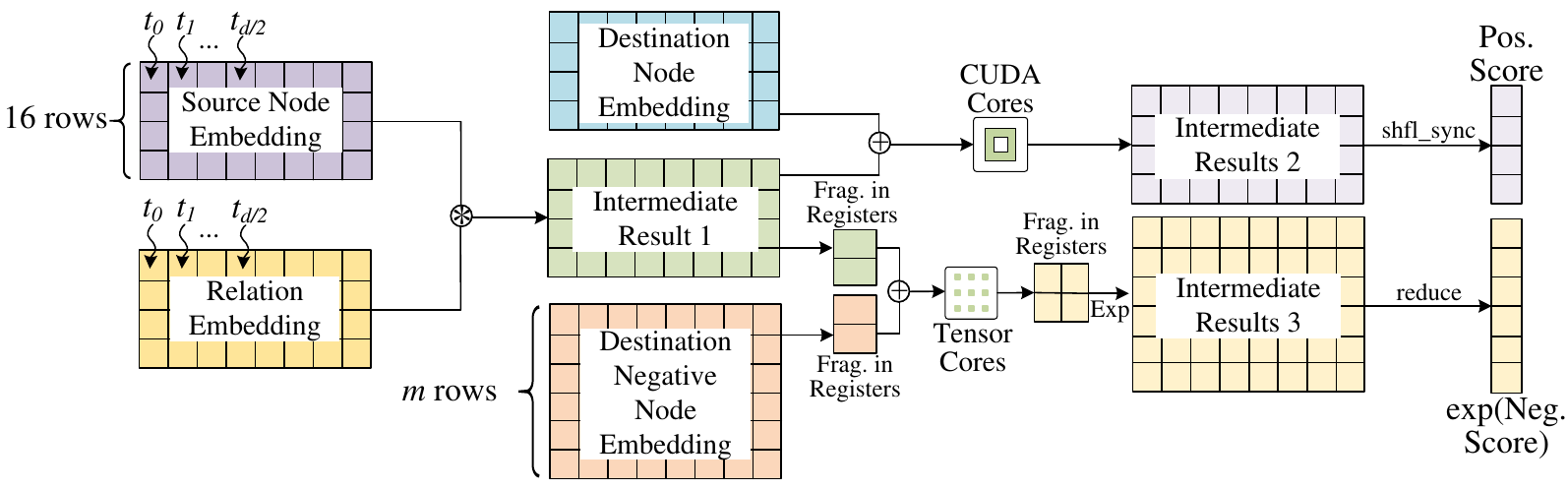}
    % \vspace{-6mm}
    \caption{Optimized procedure of the forward phase. }
    \label{fig:forward}
    \vspace{-2mm}
\end{figure*}

To wait for the completion entries in the CQ, existing methods employ a polling strategy, where each thread polls the CQ for the completion entry it enqueues during the enqueue phase, and the doorbell ringing strategy is the same as the enqueue phase. 
However, the na\"ive polling strategy causes redundant overhead to calculate the exact polling positions and ring the doorbell. To reduce the polling overhead, {we design a {\em batch polling} strategy specific for graph embedding data transfer workload.}  
Specifically, during the polling phase, each thread $t_i$ within a thread block checks the position of $head_{old}+i$ in the CQ, where $head_{old}$ is the head pointer of the CQ and $i$ denotes the thread ID. 
We maintain a counter for each CQ in the GPU's shared memory, which is initialized with 0. 
When a thread detects that an entry has been inserted by the NVMe controller, it atomically adds 1 to the counter by using the $atomicAdd()$ operation in CUDA to avoid concurrency conflict. 
The last thread $t_{last}$ to increment this counter updates the head pointer of CQ to $head_{new}$ and rings the doorbell to transfer the updated position of the latest head pointer. 
The atomic operation has a low cost because the counter is located in shared memory, and the number of threads within a block is limited. This polling strategy can not only fully utilize thread parallelism, but also reduce the overhead of doorbell ringing. 

As the size of required embeddings and optimizer states is determined, threads in a thread block repeat the data access procedures synchronously until all the pages of embedding and optimizer states in the NVMe SSD are loaded. This GPU direct access driver and the embedding training kernel run simultaneously on the GPU by utilizing CUDA streams.  

% \subsection{Optimizations on GPU}

% \vspace{-1mm}

\section{Optimizations on GPU}
\label{subsec:optGPU}
In this section, we describe our optimization techniques for the batch computation on the GPU, which fully exploit the resources on the GPU and significantly enhance GPU utilization. 
{Batch computation is the core process in the graph embedding pipeline, which involves two procedures. For each batch, it first calculates the score $f(\theta_s, \theta_r, \theta_d)$ for positive edges and $f(\theta_{s'}, \theta_{r'}, \theta_{d'})$ for sampled negative edges (see Equation \ref{equa1}). Then it calculates the gradients of the loss function $\mathcal{L}$ (Equation 1) for the embedding values ($\theta_s, \theta_r, \theta_d$, $\theta_{s'}, \theta_{r'}, \theta_{d'}$). The embedding values are finally updated based on the calculated gradients. }

As mentioned above, existing systems often overlook GPU computation optimization, leading to underutilization of the GPU. In graph embedding systems, the time cost can be divided into three parts: CPU processing, CPU-GPU communication, and GPU computing~\cite{zheng2024ge2}. As some tasks on the CPU are offloaded to the GPU and CPU-GPU communication is optimized, the overhead of GPU computing becomes more pronounced. For instance, both {\sf Marius}~\cite{mohoney2021marius} and {\sf GE$^2$}~\cite{zheng2024ge2} have similar GPU computing overhead because they use the same GPU computing engine. However, {\sf GE$^2$} offloads some tasks to the GPU and reduces the CPU-GPU communication cost by customized loading order, which results in the GPU computing becoming the most time-consuming part (more than 1/3). In {\sf Legend}, the communication overhead is further reduced due to the node partition loading order that minimizes the I/O times as well as the support of embedding prefetching, which makes GPU computing the primary bottleneck of graph embedding learning. 

{Different from the computing process of DNNs and GNNs which directly perform matrix multiplication between weights (or adjacent matrics of graphs) and embeddings~\cite{wang2022lightseq2,peng2022sancus}, the computing process of graph embedding is more complex as shown in Equation \ref{equa1}, which involves computation between $\theta_s$, $\theta_r$, $\theta_d$ and their corresponding negative samples, leading to more memory access transactions and intermediate results. 
Moreover, different graph embedding models (i.e., $f$ in Equation \ref{equa1}) have distinct computing processes. For example, the graph embedding model {\sf ComplEx}~\cite{trouillon2016complex} requires cross-calculation of the first half dimensions and the last half dimensions of the embedding.
To optimize the computation for graph embedding, our key idea is to design a customized parallel computational pattern for graph embedding on the GPU to reduce memory access, maximize the reuse of the intermediate results, and leverage various GPU resources, including Tensor cores and registers, to enhance GPU utilization. }

As illustrated in Figure \ref{fig:forward}, we horizontally split a batch of embeddings into several chunks, distributing each chunk across thread blocks in the GPU for simultaneous computation, fully utilizing the high parallelism of the GPU. 
For clarity, Figure \ref{fig:forward} depicts the computing procedure of a single thread block on the GPU. 
Each thread block contains several warps, with $\lceil d/64 \rceil$ warp(s) collaborating to process one row, where $d$ represents the embedding dimension. This design allows the $\lceil d/64 \rceil$ warp(s) to calculate the first half of the elements in a row before processing the last half. It leads to only one access to each element for the cross-calculation between the first and last half elements in some embedding models, such as ComplEx~\cite{trouillon2016complex}, avoiding redundant memory access. 
It is also suitable for other embedding models without cross-calculation. Each thread block handles 16 rows of embeddings in a batch, as subsequent calculations utilize Tensor cores, which necessitate a fixed input size of $16\times 8\times 16$ submatrices in each thread block. 

Threads within a thread block retrieve the corresponding elements from the source node embeddings ($\theta_s$) and the relation embeddings ($\theta_r$), and calculate the intermediate results according to the operator $\otimes$ defined by the embedding model. Following the calculation of $\theta_s \otimes \theta_r$, we obtain Intermediate Results 1, which are stored in the registers of each thread and have not yet been written to the GPU's memory. 
To minimize the access times of global memory, we first use the results stored in registers to compute the positive scores before writing them to global memory. 
Therefore, we first employ a similar parallel strategy to calculate Intermediate Results 2 by the destination node embeddings ($\theta_d$) and Intermediate Result 1 ($\theta_s\otimes\theta_r$) using CUDA cores, with the equation of $(\theta_s\otimes \theta_r)\oplus\theta_d$, where $\otimes$ and $\oplus$ are defined by the adopted embedding model. 
Notably, Intermediate Result 2 remains stored in registers and is distributed among threads. To subsequently calculate the positive scores from Intermediate Result 2, we implement a two-phase reduction strategy, which first reduces elements within threads in each warp using the inter-thread data exchange function \textit{\_\_shfl\_sync()}, and second reduces the elements within warps in each row. 
The two-phase reduction strategy leverages shared memory only in the second phase, thereby reducing memory access overhead. 
During the calculation of positive scores, global memory access only happens when the data is loaded at the beginning and the positive scores are written at the end. 
Consequently, we improve the efficiency of positive score calculation by optimizing computation and memory access. 

For the efficient calculation of negative scores, we design an optimized kernel specifically for multiplication-based embedding models such as ComplEx~\cite{trouillon2016complex} and DistMult~\cite{yang2015embedding}, whose $\oplus$ is a multiplication operation in $(\theta_s \otimes \theta_r)\oplus \theta^{\prime}_d$. Normally, a source node embedding is required to perform element-wise multiplication with a group of negative node embeddings, followed by a reduction of elements in each row to calculate the negative scores. Given the substantial number of multiplication operations, we utilize Tensor cores, which can execute fixed-size matrix multiplications within a single clock cycle. We adopt the TF32 data type for matrix multiplication in Tensor cores, requiring input matrices to be sized $16\times 8$. As shown in Figure \ref{fig:forward}, in each thread block, the Intermediate Result 1 contains exactly 16 rows, facilitating horizontal iteration. In a thread block, we employ multiple warps to iterate over the negative node embeddings horizontally, with each warp handling 16 rows of the negative embeddings. Each warp loads a fragment of embeddings into registers and feeds them into the Tensor cores to get the multiplication results. Considering that we need to use the exponent results of the negative scores in the loss and gradients calculation, we perform the exponent operation in advance in registers before writing the results to global memory (Intermediate Result 3 in Figure \ref{fig:forward}), which further reduces redundant memory access. Finally, we employ the reduction API in {\sf libtorch} to reduce the elements in Intermediate Result 3. 

During gradient computing, we reuse the Intermediate Result 1, 2, and 3 in Figure \ref{fig:forward} to eliminate redundant calculations. We also apply the same parallel strategy as in Figure \ref{fig:forward} to compute the gradients efficiently on the GPU, which shares a similar computing process. The parallel strategy, memory access strategy, and the intermediate results reusing perform collaboratively to enhance GPU utilization and enable high-performance gradient computation on large datasets.

\section{Experiments}
\label{sec:evaluation}
In this section, we evaluate the performance of our proposed \textsf{Legend} and conduct a comparative evaluation with state-of-the-art graph embedding systems. Source code of {\sf Legend} is publicly available~\footnote{https://github.com/ZJU-DAILY/Legend}.

\subsection{Experiment Settings}
\label{subsec:setting}
\noindent \textbf{Datasets}. For comprehensive evaluations, we use 4 da-
tasets with varying volumes, previously employed in related works~\cite{zheng2020dgl,mohoney2021marius,zheng2024ge2}. Table \ref{tab:dataset} summarizes their properties, where \textit{FB} and \textit{FM} are multi-type knowledge graphs, while \textit{LJ} and \textit{TW} are social networks without relation types. In Table \ref{tab:dataset}, \textit{Dim.} denotes the embedding dimension, and \textit{Size} indicates the storage requirements for embeddings and optimizer states. Each dataset is divided into training, test, and validation subsets for embedding training and evaluation.

\begin{table}[tbp]
\renewcommand{\arraystretch}{1.2}
    \centering
    \small
    \caption{Details of Datasets. }
    % \vspace{-3mm}
    % \begin{tabular}{cccccc}
    \begin{tabular}{p{2.68cm}<{\centering}p{0.6cm}<{\centering}p{0.8cm}<{\centering}p{0.6cm}<{\centering}p{0.4cm}<{\centering}p{0.8cm}<{\centering} }
        \toprule
         Graphs & $|V|$ & $|E|$ & $|R|$ & \textit{Dim.} & \textit{Size} \\
         \toprule
         FB15k (\textit{FB}) & 15k & 592k & 1345 & 100 & 13MB \\
         LiveJournal (\textit{LJ}) & 4.8M & 68M & - & 100 & 3.8GB \\
         Twitter (\textit{TW}) & 41.6M & 1.46B & - & 100 & 32GB \\
         Freebase86M (\textit{FM}) & 86.1M & 304.7M & 14824 & 100 & 68GB \\
        \bottomrule
    \end{tabular}
    \label{tab:dataset}
    % \vspace{-2mm}
\end{table}

\begin{figure*}
\centering 

    \subfigure[\textit{{FB}}]{
    \includegraphics[width=0.225\textwidth]{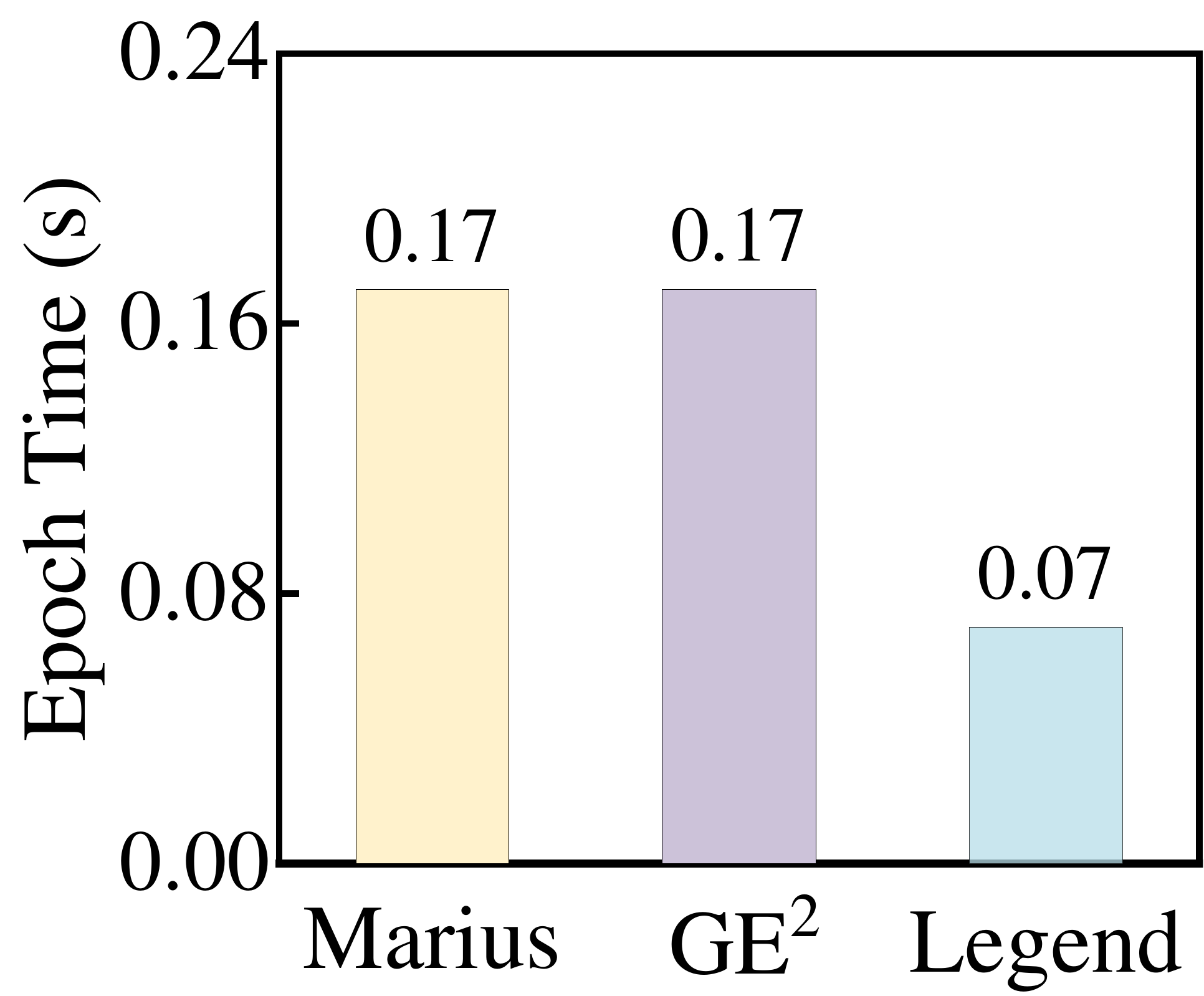}}
    \hspace{2mm}
    \subfigure[\textit{LJ}]{
    \includegraphics[width=0.213\textwidth]{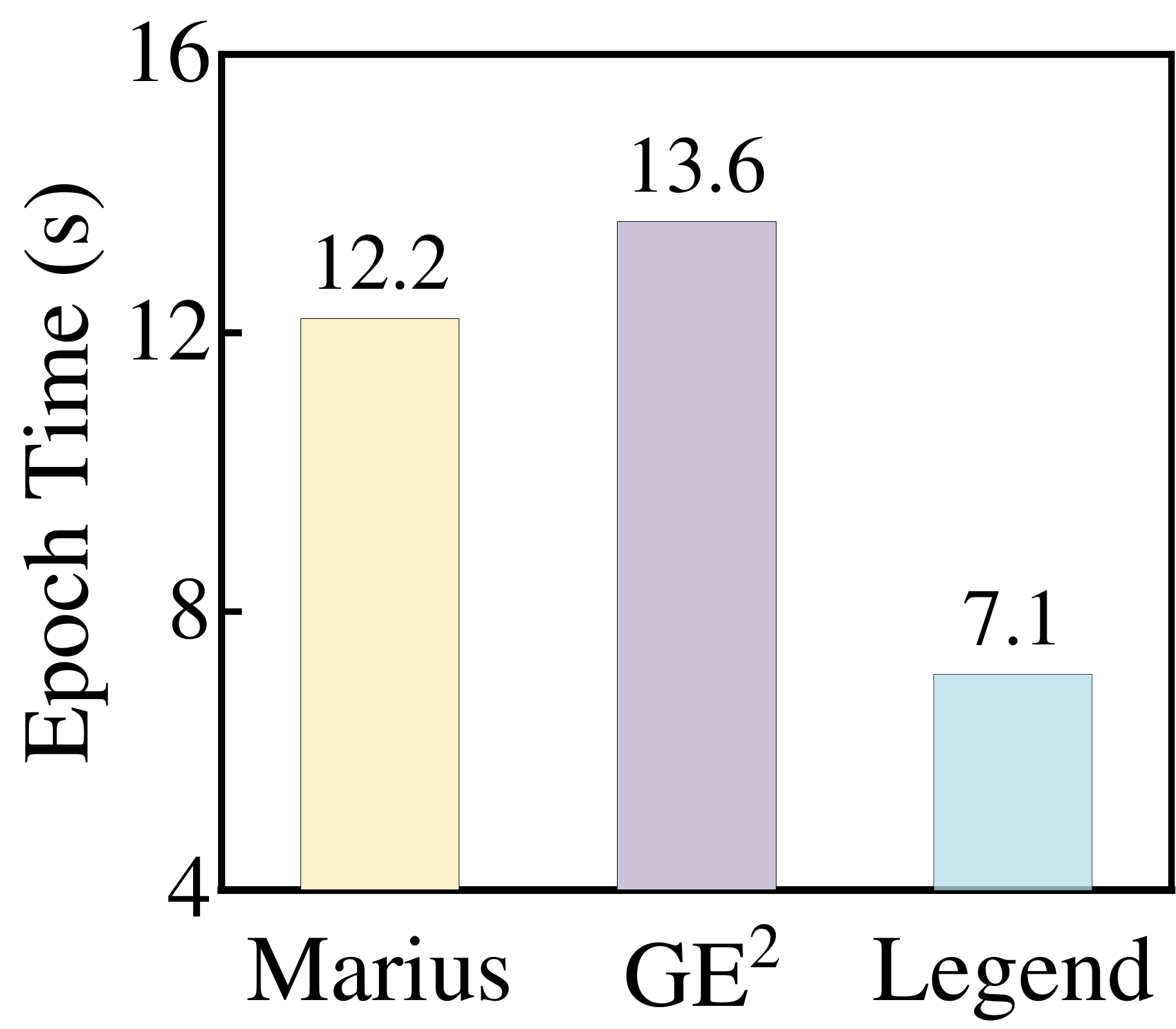}}
    \hspace{2mm}
    \subfigure[\textit{{TW}}]{
    \includegraphics[width=0.23\textwidth]{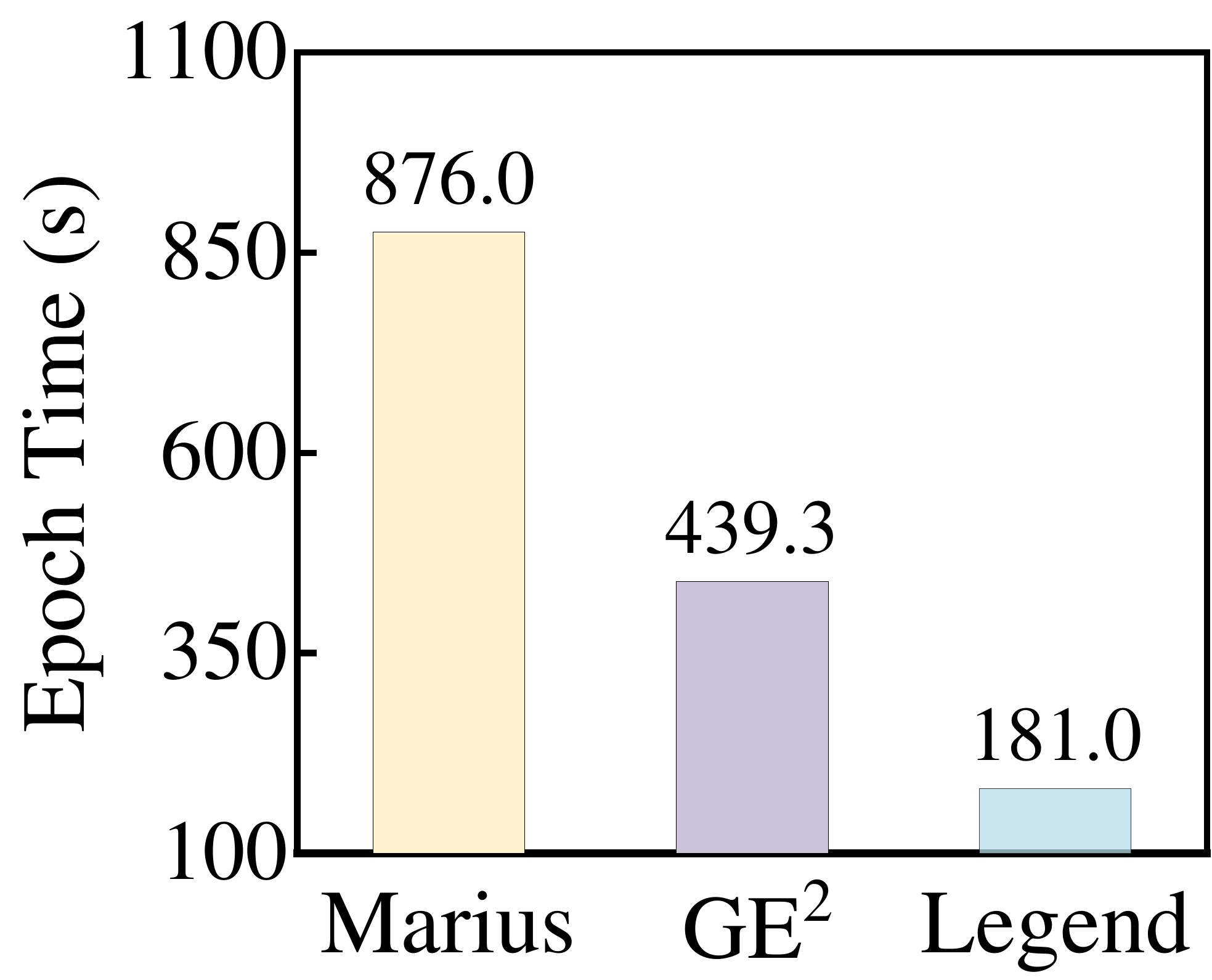}}
    \hspace{2mm}
    \subfigure[\textit{{FM}}]{
    \includegraphics[width=0.22\textwidth]{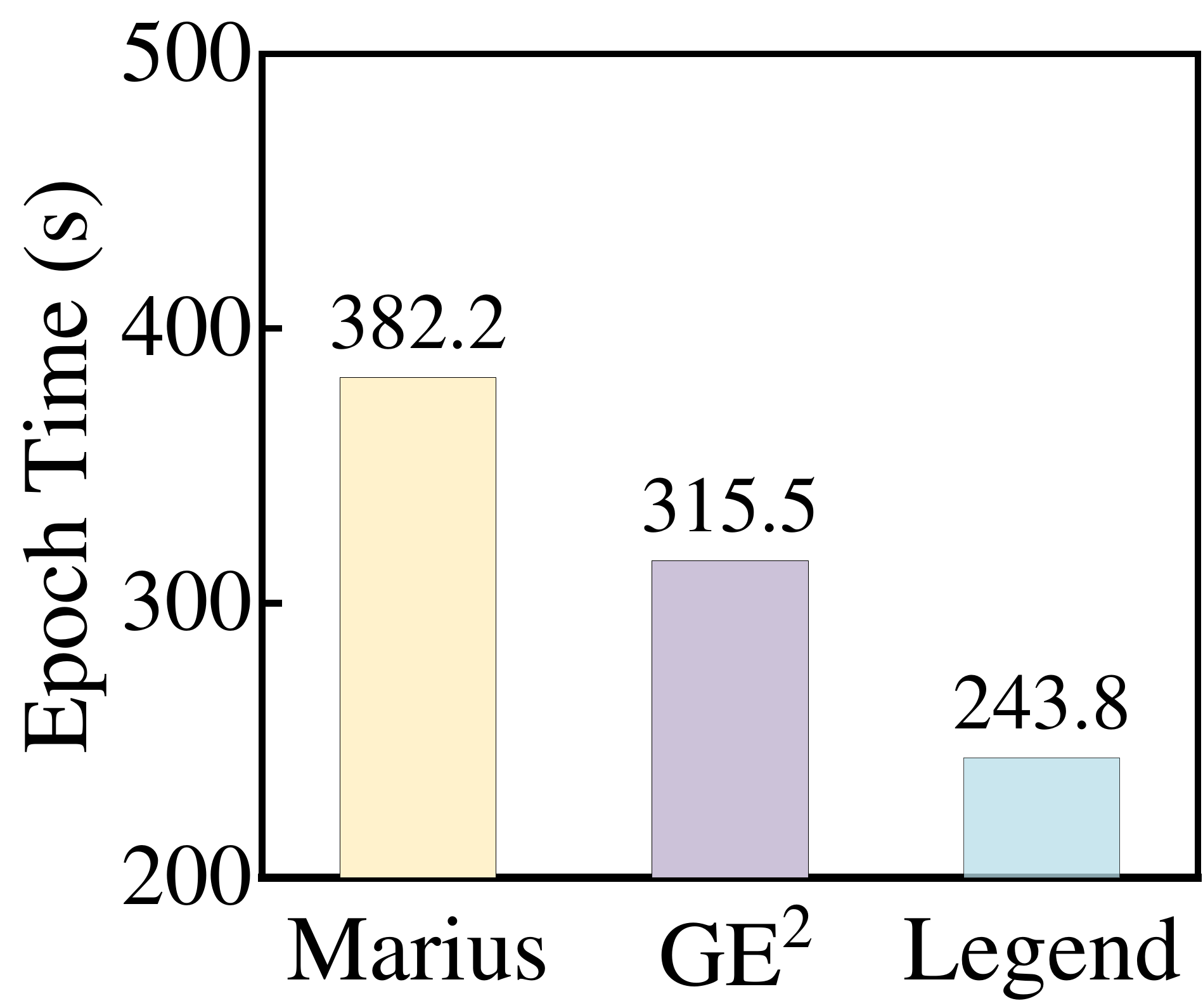}}

    \caption{Comparison of the average epoch time using a single GPU.}
    \label{fig:over_performance_time}
\end{figure*}

\begin{figure*}
\centering 
    \subfigure[\textit{{FB}}]{
    \includegraphics[width=0.22\textwidth]{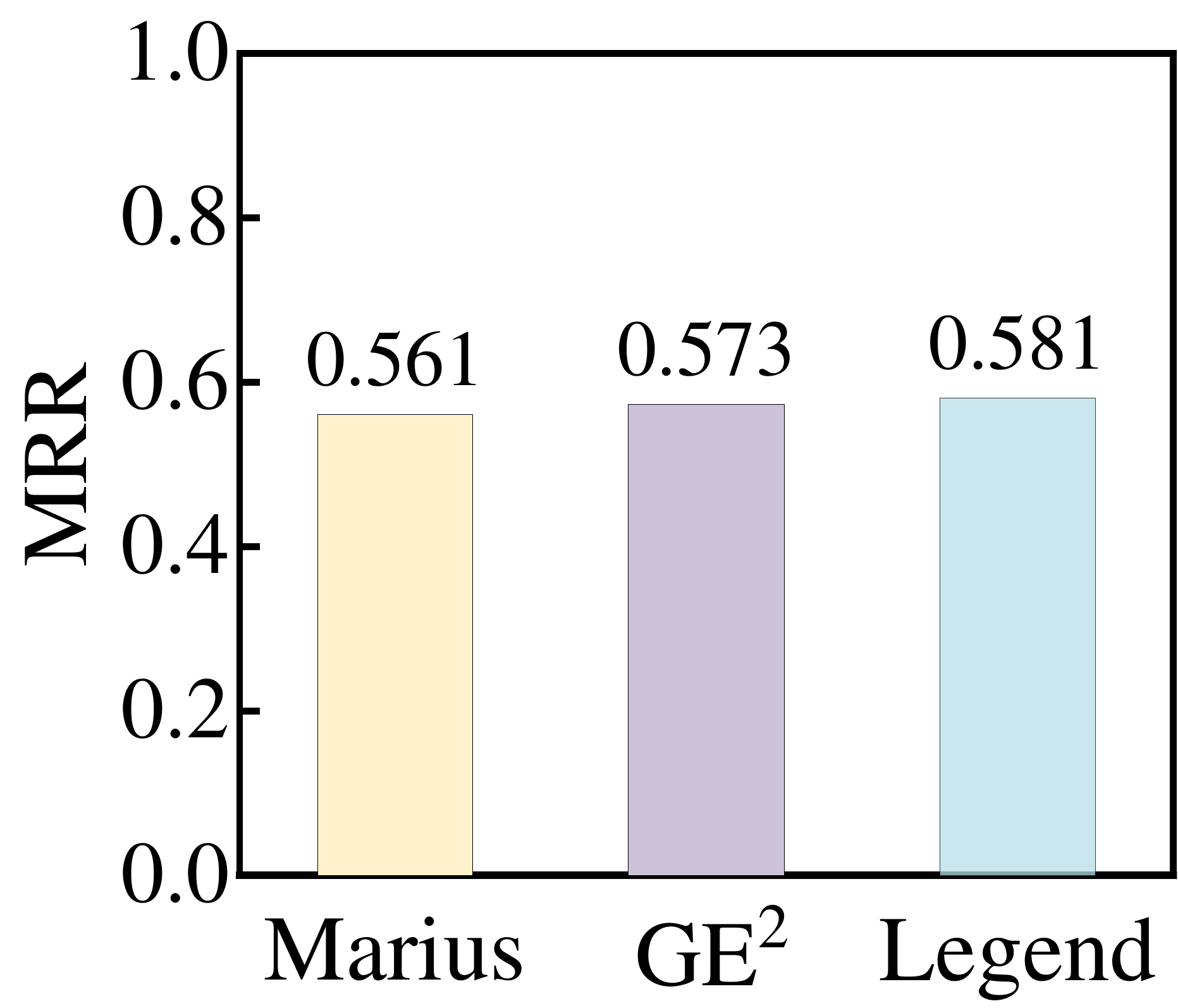}}
    \hspace{2mm}
    \subfigure[\textit{LJ}]{
    \includegraphics[width=0.22\textwidth]{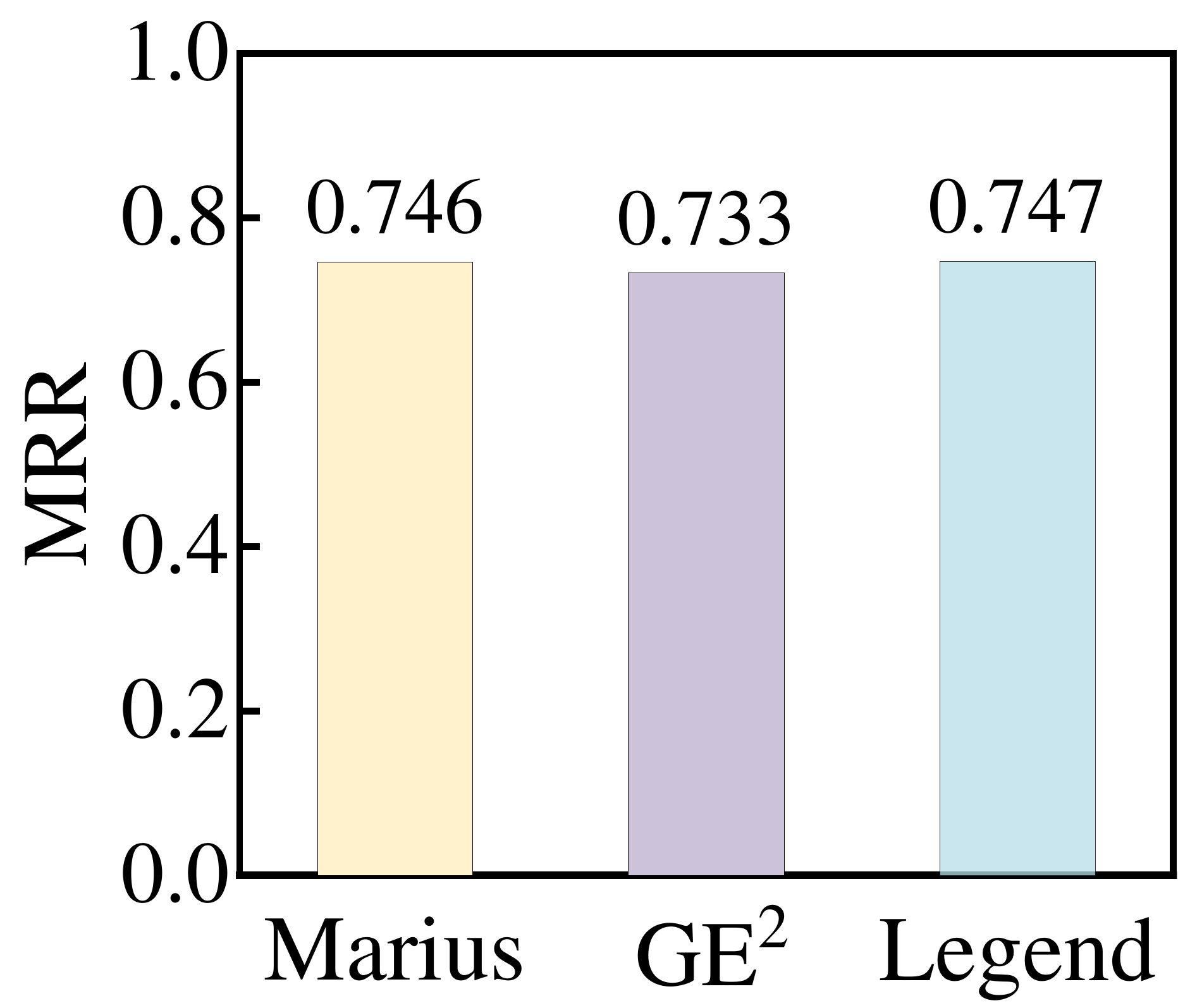}}
    \hspace{2mm}
    \subfigure[\textit{TW}]{
    \includegraphics[width=0.22\textwidth]{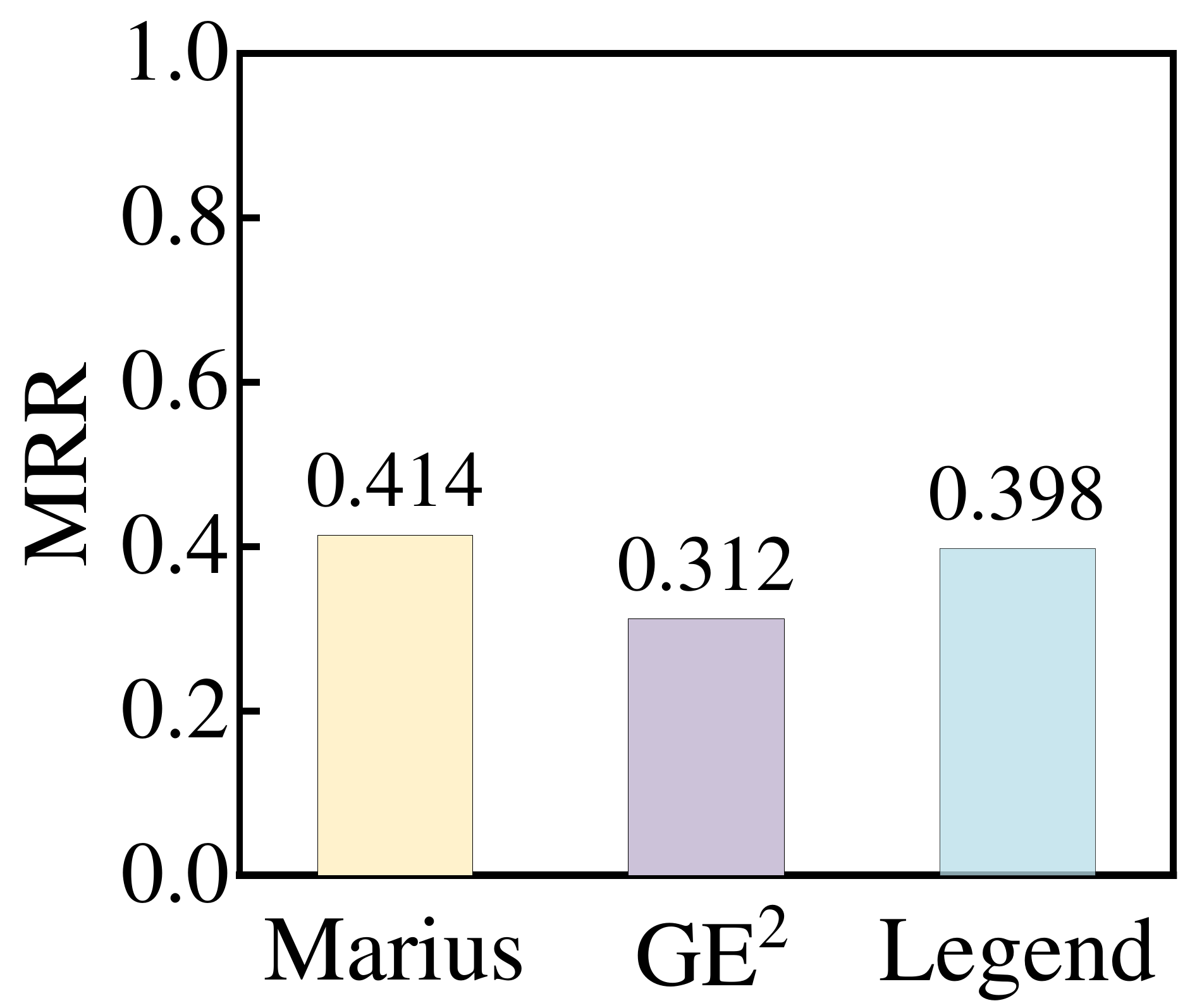}}
    \hspace{2mm}
    \subfigure[\textit{FM}]{
    \includegraphics[width=0.22\textwidth]{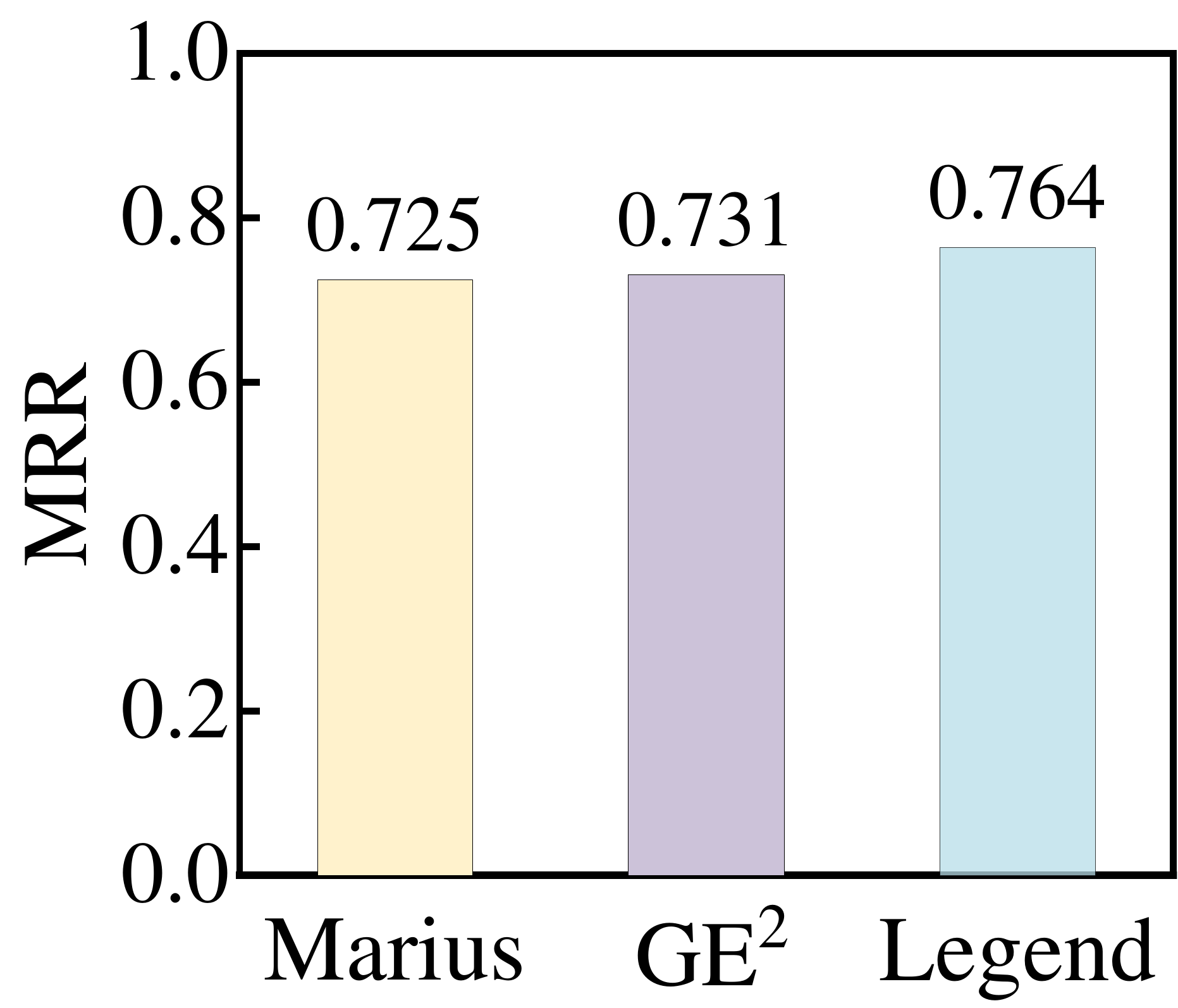}}

    \caption{Comparison of MRR using a single GPU. }
    \label{fig:over_performance_mrr}
\end{figure*}

\begin{figure*}
\centering 
    \subfigure[\textit{{FB}}]{
    \includegraphics[width=0.22\textwidth]{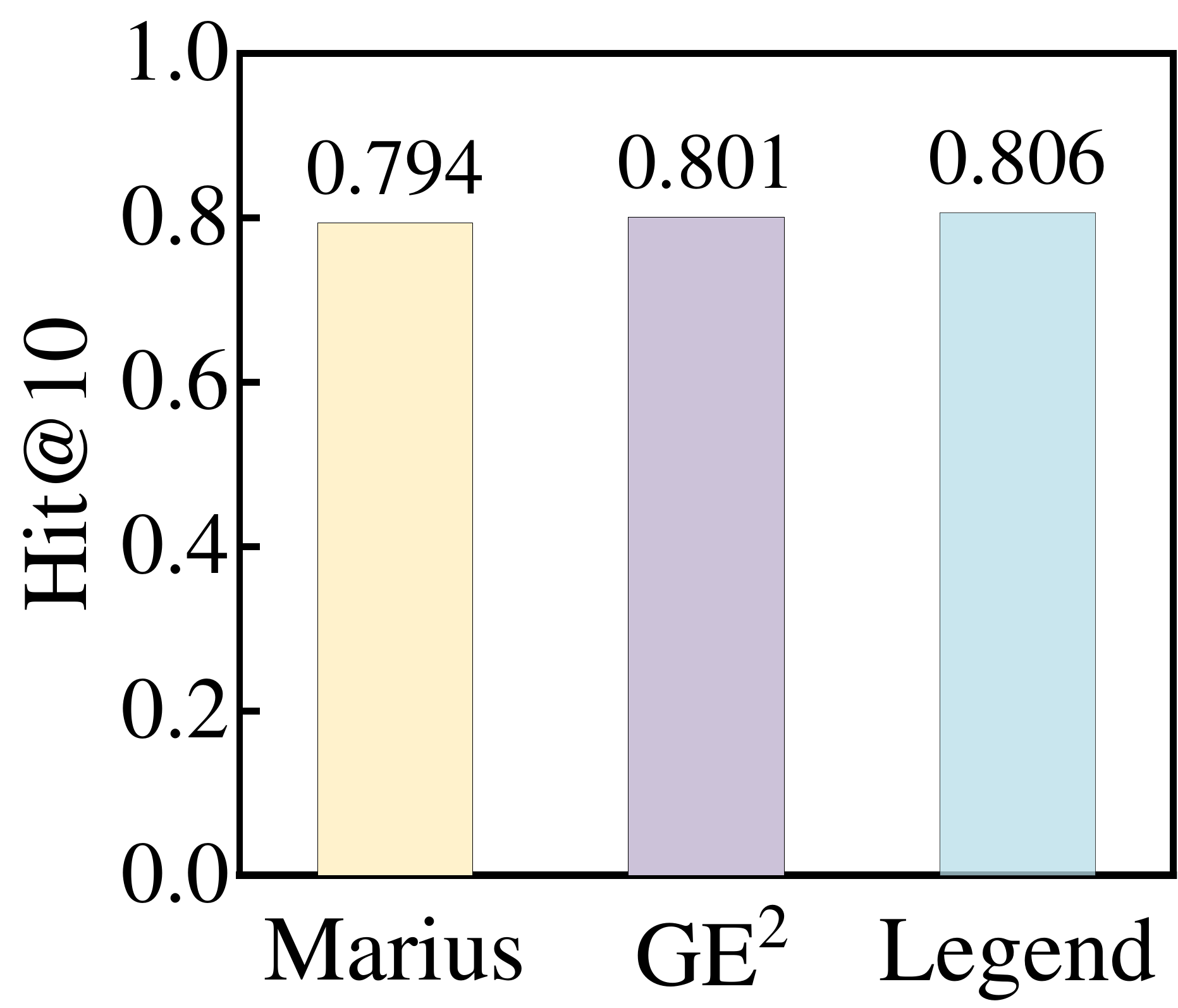}}
    \hspace{2mm}
    \subfigure[\textit{LJ}]{
    \includegraphics[width=0.22\textwidth]{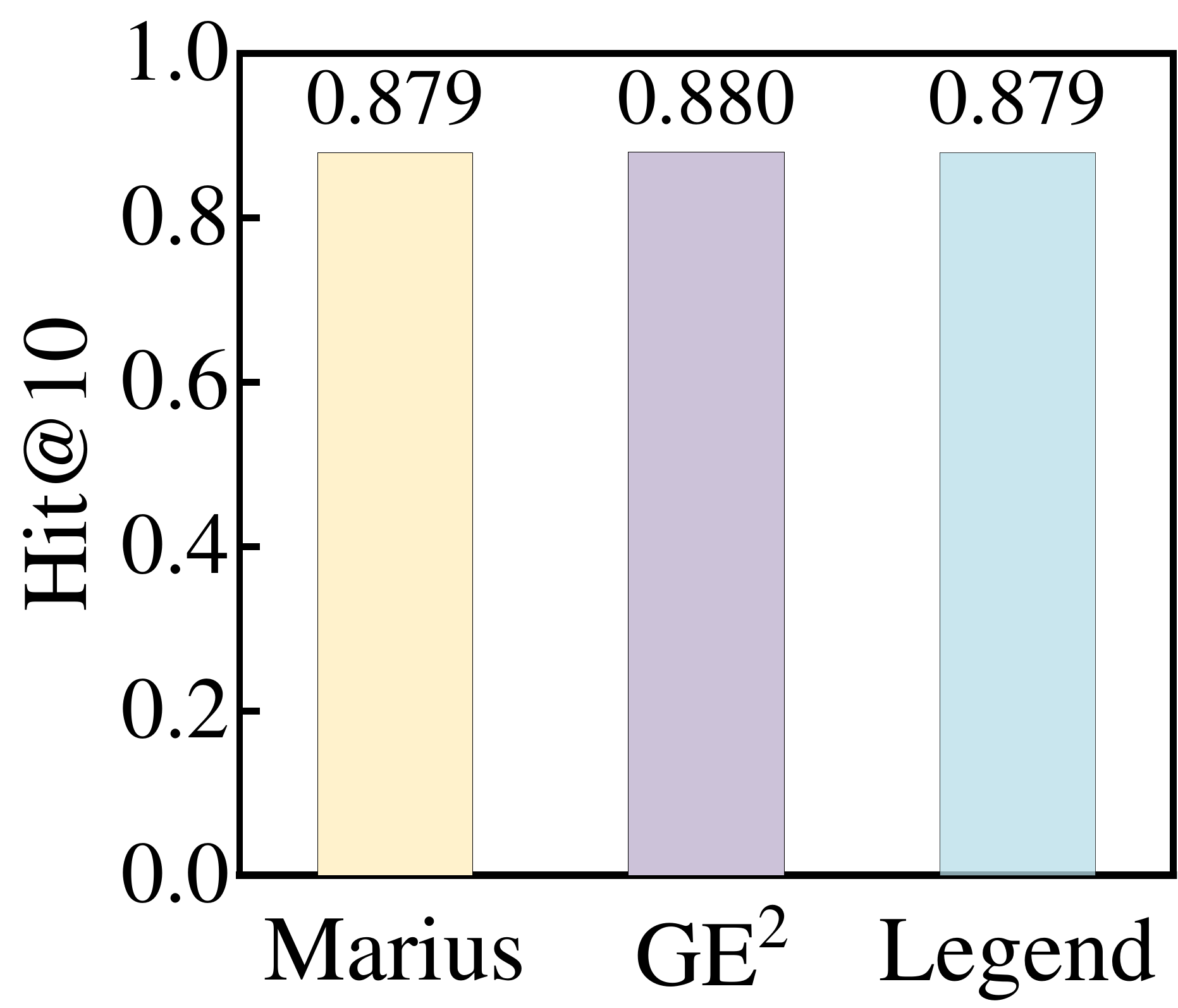}}
    \hspace{2mm}
    \subfigure[\textit{TW}]{
    \includegraphics[width=0.22\textwidth]{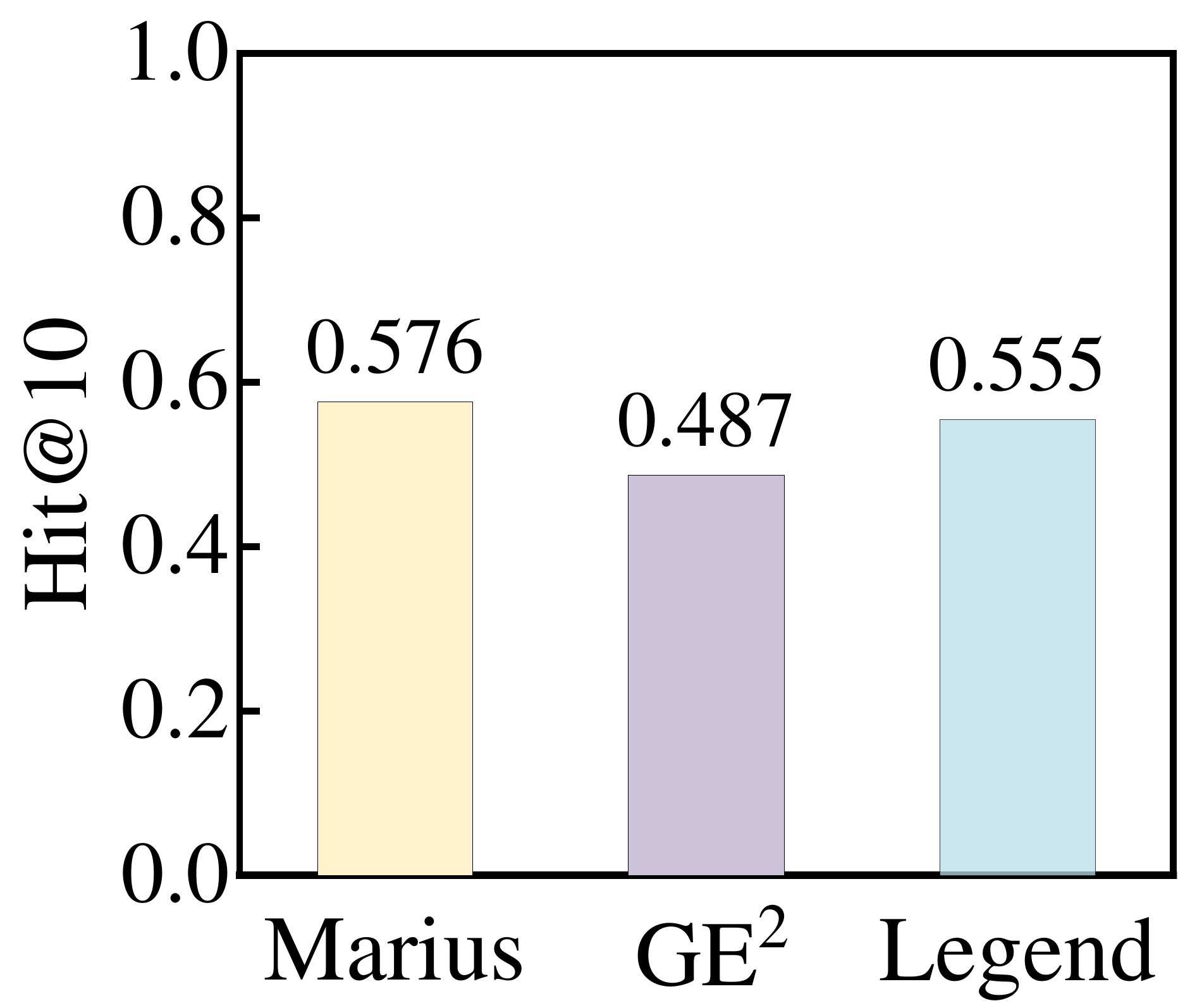}}
    \hspace{2mm}
    \subfigure[\textit{FM}]{
    \includegraphics[width=0.22\textwidth]{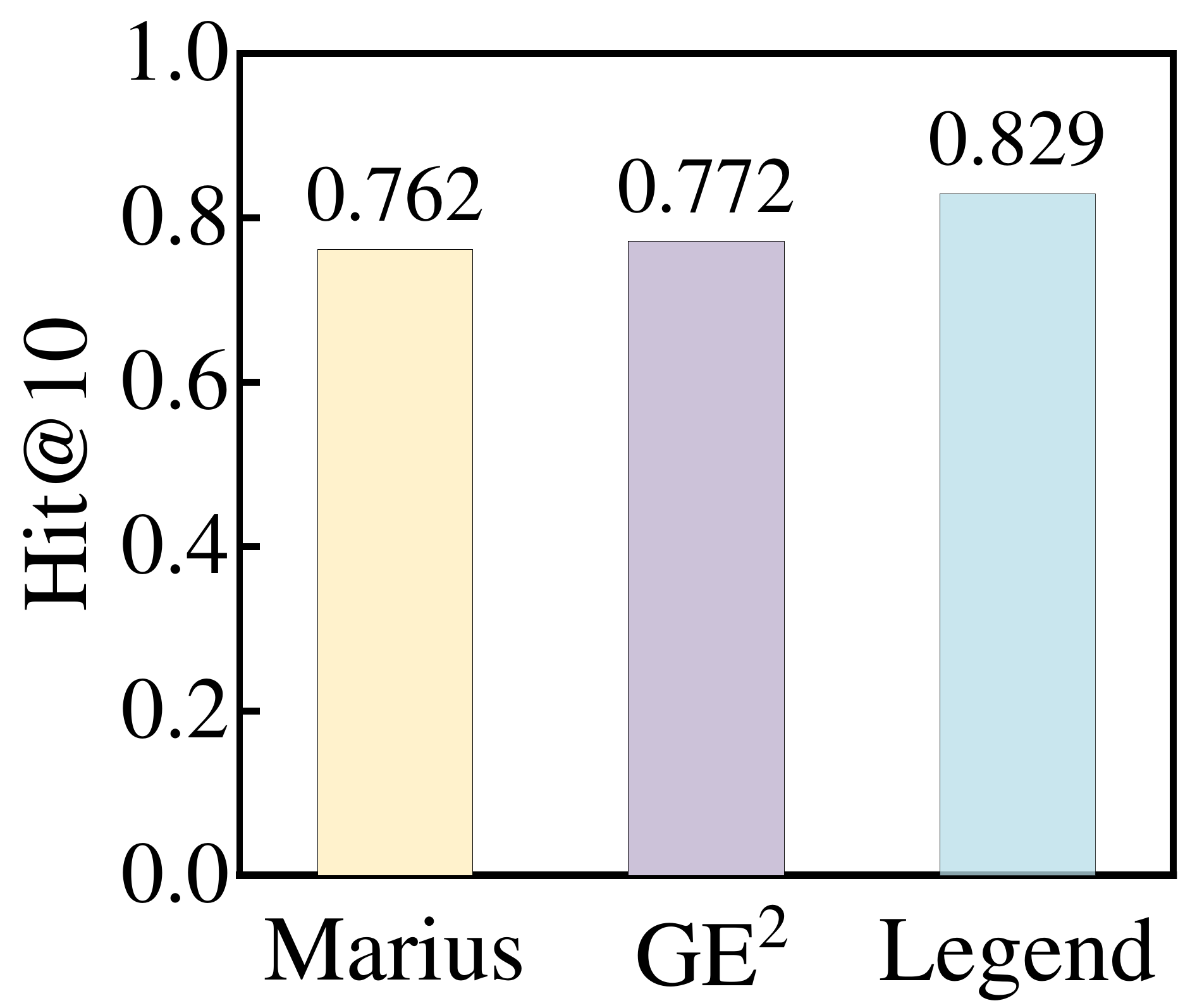}}

    \caption{Comparison of Hit$@10$ using a single GPU. }
    \label{fig:over_performance_hit}
    \vspace{-1.8mm}
\end{figure*}

\vspace{1mm}
\noindent \textbf{Embedding models}. Following {\sf Marius} and {\sf GE$^2$}, on datasets \textit{LJ} and \textit{TW} we employ the popular model \textsf{Dot}~\cite{jure2018} as they lack relation types. On \textit{FB} and \textit{FM}, we utilize \textsf{ComplEx}~\cite{trouillon2016complex}. 

\vspace{1mm}
\noindent \textbf{Baseline systems}. We compare \textsf{Legend} with two state-of-the-art graph embedding systems, i.e., \textsf{Marius}~\cite{mohoney2021marius} and \textsf{GE$^2$}~\cite{zheng2024ge2}, which are disk-based and RAM-based systems respectively. Among the two methods, \textsf{Marius} supports only a single GPU, while \textsf{GE$^2$} can leverage multiple GPUs for embedding training. We exclude \textsf{DGL-KE}~\cite{zheng2020dgl} and \textsf{PBG}~\cite{lerer2019pytorch} from the comparison, as \textsf{Marius} and \textsf{GE$^2$} have been demonstrated to outperform them. To ensure a fair comparison, we maintain identical hyperparameters across the three graph embedding learning systems, including a learning rate of 0.1, a batch size of $10^5$, $10^3$ negative samples per positive edge, 10 epochs for \textsf{TW} and \textsf{FM}, 30 epochs for \textsf{FB} and \textsf{LJ}, etc. 
{As the batch size affects the efficiency and quality, fixing batch size uniformly at $10^5$ and negative samples at $10^3$ enables a) direct hardware efficiency comparison and b) quality differences attributable solely to system design. }
\textsf{Legend} and \textsf{Marius} use 8 node partitions with a buffer capacity of 3 (12GB of the GPU global memory) for \textit{TW} and 12 node partitions with a buffer capacity of 3 (17GB of the GPU global memory) for \textit{FM}. \textsf{GE$^2$} uses 16 node partitions and a buffer capacity of 4 on \textit{TW} and \textit{FM} because it only supports the number of partitions of $4^L$ and a fixed buffer capacity of 4. This comparison is fair as the restricted support for flexible partition numbers is exactly the limitation of {\sf GE$^2$}. 
Nonetheless, we also apply the order in {\sf GE$^2$} with 16 partitions to {\sf Legend}, as referenced in Figure \ref{fig:order_time}. 

\vspace{1mm}
\noindent \textbf{Metrics}. We employ Mean Reciprocal Rank (MRR) and Hits$@k$ as the quality metrics, which are widely used to evaluate the embeddings~\cite{qiu2021lightne,kochsiek2021parallel,zheng2024ge2,mohoney2021marius}. Higher MMR and Hits$@k$ values indicate better embedding quality. Consistent with \textsf{GE$^2$}, we utilize part of the test edges ($10^6$) to compute MRR and Hit$@k$, as using the entire test set would be time-prohibitive.

\vspace{1mm}
\noindent \textbf{Platforms}. All experiments are conducted on a server with the system of Ubuntu 20.04, featuring an Intel Xeon Silver 4216 CPU@2.10GHz with 64 cores, Nvidia A100 GPU (40G), and Samsung 980 NVMe SSD (1T). We implement {\sf Legend} in C++/CUDA under Nvidia CUDA 11.1 and {\sf LibTorch} 1.7.1. {\sf Legend} can be easily integrated into {\sf Pytorch} by {\sf pybind}, but the host running {\sf Pytorch} needs to be equipped with an NVMe SSD and a GPU supporting GPUDirect RDMA. 

% \vspace{-2mm}
\subsection{Comparisons with Existing Systems}
\label{subsec:overall}

\begin{figure}
\centering 
    \subfigure[\textit{TW}]{
    \includegraphics[width=0.225\textwidth]{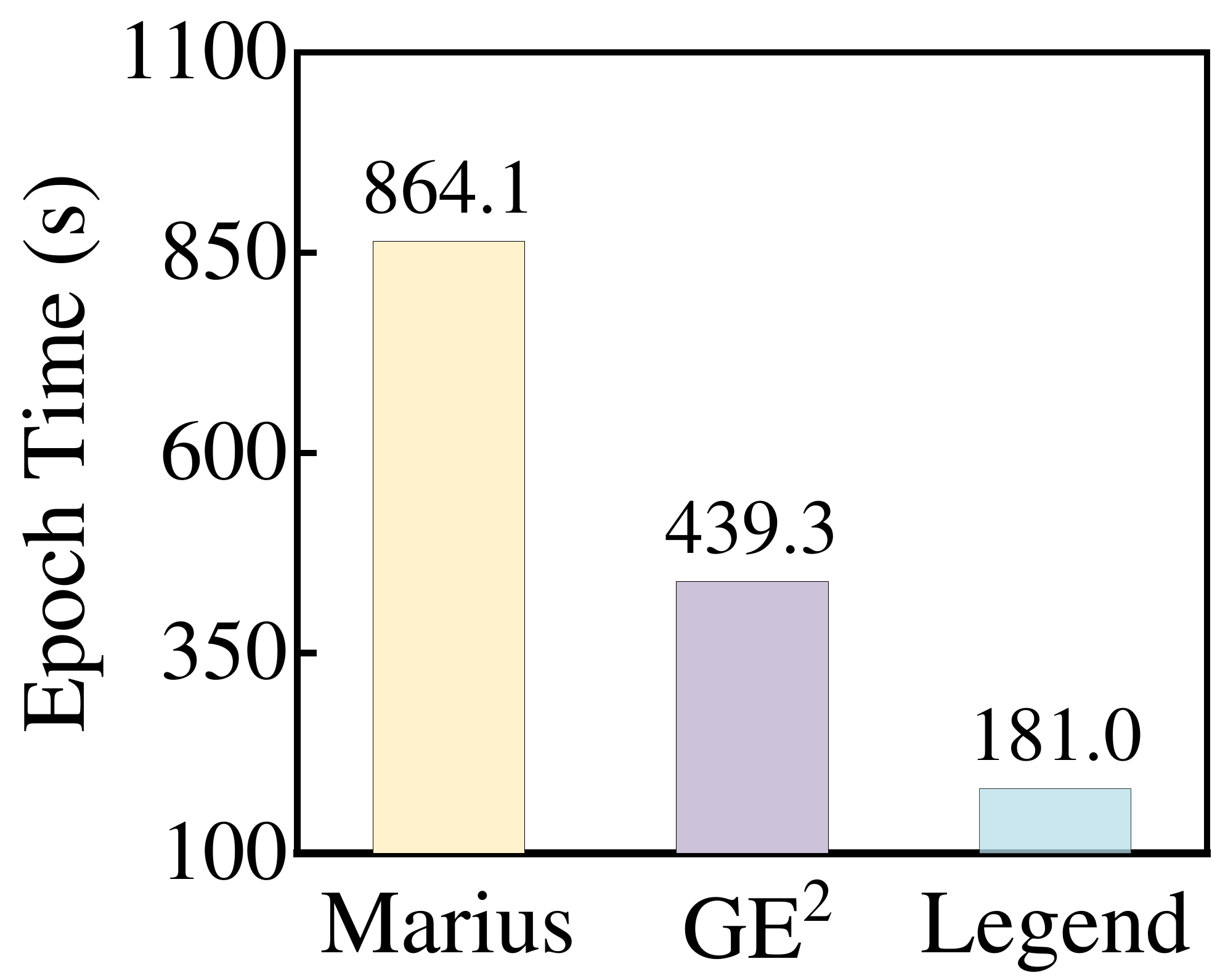}}
    \hspace{2mm}
    \subfigure[\textit{FM}]{
    \includegraphics[width=0.22\textwidth]{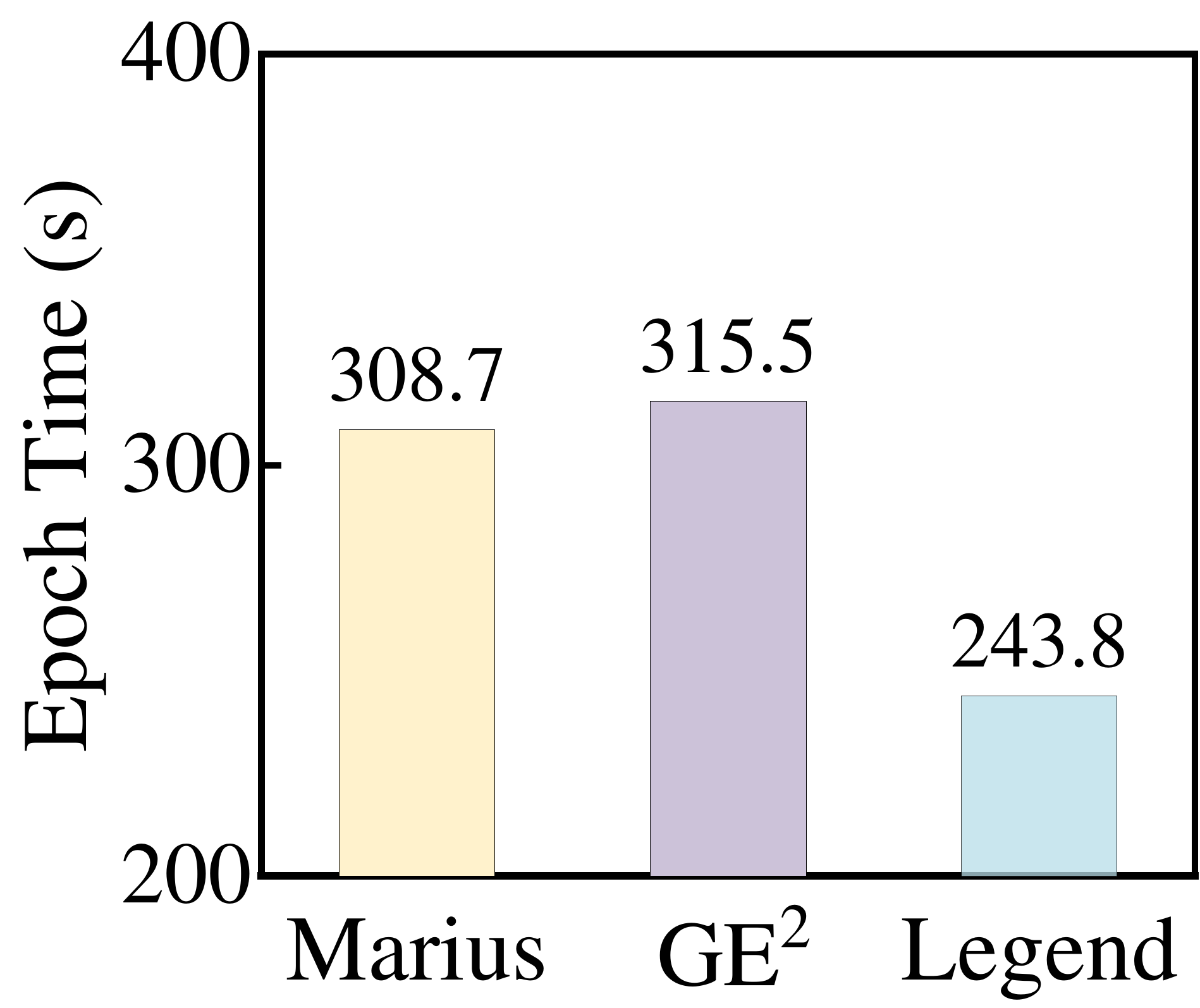}}
    \caption{{Comparison of the average epoch time with 3/4 node partitions residing in memory for {\sf Marius}.}}
    \label{fig:expandBuffer}
\end{figure}

\begin{figure}
\centering 
    \subfigure[\textit{TW}]{
    \includegraphics[width=0.225\textwidth]{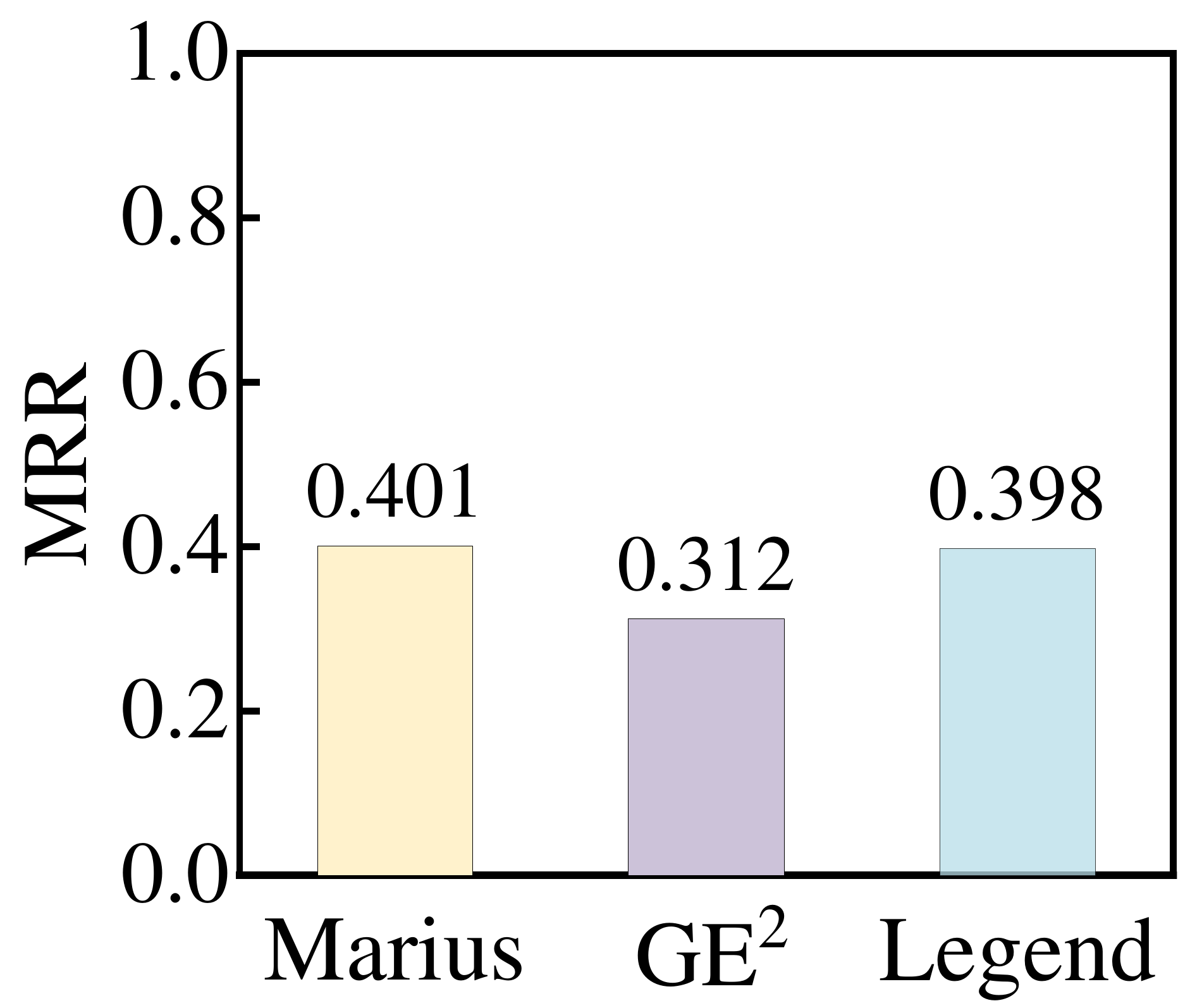}}
    \hspace{2mm}
    \subfigure[\textit{FM}]{
    \includegraphics[width=0.22\textwidth]{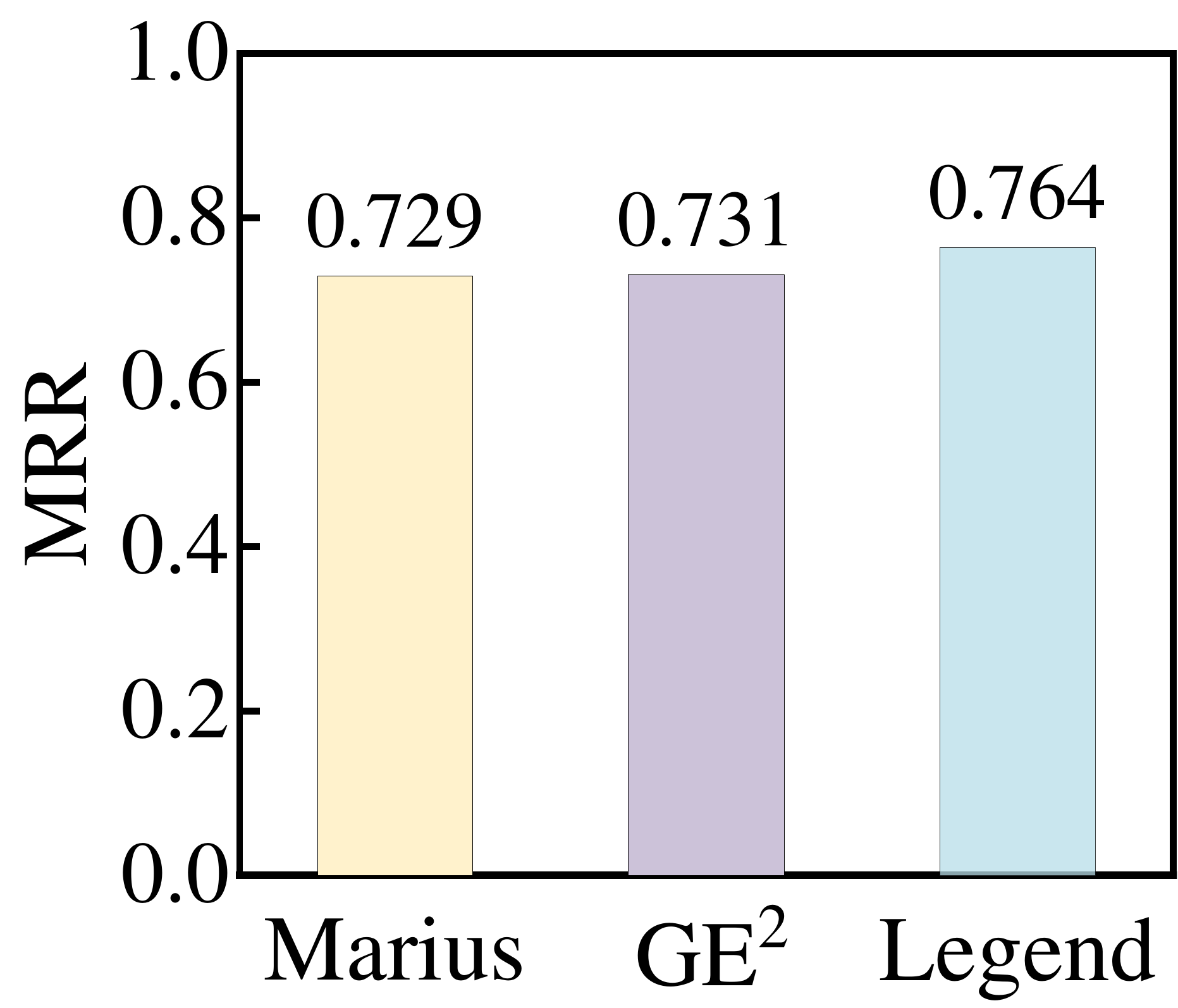}}
    \caption{{Comparison of MRR with 3/4 node partitions in memory for {\sf Marius}.}}
    \label{fig:expandBufferMRR}
\end{figure}

\begin{figure}
\centering 
    \subfigure[\textit{TW}]{
    \includegraphics[width=0.225\textwidth]{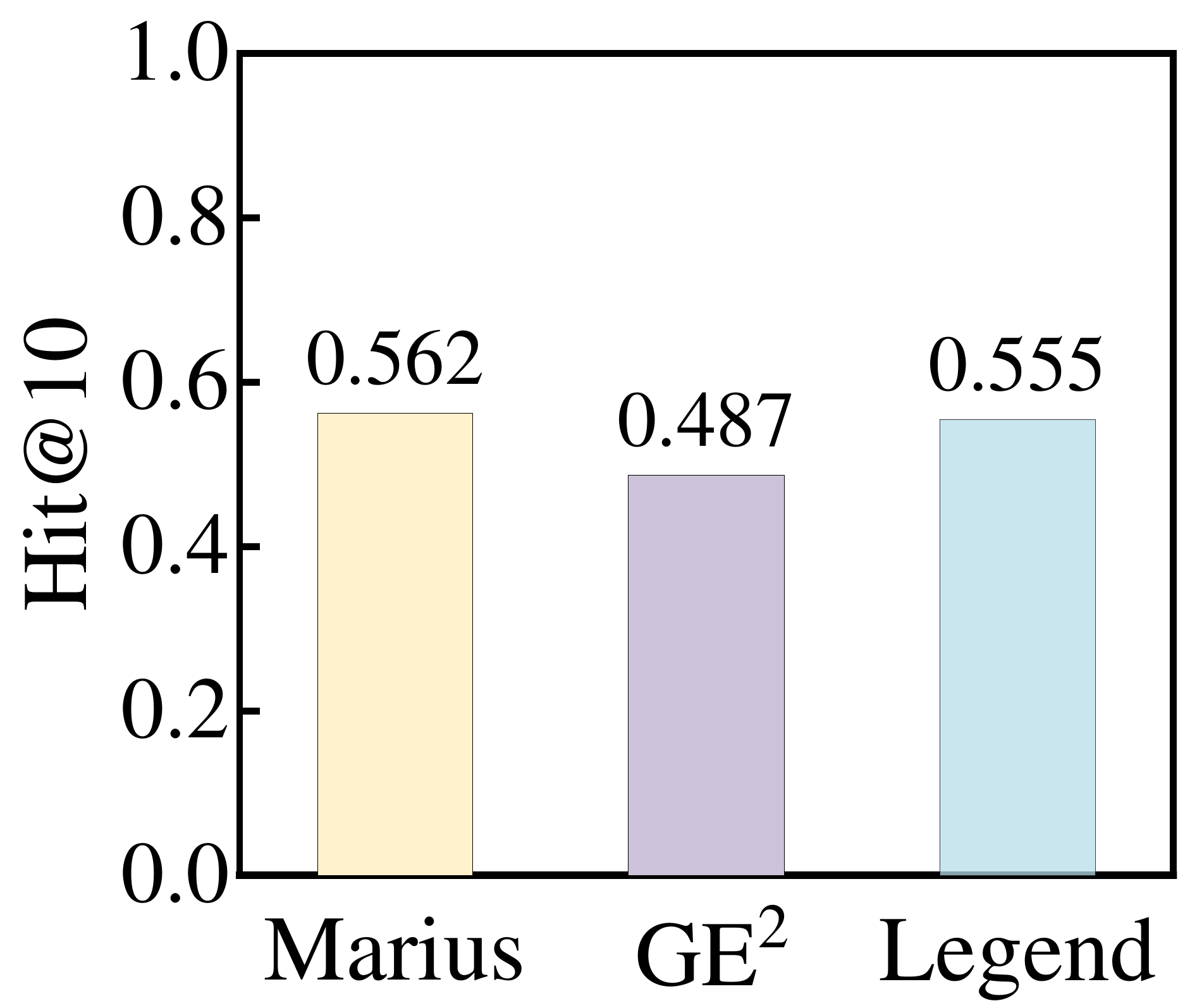}}
    \hspace{2mm}
    \subfigure[\textit{FM}]{
    \includegraphics[width=0.22\textwidth]{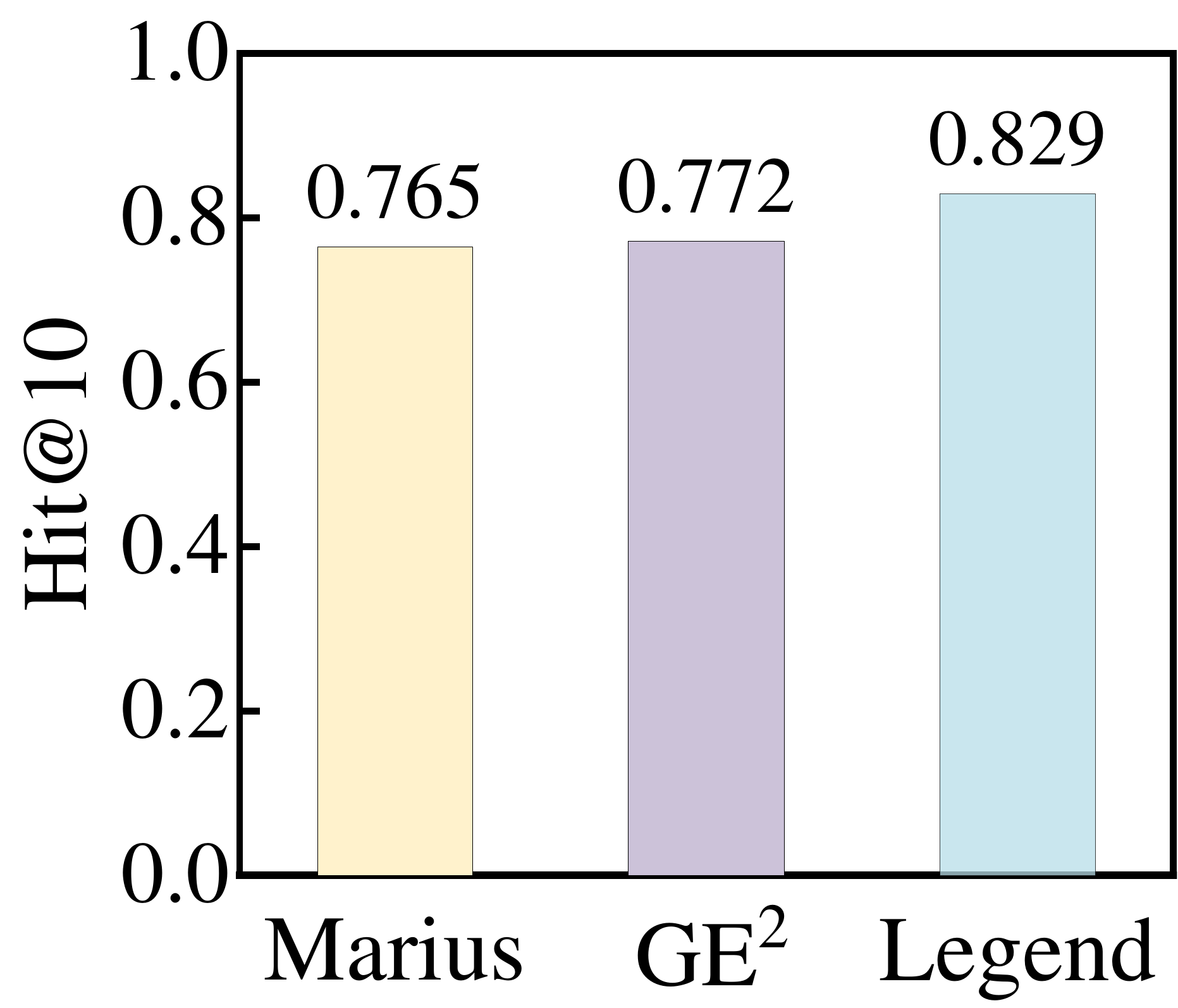}}
    \caption{{Comparison of Hit@10 with 3/4 node partitions in memory for {\sf Marius}.}}
    \label{fig:expandBufferREC}
\end{figure}

Firstly, we evaluate the overall performance of the compared systems. The training overhead for the three systems using a single GPU is reported in Figure \ref{fig:over_performance_time}. The time reported in Figure \ref{fig:over_performance_time} is the average epoch duration. To demonstrate the embedding quality trained by {\sf Legend}, we report the MRR and Hit$@10$ in Figure \ref{fig:over_performance_mrr} and Figure \ref{fig:over_performance_hit}, respectively. 
% Notably, we omit the execution time of {\sf Marius} on \textit{FB} due to a floating point exception with the batch size of $10^5$.
% The reported performance of {\sf Marius} on \textit{FB} in Figure \ref{fig:over_performance_mrr} and \ref{fig:over_performance_hit} is trained with a batch size of 50000. 
{On average, {\sf Legend} achieves a speedup of 2.6$\times$ over {\sf Marius} and 2.0$\times$ over {\sf GE$^2$} while maintaining similar embedding quality. }
In optimal scenarios, {\sf Legend} achieves a remarkable speedup of 4.8$\times$ over {\sf Marius} on \textit{TW} and 2.4$\times$ speedup over {\sf GE$^2$} on \textit{TW} and \textit{FB}. 
{Noted that a larger host-side buffer could further benefit {\sf Marius}, we also expand the buffer size for {\sf Marius} to 6 and 9 on datasets \textit{TW} and \textit{FM}, respectively. As shown in Figure \ref{fig:expandBuffer}, \ref{fig:expandBufferMRR} and \ref{fig:expandBufferREC}, with 3/4 of the partitions retained in memory, {\sf Marius} achieves epoch times of 864.1s (876.0s with 3/8 partitions in memory) on \textit{TW} and 308.7s (382.2s with 3/12 partitions in memory) on \textit{FM}. The improvement is marginal for \textit{TW}, where computation dominates, but significant for \textit{FM}, where I/O is the primary bottleneck. Besides, the comparable MRR and Hit@10 with expanded buffer indicate that the buffer size has no significant impact on model quality.}
It's worth noting that {\sf GE$^2$} stores embeddings and optimizer states in RAM, while {\sf Legend} stores them in the NVMe SSD. 
{\sf Legend} exhibits excellent scalability and efficiency on the four datasets with various volumes. 
This is attributed to the optimization of each hardware component and the workflow that orchestrates each hardware in the heterogeneous system by making full use of its unique characteristics. 
Although the systems load all data into the GPU memory without I/O overhead during embedding learning on datasets \textit{FB} and \textit{LJ}, {\sf Legend} still demonstrates superior training speed. This indicates that the workflow and optimizations on the GPU contribute to accelerating the training process, except for the I/O optimization. 
As shown in Figure \ref{fig:over_performance_mrr} and \ref{fig:over_performance_hit}, the embedding quality trained by {\sf Legend} is comparable with those of {\sf Marius} and {\sf GE$^2$}. Moreover, on \textit{FM}, {\sf Legend} exhibits relatively higher embedding quality. 
As introduced in Section \ref{sec:workflow}, {\sf Legend} loads the entire node partitions into the global memory of the GPU and constructs batches on the GPU, ensuring that the updated embeddings from the last batch can be used immediately in the current batch, avoiding the problem of staleness present in {\sf Marius}. This advantage is more apparent when more nodes are in a node partition. As a result, {\sf Legend} achieves better performance on \textit{FM} in the metrics of MRR and Hit$@10$. 
% The experimental results demonstrate that {\sf Legend} introduces a green computing heterogeneous system with lower cost but achieves better performance, which contributes to both the scientific and environmental fronts. 

On \textit{FM}, the speedup of {\sf Legend} is relatively insignificant, which is due to the graph properties. The number of edges in \textit{FM} is relatively small compared to the number of vertices, where $\frac{|E|}{|V|^2}\approx4\times 10^{-8} < 10^{-7}$. According to Theorem \ref{theo:3}, the I/O overhead between the GPU and NVMe SSD cannot be entirely covered by computation. Furthermore, the I/O times can reach 36 even though the node partition ordering algorithm is applied, exacerbating the I/O overhead. In contrast, the training speed of {\sf Legend} is more significant on \textit{TW}. Using the records in Table \ref{tab:dataset}, $\frac{|E|}{|V|^2}\approx 8\times 10^{-7}>10^{-7}$ on \textit{TW}, which indicates the I/O overhead can be covered by computation. Consequently, this alleviates the bandwidth constraints between the GPU and NVMe SSD, leading to improved performance.

\begin{table}[tbp]
\renewcommand{\arraystretch}{1.2}
    \centering
    \small
    \caption{{Performance comparison with different buffer size. The second column reports the ratio of buffer size to the total partitions (buffer size / total partitions).} }
    % \vspace{-3mm}
    % \begin{tabular}{cccccc}
    \begin{tabular}{p{1.0cm}<{\centering}p{1.54cm}<{\centering}p{1.7cm}<{\centering}p{0.9cm}<{\centering}p{1.0cm}<{\centering} }
        \toprule
         Methods & Ratio & Epoch time & MRR & Hit@10 \\
         \toprule
         \multirow{3}{*}{\sf Marius} & 2 / 8 & 918.0s & 0.418 & 0.579 \\
         % \cline{2-5}
          & 3 / 8 & 876.2s & 0.414 & 0.576  \\
          % \cline{2-5}
          & 4 / 8 & 864.2s & 0.401 & 0.566  \\
          \toprule
         \multirow{3}{*}{\sf Legend} & 3 / 12 & 230.6s & 0.412 & 0.574  \\
          % \cline{2-5}
          & 3 / 8 & 181.0s & 0.398 & 0.555  \\
          % \cline{2-5}
          & 3 / 6 & 171.4s & 0.406 & 0.570  \\
        \bottomrule
    \end{tabular}
    \label{tab:buffersize}
    % \vspace{-2mm}
\end{table}

{To evaluate the effects of various buffer sizes on embedding quality, we conduct additional experiments. Considering the larger capacity of RAM compared to the GPU, we conduct the experiments with {\sf Marius} and {\sf Legend}, which use the same embedding models and negative sampling strategies. The experiments evaluate the model quality and epoch time using different buffer sizes, while holding other parameters constant. As shown in Table \ref{tab:buffersize}, the model quality (MRR \& Hit@10) and buffer size have no definite correlation, indicating that the small buffer size does not substantially hinder the negative sampling diversity or downstream model quality. This is because for each vertex $v\in V$ in the graph, all edges involving $v$ are traversed within one epoch. Regardless of buffer size, every node partition containing vertices $u\in V\backslash\{v\}$ will appear in the buffer at least once together with the partition containing $v$ during that epoch. Thus, each vertex in $V$ has the opportunity to be paired with $v$ as a negative sample, ensuring the diversity of the negative sampling. }

To further validate the GPU utilization improvements from our proposed techniques, we assess GPU utilization on the dataset \textit{TW}. Figure \ref{fig:utilization} depicts the variation in GPU utilization across the three systems over time. {The average GPU utilization of {\sf Legend} is 96.79\%, compared to 60.14\% for {\sf Marius} and 59.85\% for {\sf GE$^2$}, even with prefetching enabled.} As shown in Figure \ref{fig:utilization}, GPU utilization periodically drops to zero for {\sf Marius} and {\sf GE$^2$}, indicating that the GPU is idle during data loading from the disk or RAM. In contrast, GPU utilization of {\sf Legend} remains consistently above 55\%, exceeding 90\% for most of the duration. This enhanced utilization can be attributed to three key factors. First, we offload batch construction and negative sampling to the GPU, which improves the batch construction speed. Second, we prefetch the node embeddings and optimize the bandwidth between the GPU and NVMe SSD, which reduces the data transfer overhead and the GPU idle time. Third, we further optimize the training on the GPU by a customized parallel strategy and data reuse to maximize the resource utilization of the GPU.

\begin{figure}
    \centering
    \includegraphics[width=0.98\linewidth]{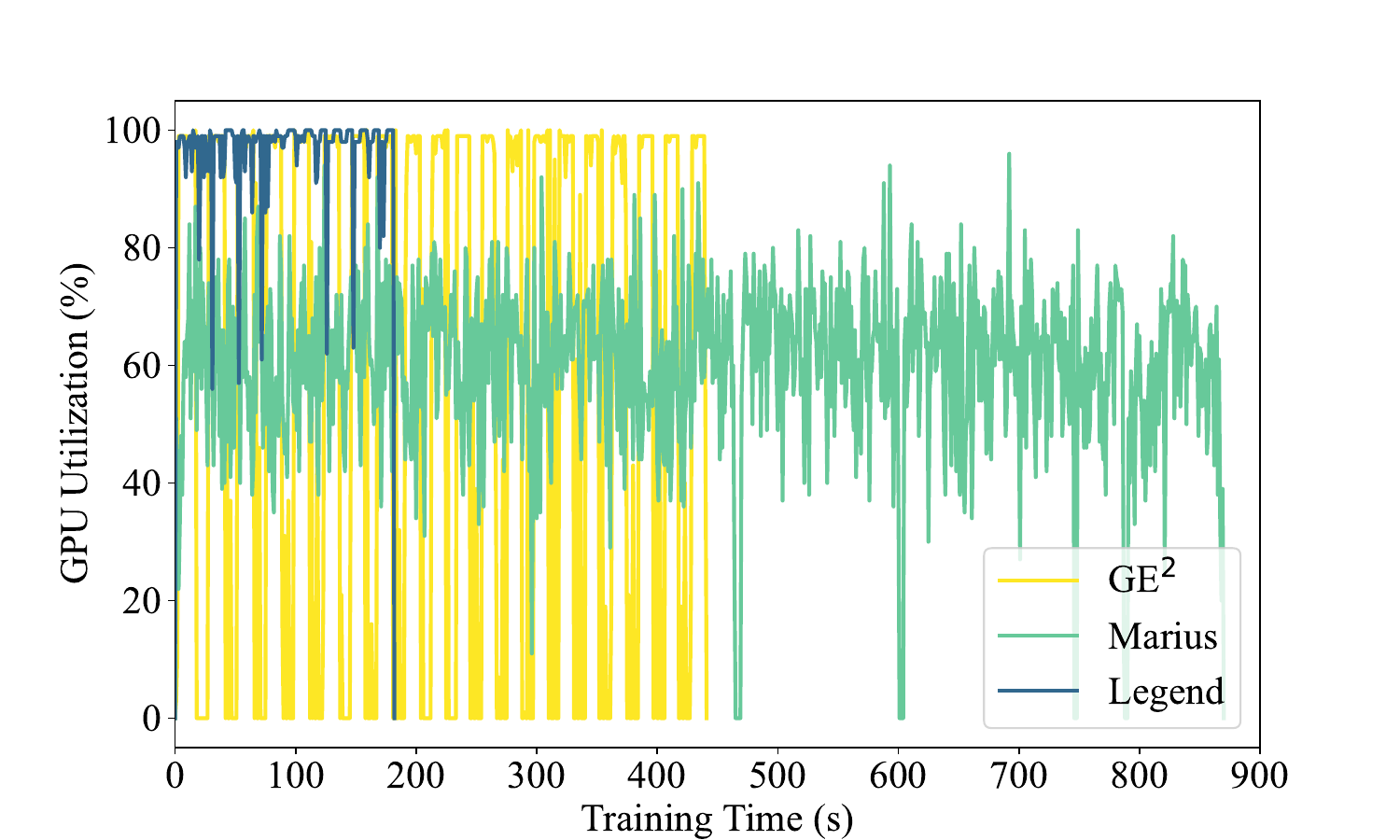}
    % \vspace{-3mm}
    \caption{{GPU utilization of {\sf Legend}, {\sf GE$^2$} and {\sf Marius} on \textit{TW}.} }
    \label{fig:utilization}
    % \vspace{-4mm}
\end{figure}

Please note that the partition order in {\sf Legend} doesn't support prefetching across multiple GPUs, and accessing data from the NVMe SSD to multiple GPUs adversely affects the throughput. We leave the support for multi-GPU graph embedding for future work. Nevertheless, we compare {\sf Legend} on a single GPU with {\sf GE$^2$} on multi-GPU using dataset \textit{TW}. Table \ref{tab:multigpus} presents the experimental results. 
{\sf Legend} exhibits superior performance compared to {\sf GE$^2$}. 
Particularly, when {\sf GE$^2$} employs 4 GPUs, {\sf Legend} still shows comparable performance. 
Note that as the number of GPUs increases, the time overhead of {\sf GE$^2$} does not decrease proportionally. This phenomenon arises from the limited I/O bandwidth between host and device memory, which constrains data transfer rates to multiple GPUs. This issue can be mitigated by employing the NVMe SSD as the primary data storage device. Since the NVMe SSD is much cheaper than RAM, it is feasible to allocate one NVMe SSD per GPU, thereby eliminating competition for limited bandwidth, which represents a promising direction for future research, and further demonstrates the scalability of {\sf Legend}.

\begin{table}[tbp]
\renewcommand{\arraystretch}{1.2}
    \centering
    \small
    \caption{Performance on various number of GPUs on \textit{TW}.}
    % \vspace{-3mm}
    \begin{tabular}{p{1.0cm}<{\centering}p{1.0cm}<{\centering}p{1.1cm}<{\centering}p{1.0cm}<{\centering}p{2.2cm}<{\centering} }
    \toprule
    Systems & GPU(s) & MRR & Hit$@10$ & Time (s) \\
       \toprule
       \multirow{3}{*}{\sf GE$^2$} & 1  & 0.312 & 0.487 & 439.3 (2.43$\times$) \\
       \cline{2-5}
        & 2 & 0.299 & 0.473 & 315.2 (1.74$\times$)\\
        \cline{2-5}
        & 4 & 0.284 & 0.456 & 192.5 (1.06$\times$)\\
        \toprule
        {\sf \textbf{Legend}} & \textbf{1} & \textbf{0.398} & \textbf{0.555} & \textbf{181.0} \\ 
        \bottomrule
    \end{tabular}
    % \vspace{-3mm}
    \label{tab:multigpus}
\end{table}

\subsection{Evaluations of the Workflow in Legend}

\begin{figure}
\centering 

    \subfigure[\textit{{FB}}]{
    \includegraphics[width=0.21\textwidth]{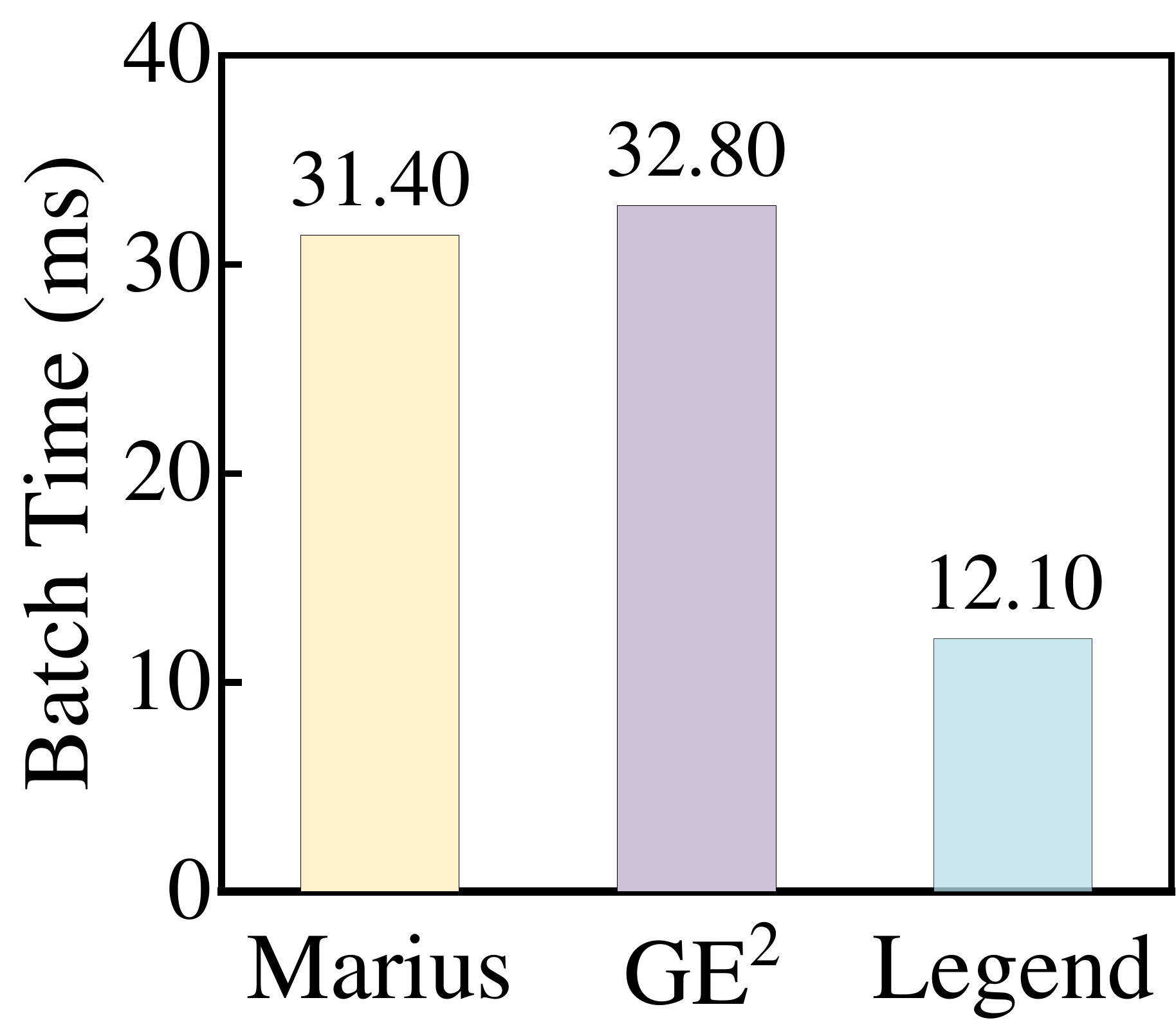}}
    \hspace{2mm}
    \subfigure[\textit{LJ}]{
    \includegraphics[width=0.21\textwidth]{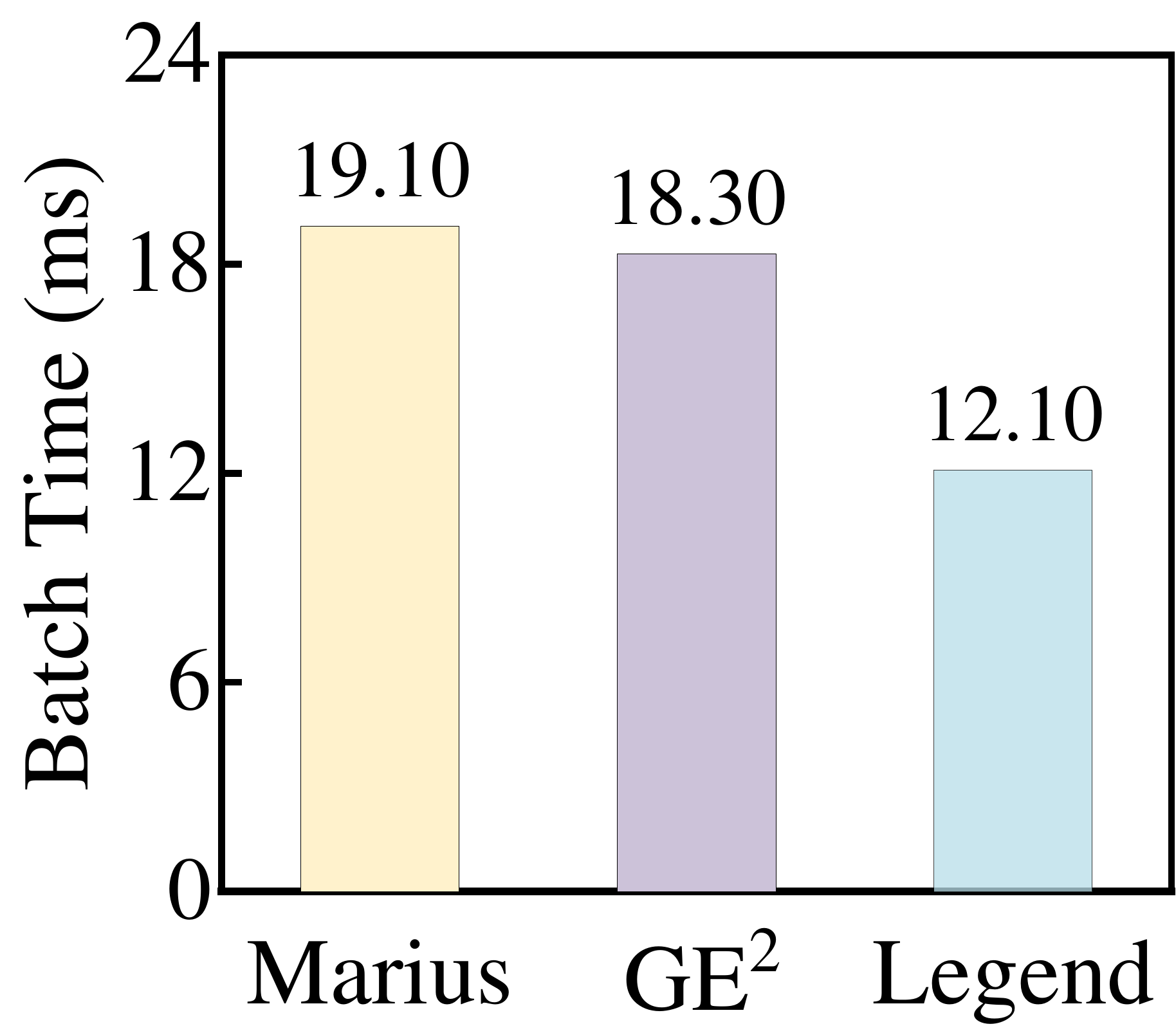}}
    \hspace{2mm}
    \subfigure[\textit{TW}]{
    \includegraphics[width=0.22\textwidth]{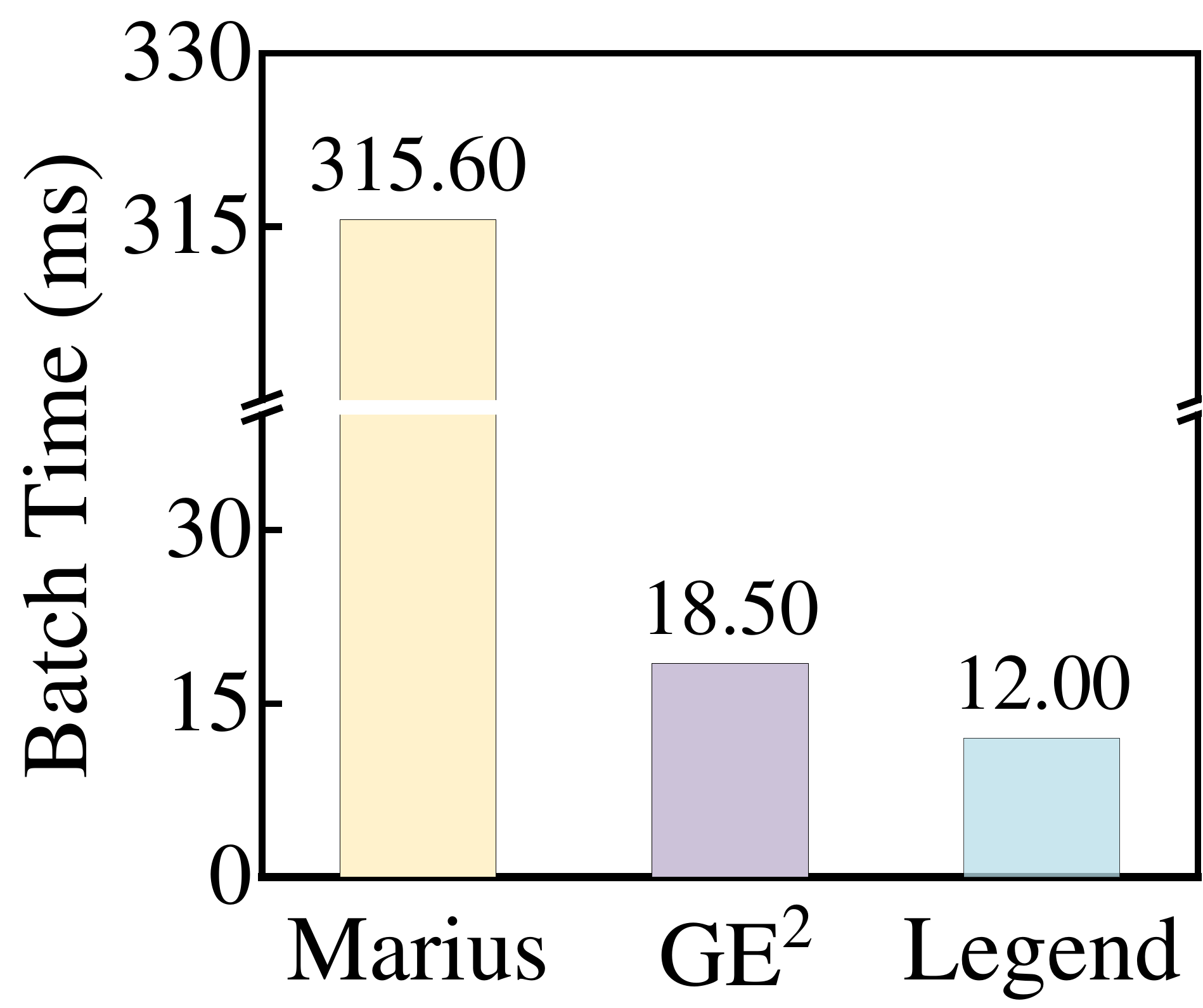}}
    \hspace{2mm}
    \subfigure[\textit{FM}]{
    \includegraphics[width=0.22\textwidth]{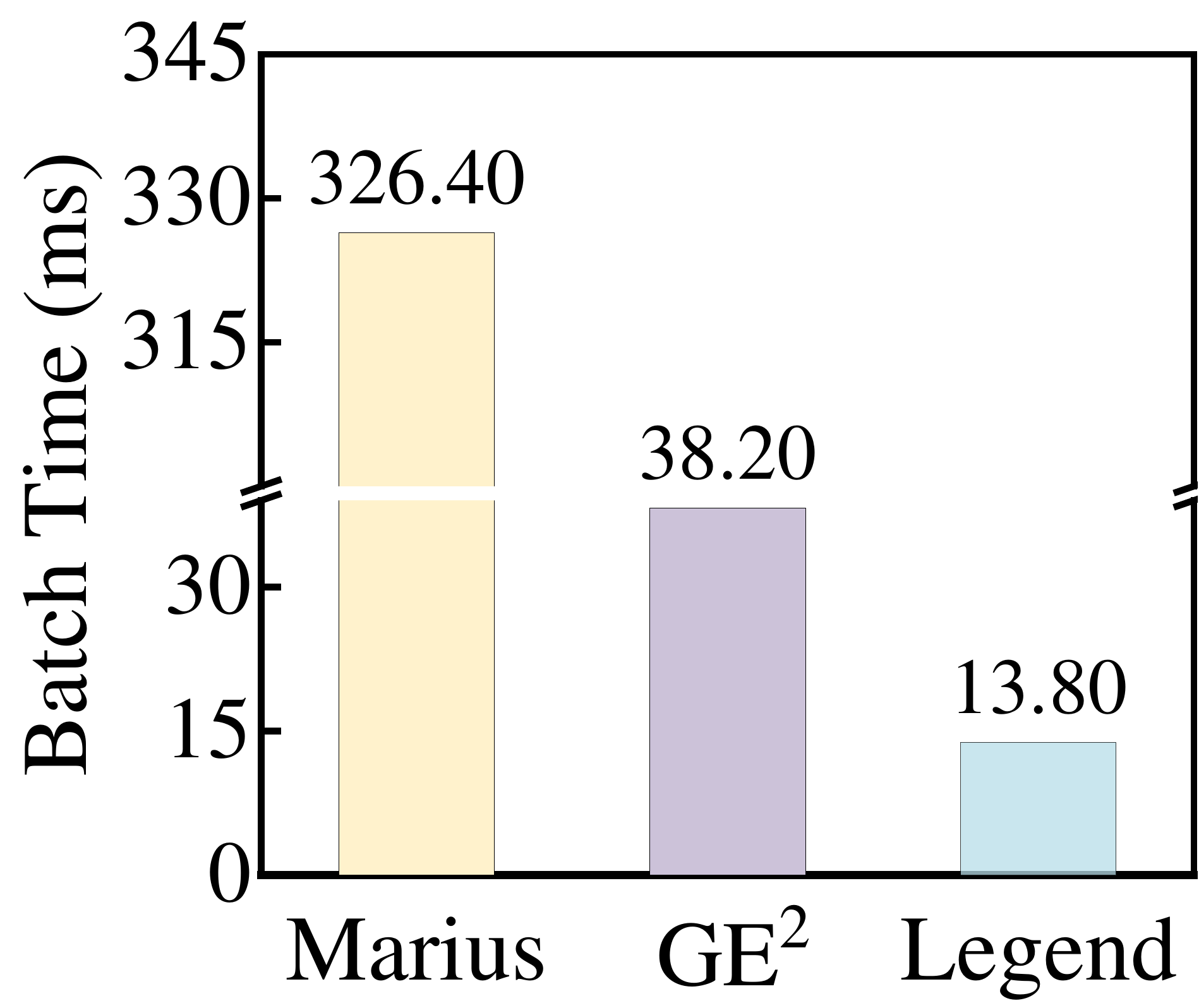}}

    \caption{Average batch time of the compared systems.}
    \label{fig:batch_time}
\end{figure}

To demonstrate the superiority of our proposed workflow introduced in Section \ref{sec:workflow}, we first evaluate the average batch time across the three systems. Batch time encompasses the entire process for a batch, including batch construction, batch computation, and embedding updates. The results are presented in Figure \ref{fig:batch_time}. 
{\sf Legend} exhibits superior performance for each batch compared to {\sf GE$^2$} and {\sf Marius}. On average, {\sf Legend} achieves a speedup of 2.13$\times$ and 17.18$\times$ over {\sf GE$^2$} and {\sf Marius}, respectively. 
Notably, GE$^2$ exhibits higher overhead on datasets \textit{FB} and \textit{FM}. This is because the adopted embedding model is {\sf ComplEx}, as outlined in Section \ref{subsec:setting}. This embedding model includes more complex computing operations, which are harmful to computing efficiency. 
In {\sf Legend}, we design a generalized parallel execution strategy for graph embedding models in Section \ref{subsec:optGPU}, significantly enhancing the efficiency. A detailed analysis of our proposed optimization on GPU computing will be discussed in Section \ref{subsec:eva_gpu}.
On datasets \textit{TW} and \textit{FM}, {\sf Marius} constructs batches on the CPU and subsequently transfers them to the GPU, resulting in considerable communication overhead. Therefore, the average batch time of {\sf Marius} is more than 20$\times$ over {\sf Legend} on datasets \textit{TW} and \textit{FM}, as shown in Figure \ref{fig:batch_time}(c) and (d). In contrast, both {\sf Legend} and {\sf GE$^2$} offload the tasks of batch construction and negative sampling to the GPU, which achieves significant speedups.

% \begin{table}[tbp]
%     \renewcommand{\arraystretch}{1.2}
%     \centering
%     \small
%     \caption{Average batch time of the compared systems. }
%     % \vspace{-3mm}
%     \begin{tabular}{p{1.0cm}<{\centering}p{1.3cm}<{\centering}p{1.3cm}<{\centering}p{1.3cm}<{\centering}p{1.4cm}<{\centering} }
%     \toprule
%         Systems & \textit{FB} & \textit{LJ} & \textit{TW} & \textit{FM} \\
%     \toprule
%         {\sf Marius} & - & 19.1ms & 315.6ms & 326.4ms \\
%         {\sf GE$^2$} & 32.8ms & 18.3ms & 18.5ms & 38.2ms \\
%         {\sf\textbf{Legend}} & \textbf{12.1ms} & \textbf{12.1ms} & \textbf{12.0ms} & \textbf{13.8ms} \\
%     \bottomrule
%     \end{tabular}
%     % \vspace{-3mm}
%     \label{tab:batchtime}
% \end{table}

However, the batch time reported in Figure \ref{fig:batch_time} only partially reflects the advantages of our workflow. To conduct a comprehensive evaluation, we omit all optimization modules that can be removed, including the modules of GPU optimization, edge bucket iteration order, and prefetching mechanism. The remaining components can completely reflect the performance of the workflow proposed in Section \ref{sec:workflow}. The epoch times on datasets \textit{FB}, \textit{LJ}, \textit{TW} and \textit{FM} are 0.12s, 13.06s, 291.89s and 331.40s, respectively. Compared with the epoch time in Figure \ref{fig:over_performance_time}, it still exhibits superiority over {\sf Marius} and {\sf GE$^2$} on most datasets. 

\subsection{Prefetch-friendly Order}

\begin{table}[tbp]
\renewcommand{\arraystretch}{1.2}
    \centering
    \small
        \caption{Epoch time of {\sf Legend} with and without prefetching.}
        % \vspace{-3mm}
    \begin{tabular}{p{1.4cm}<{\centering}p{2.4cm}<{\centering}p{1.5cm}<{\centering}p{1.4cm}<{\centering} }
     \toprule
         Graphs & w/o Prefetching & Prefetching & Speedup\\
    \midrule
        \textit{TW} & 235.0s & 181.0s & 29.83\%\\
        \textit{FM} & 271.2s & 243.8s & 11.24\%\\
    \bottomrule
    \end{tabular}
    \label{tab:prefetch}
    % \vspace{-3mm}
\end{table}

Prefetching is one of the key strategies that alleviates the limited bandwidth between the NVMe SSD and the GPU in {\sf Legend}. To evaluate the effectiveness of prefetching, we compare the performance of {\sf Legend} with and without prefetching on \textit{TW} and \textit{FM}. The results are reported in Table \ref{tab:prefetch}. 
{\sf Legend} benefits more from prefetching on \textit{TW} than on \textit{FM}, which can be attributed to the properties of the graphs. As calculated in subsection \ref{subsec:overall}, $\frac{|E|}{|V|^2}\approx4\times 10^{-8}$ for \textit{FM} while $\frac{|E|}{|V|^2}\approx8\times 10^{-7}$ for \textit{TW}. The sparsity of \textit{FM} results in an incomplete covering of I/O overhead, leading to reduced benefits from prefetching. Nonetheless, prefetching remains effective on \textit{FM}, demonstrating the scalability of the prefetching strategy in {\sf Legend}. Moreover, the edge distribution of real-world graph datasets such as \textit{TW} and \textit{FM} is always skewed, and the degree is in a power-law distribution, which is not friendly to the prefetching strategy. Even though in this scenario, the results in Table \ref{tab:prefetch} demonstrate the effectiveness of the prefetching strategy, as it can significantly reduce the computation and I/O overhead for dense edge buckets. {Notably, while {\sf Legend} achieves an overall speedup of 4.8$\times$, the prefetching gain of 29.8\% on \textit{TW} highlights that its breakthrough primarily stems from architectural innovations. Specifically, direct GPU-SSD access and computational optimization contribute more substantially to the performance improvement, whereas prefetching serves as a complementary optimization. }

\begin{figure}
\centering 

    \subfigure[\textit{TW}]{
    \includegraphics[width=0.22\textwidth]{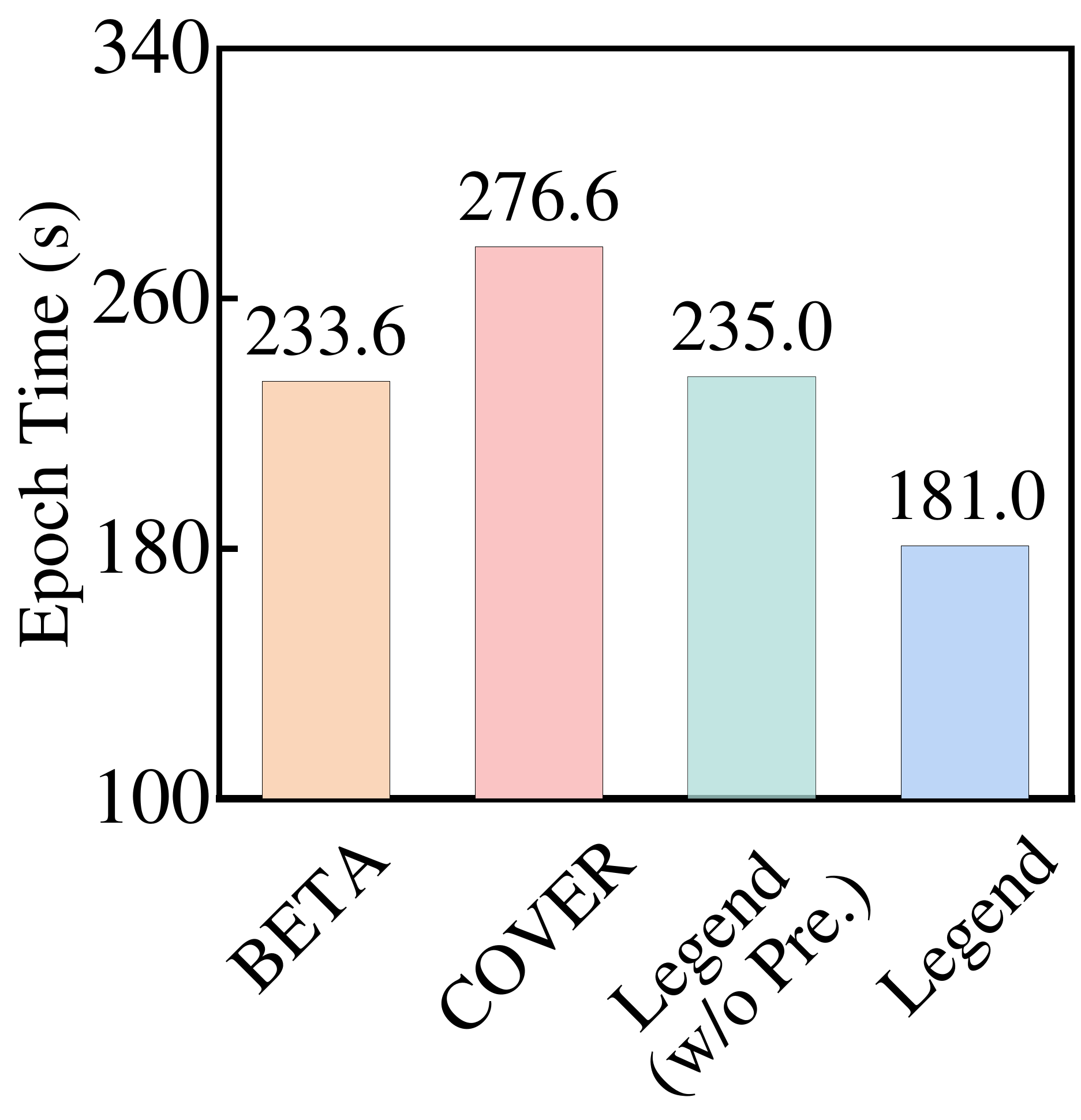}}
    \hspace{2mm}
    \subfigure[\textit{FM}]{
    \includegraphics[width=0.22\textwidth]{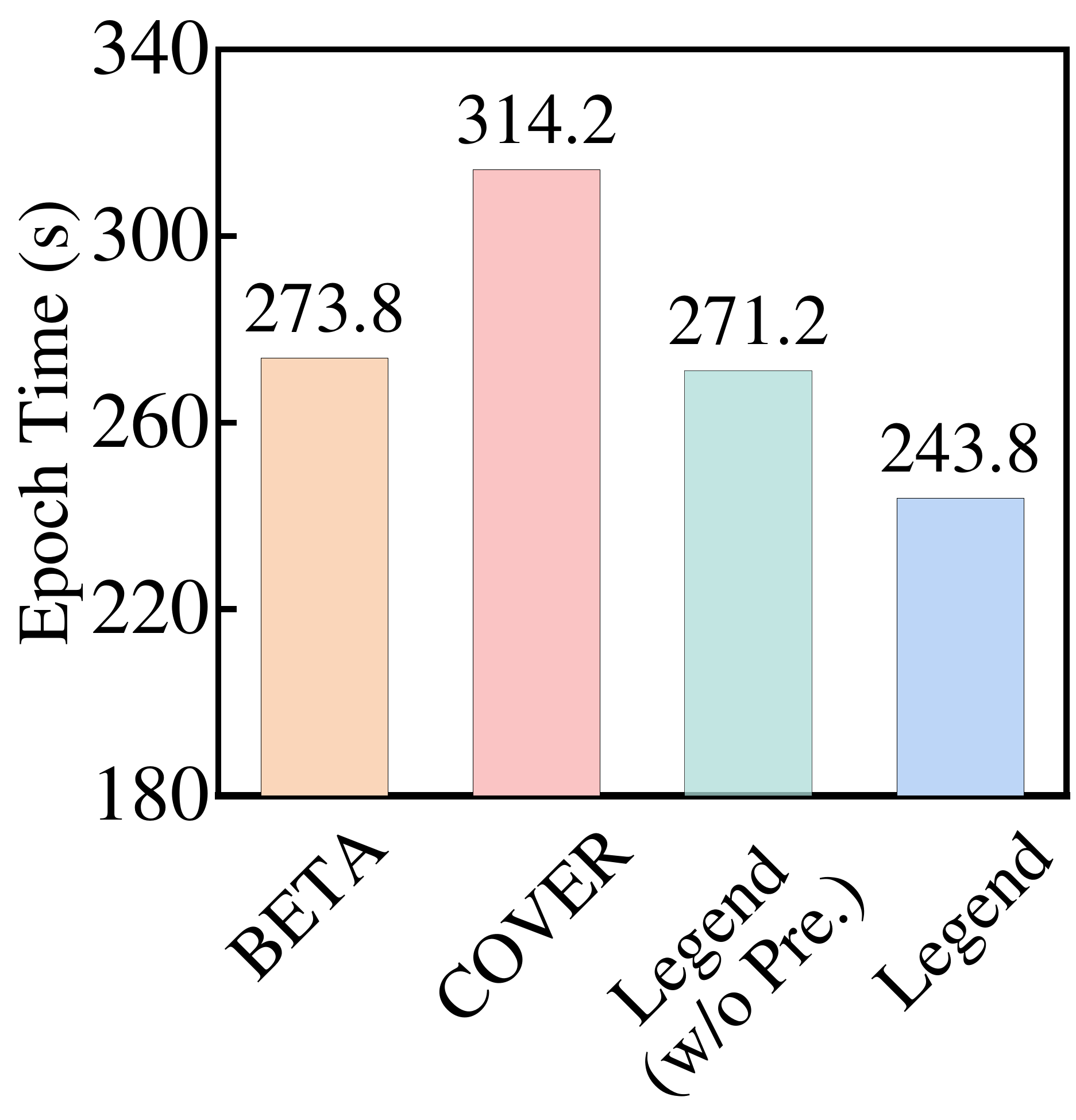}}

    \caption{Epoch time of replacing the edge buckets iterating order in {\sf Legend} with {\sf BETA} and {\sf COVER}. {\sf Legend} (w/o Pre.) denotes the epoch time that computing edge buckets using the order produced by {\sf Legend} but without prefetching.}
    \label{fig:order_time}
\end{figure}

Next, we compare the I/O order algorithms proposed in the state-of-the-art systems by applying the order used in {\sf Marius} named {\sf BETA}, and the order proposed in {\sf GE$^2$} named {\sf COVER} to {\sf Legend} to demonstrate the effectiveness of our prefetch-friendly order. Using the same settings as in subsection \ref{subsec:overall}, {\sf BETA} and {\sf COVER} divide the node embeddings into 8 and 16 partitions for \textit{TW}, and into 12 and 16 partitions for \textit{FM}. {\sf BETA} has the buffer capacity of 3, while {\sf COVER} has a buffer capacity of 4. The results are summarized in Figure \ref{fig:order_time}. 
Recall that the prefetch-friendly order generating algorithm aims to generate an order that supports prefetching while minimizing I/O times. A comparison among {\sf BETA}, {\sf COVER}, and {\sf Legend} without prefetching reveals the comparable I/O overhead between {\sf Legend}(w/o Pre.) and {\sf BETA}, which highlights the I/O efficiency of our proposed order. 
Although {\sf BETA} has I/O times close to the theoretical lower bound, its design is not conducive to prefetching, as discussed in Section \ref{sec:order}. In contrast, the ordering algorithm used in {\sf Legend} exhibits similar I/O overhead while supporting effective prefetching. Additionally, {\sf COVER} used in {\sf GE$^2$} has higher I/O overhead when applied to {\sf Legend}. This is because it is specifically designed for training with multiple GPUs, which is not optimized for single-GPU scenarios. 

{To further analyze the superiority of our proposed node partition loading order, we summarize the I/O times (counts of partition transfers) between the storage and the computing device, and calculate the communication volume for the three ordering algorithms in Table \ref{tab:iotimes}. The communication volume uses the same units as the partition size $S$. Since {\sf COVER} can only accommodate partition numbers of $4^L$, we report its metrics when the number of partitions is 16.} 
As shown in Table \ref{tab:iotimes}, {\sf BETA} and {\sf Legend} have similar I/O times and communication volumes within the evaluated partitions. This indicates that our proposed partition loading order achieves comparable I/O times with {\sf BETA}, which has I/O times close to the theoretical lower bound. However, {\sf BETA} does not support embedding prefetching as illustrated in Section \ref{sec:order}. In contrast, the loading order proposed in {\sf Legend} supports efficient embedding prefetching while achieving low I/O times.   
Additionally, {\sf COVER} is adopted by {\sf GE$^2$} to overcome the issue of I/O overhead within multiple GPUs. It is not optimized for a single GPU. In contrast, the communication volume remains unchanged with the increasing number of GPUs. Devising a prefetching-friendly and low-overhead ordering algorithm that supports multiple GPUs like {\sf COVER} is left to future work.

% \begin{table}[tbp]
% \renewcommand{\arraystretch}{1.2}
%     \centering
%     \small
%     \caption{Epoch time of replacing the edge buckets iterating order in {\sf Legend} with {\sf BETA} and {\sf COVER}. }
%     % \vspace{-3mm}
%     \begin{tabular}{p{0.8cm}<{\centering}p{0.7cm}<{\centering}p{0.7cm}<{\centering}p{3.45cm}<{\centering}p{0.7cm}<{\centering} }
%          \toprule
%          Graphs & {\sf BETA} & {\sf COVER} & {\sf Legend}(w/o Prefetching) & {\sf \textbf{Legend}}\\
%         \toprule
%         \textit{TW} & 233.6s & 276.6s & 235.0s & \textbf{181.0s} \\
%         \textit{FM} & 273.8s & 314.2s & 271.2s & \textbf{243.8s} \\
%     \bottomrule
%     \end{tabular}
%     \label{tab:order}
%     % \vspace{-2mm}
% \end{table}

\begin{table}[tbp]
\renewcommand{\arraystretch}{1.2}
\belowrulesep=0pt
\aboverulesep=0pt
    \caption{I/O times and communication volume of different ordering algorithms with various numbers of partitions. $S$ denotes the size of node embeddings and optimizer states. }
    % \vspace{-2mm}
    \centering
    \small
    % \begin{tabular}{c|ccc|ccc}
    \begin{tabular}{p{0.5cm}<{\centering}|p{0.8cm}<{\centering}p{0.8cm}<{\centering}p{0.9cm}<{\centering}|p{0.8cm}<{\centering}p{0.8cm}<{\centering}p{0.9cm}<{\centering}}
    \toprule
        \multirow{2}{*}{Par.} & \multicolumn{3}{c|}{I/O times} & \multicolumn{3}{c}{Communication volume} \\
    \cline{2-7}
        & {\sf BETA} & {\sf COVER} & {\sf \textbf{Legend}}  & {\sf BETA} & {\sf COVER} & {\sf \textbf{Legend}}  \\
    \midrule
        6 & 8 & - & {8}  & 1.33S & - & {1.33S} \\
        8 & 15 & - & {16} & 1.88S & - & {2S} \\
        10 & 24 & - & {24}  & 2.4S & - & {2.4S} \\
        12 & 34 & - & {36}  & 2.83S & - & {3S} \\
        14 & 48 & - & {50}  & 3.43S & - & {3.57S} \\
        16 & 63 & 80 & {66} & 3.94S & 5S & {4.13S} \\
    \bottomrule
    \end{tabular}
    \label{tab:iotimes}
    % \vspace{-3mm}
\end{table}

\begin{figure}
\centering 
    \subfigure[Read]{
    \includegraphics[width=0.225\textwidth]{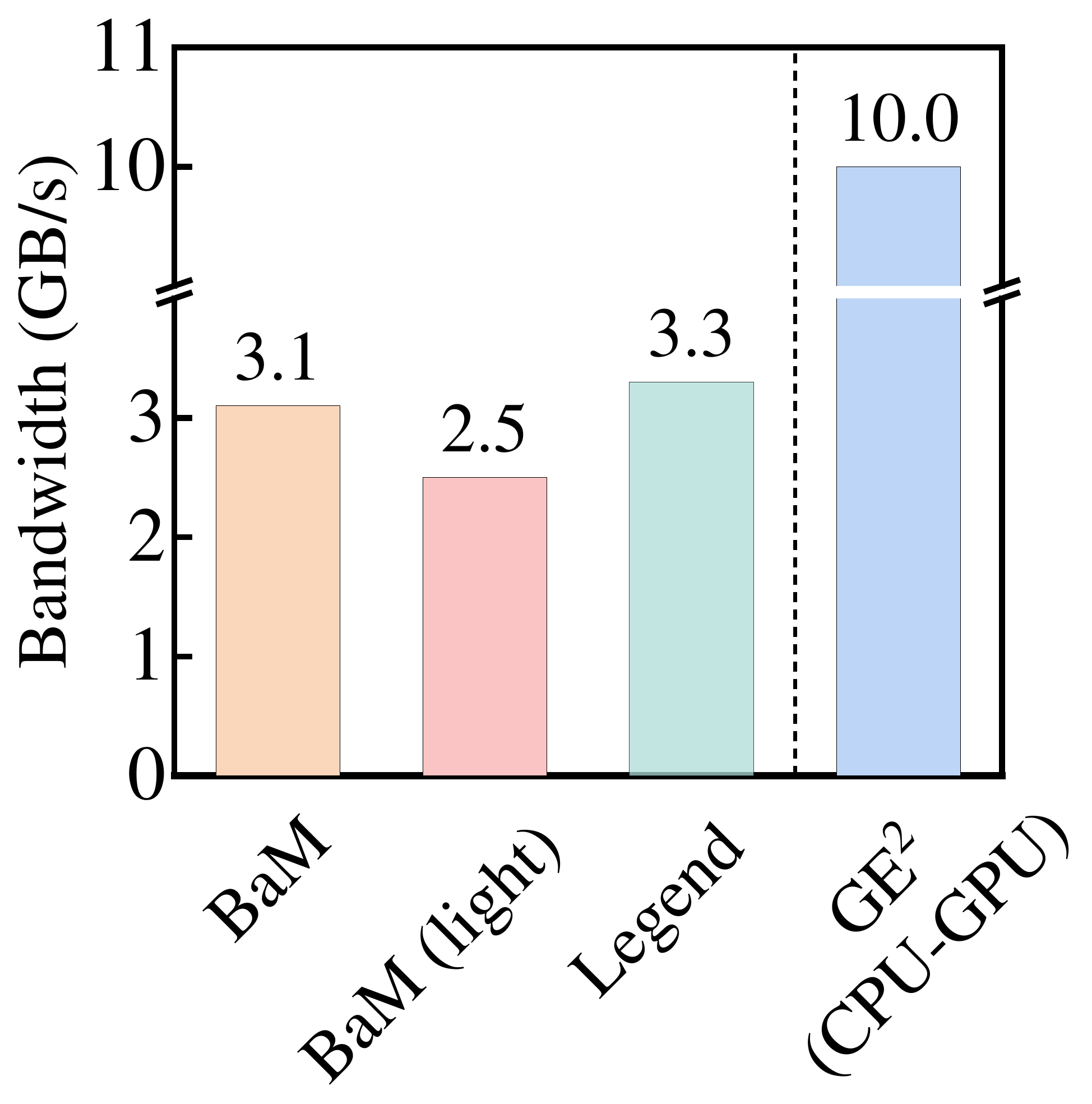}}
    \hspace{1mm}
    \subfigure[Write]{
    \includegraphics[width=0.225\textwidth]{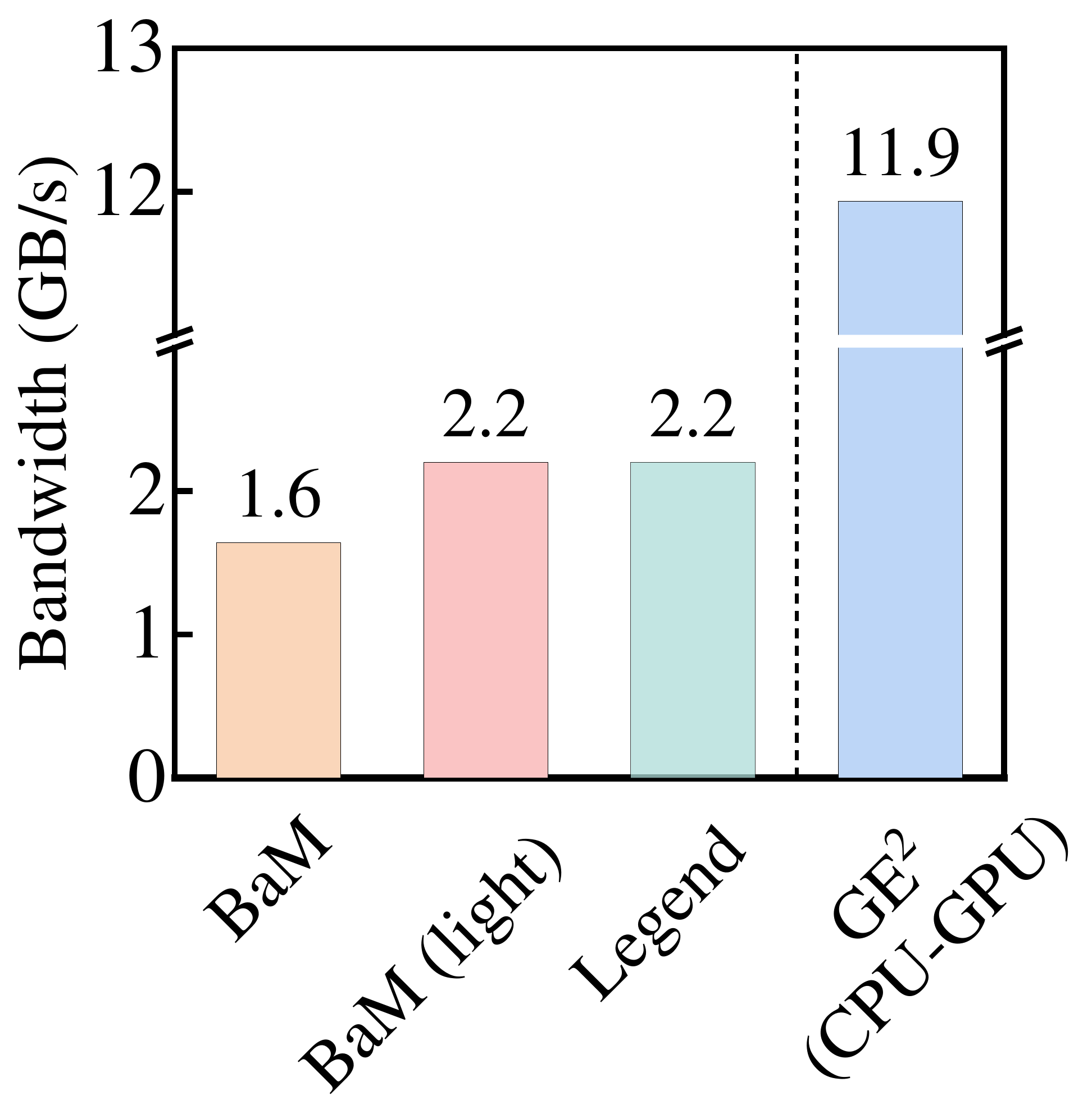}}

    \caption{{Bandwidth of GPU direct access to SSD on test data with volume of 128GB. {\sf GE$^2$} is reported as the bandwidth between CPU and GPU.} }
    \label{fig:bandwidth}
\end{figure}

\subsection{GPU Direct Access to NVMe SSD}
As discussed in Section \ref{subsec:optNVMe}, we aim to design a GPU direct access strategy to NVMe SSD that achieves high performance as well as supports the simultaneous execution of data access and batch calculation kernels. To this end, we separately evaluate the bandwidth of read/write and the ability to simultaneously execute together with the calculation kernel. 

{We first compare the bandwidth of {\sf Legend} with the state-of-the-art GPU direct access method named {\sf BaM}. Moreover, we also evaluate the bandwidth between the CPU and GPU achieved by {\sf GE$^2$}. For our evaluation, the test data volume is set to 128 GB.} We employ 4096 thread blocks for {\sf BaM}, each containing 32 threads, while for {\sf Legend}, we employ 8 thread blocks, each containing 32 threads. We also evaluate {\sf BaM} with the same settings as {\sf Legend}, referred to as {\sf BaM} (light). 
As presented in Figure \ref{fig:bandwidth}, {\sf Legend} achieves comparable I/O bandwidth to {\sf BaM}. Notably, the writing bandwidth of {\sf Legend} outperforms {\sf BaM} due to its high parallel queue management mechanism and low-cost doorbell ringing strategy. 
Under the same settings, {\sf Legend} achieves higher I/O bandwidth than {\sf BaM} (light). This is because we propose a lightweight NVMe SSD driver in Section \ref{subsec:optNVMe}, which utilizes fewer resources on the GPU while achieving better performance. {The results demonstrate the effectiveness of our novel strategies—lock-free batch enqueuing, fully coalesced doorbell writes, and batch polling, avoiding the complex locks and doorbell operations in typical GPU-SSD direct access drivers.} 
For {\sf GE$^2$}, the communication bandwidth between the CPU and GPU is over 3 times higher than that between the GPU and NVMe SSD. This gap can be mitigated by carefully prefetching data, as demonstrated in our previous experiments. 
% \begin{table}[tbp]
%     \centering
%     \small
%     \renewcommand{\arraystretch}{1.2}
%     \belowrulesep=0pt
%     \aboverulesep=0pt
%     \caption{Bandwidth of GPU direct access to SSD (GB/s). {\sf GE$^2$} is reported as the bandwidth between CPU and GPU. }
%     % \vspace{-2mm}
%     % \begin{tabular}{cccc|c}
%     \begin{tabular}{p{1.62cm}<{\centering}p{0.7cm}<{\centering}p{1.75cm}<{\centering}p{1.2cm}<{\centering}|p{1.0cm}<{\centering} }
%     \toprule
%         Access type & {\sf BaM} & {\sf BaM} (light) & {\sf \textbf{Legend}} & {\sf GE$^2$} \\
%     \toprule
%         Read & 3.20 & 2.59 & \textbf{3.19} & 10.05  \\
%         Write& 1.64 & 2.05 & \textbf{2.24} & 11.93  \\
%     \bottomrule
%     \end{tabular}
%     \label{tab:bandwidth}
%    % \vspace{-2mm}
% \end{table}

\begin{figure}
    \centering
    \includegraphics[width=0.45\textwidth]{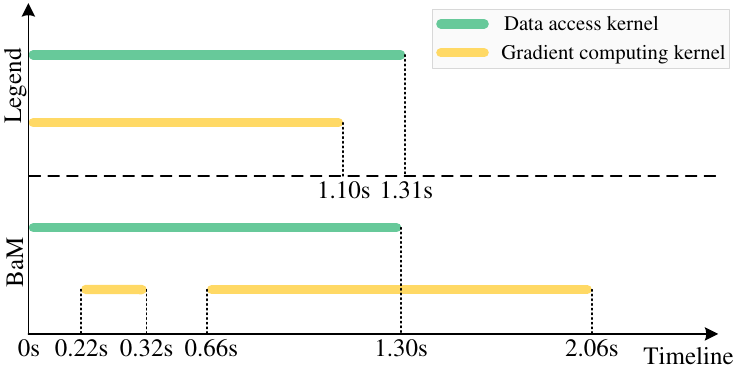}
    \caption{Timeline of simultaneous execution of kernels. }
    \label{fig:timeline}
    \vspace{-2mm}
    % \vspace{-3mm}
\end{figure}
% \begin{table}[tbp]
%     \centering
%     \belowrulesep=0pt
%     \aboverulesep=0pt
%     \caption{Bandwidth comparison of various systems (GB/s). }
%     \vspace{-3mm}
%     \begin{tabular}{c|c|ccc}
%     \toprule
%         Access type & {\sf GE$^2$} & {\sf BaM} & {\sf BaM} (light) & {\sf \textbf{Legend}} \\
%     \toprule
%         Read & 10.05 & 3.20 & 2.59 & \textbf{3.19} \\
%         Write & 11.93 & 1.64 & 2.05 & \textbf{2.24} \\
%     \bottomrule
%     \end{tabular}
%     \label{tab:bandwidth}
%     \vspace{-2mm}
% \end{table}

To evaluate the capability of the GPU-SSD direct data access kernel to execute concurrently with the batch computing kernel, we run both kernels simultaneously by using CUDA streams. For the batch computing kernel, we fix the batch size at $10^5$ and execute the batch computation 100 times. The execution timelines are depicted in Figure \ref{fig:timeline}. 
In {\sf Legend}, both kernels can be executed concurrently with minimal performance degradation. In contrast, the data access kernel in {\sf BaM} occupies a significant amount of resources, seriously affecting the execution performance of the batch computing kernel. 
By considering the specific workload of graph embedding learning, we have simplified the complexity of the GPU-SSD direct access driver and designed novel direct access strategies, resulting in a lightweight but high-performance GPU-NVMe SSD direct access kernel.

\begin{figure}
\centering 

    \subfigure[\textit{{FB}}]{
    \includegraphics[width=0.22\textwidth]{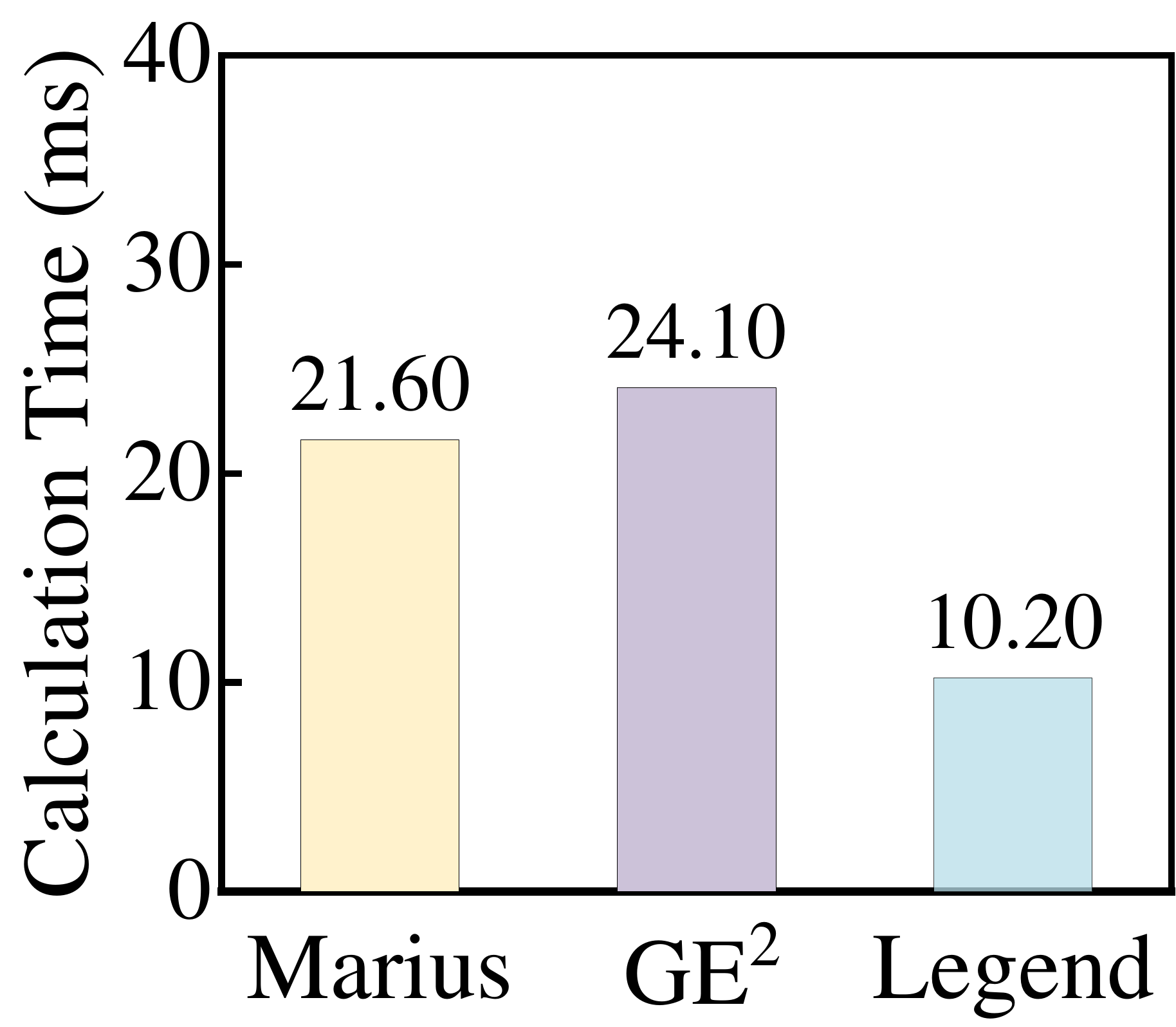}}
    \hspace{2mm}
    \subfigure[\textit{LJ}]{
    \includegraphics[width=0.22\textwidth]{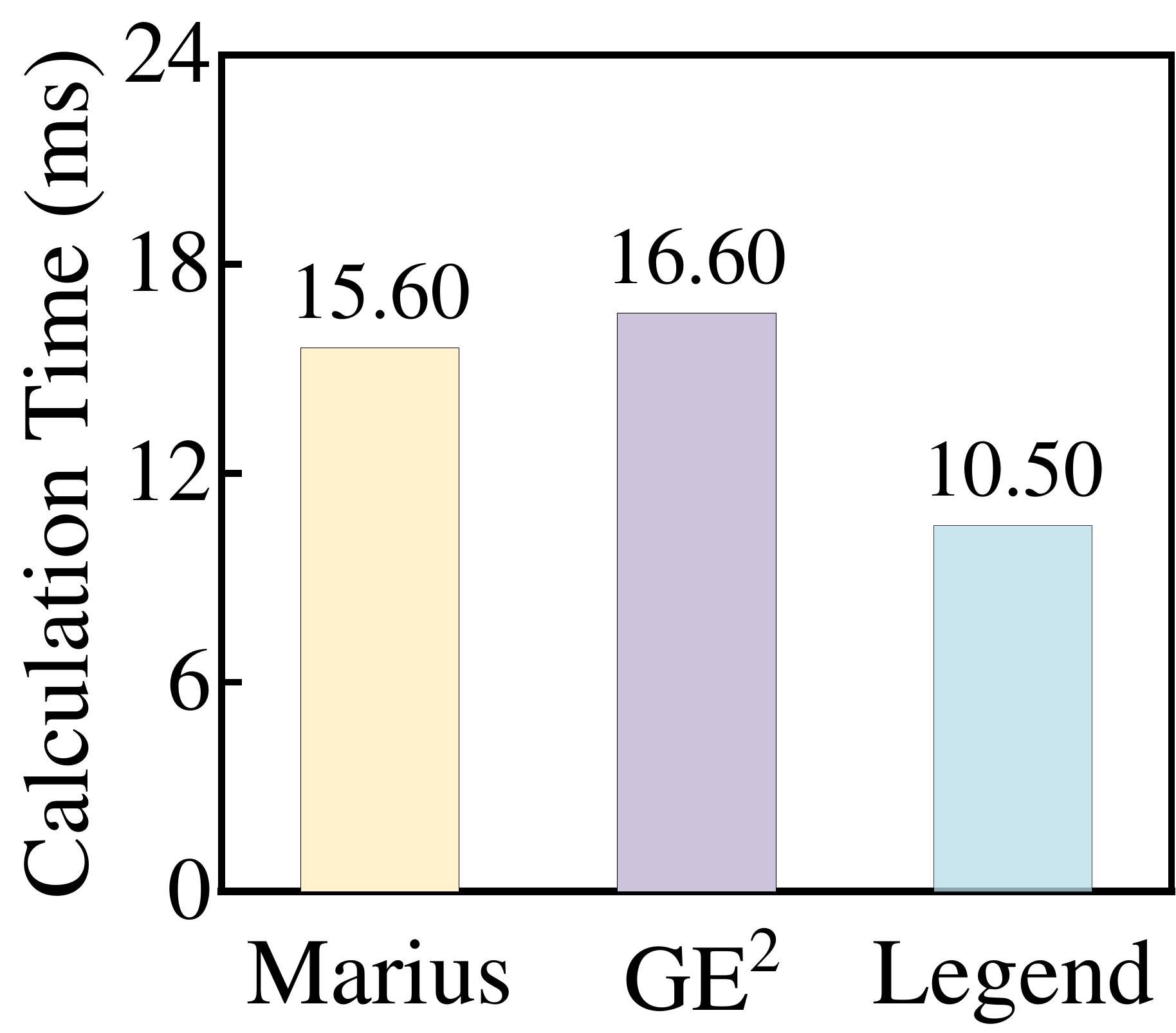}}
    \hspace{2mm}
    \subfigure[\textit{TW}]{
    \includegraphics[width=0.22\textwidth]{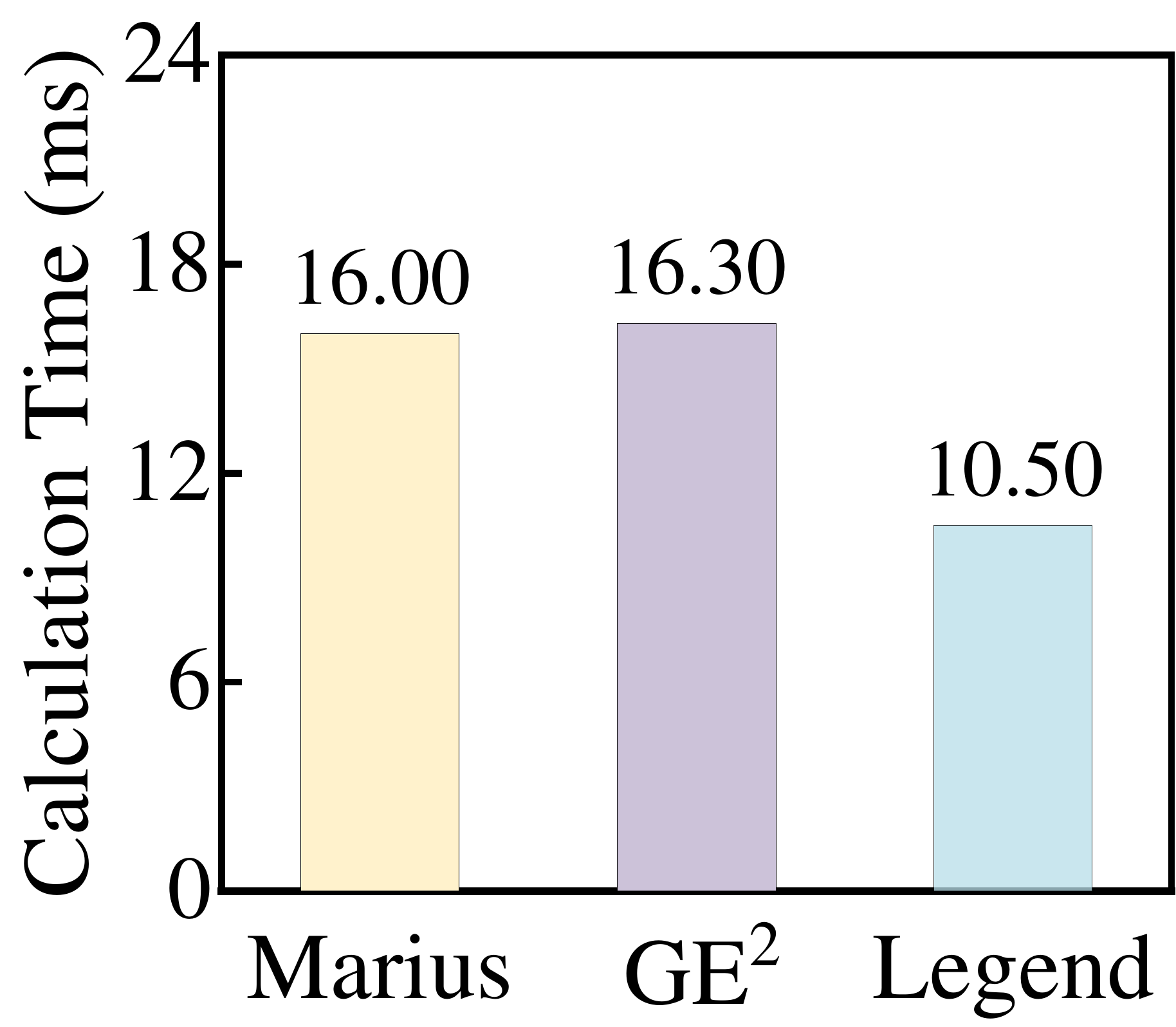}}
    \hspace{2mm}
    \subfigure[\textit{FM}]{
    \includegraphics[width=0.22\textwidth]{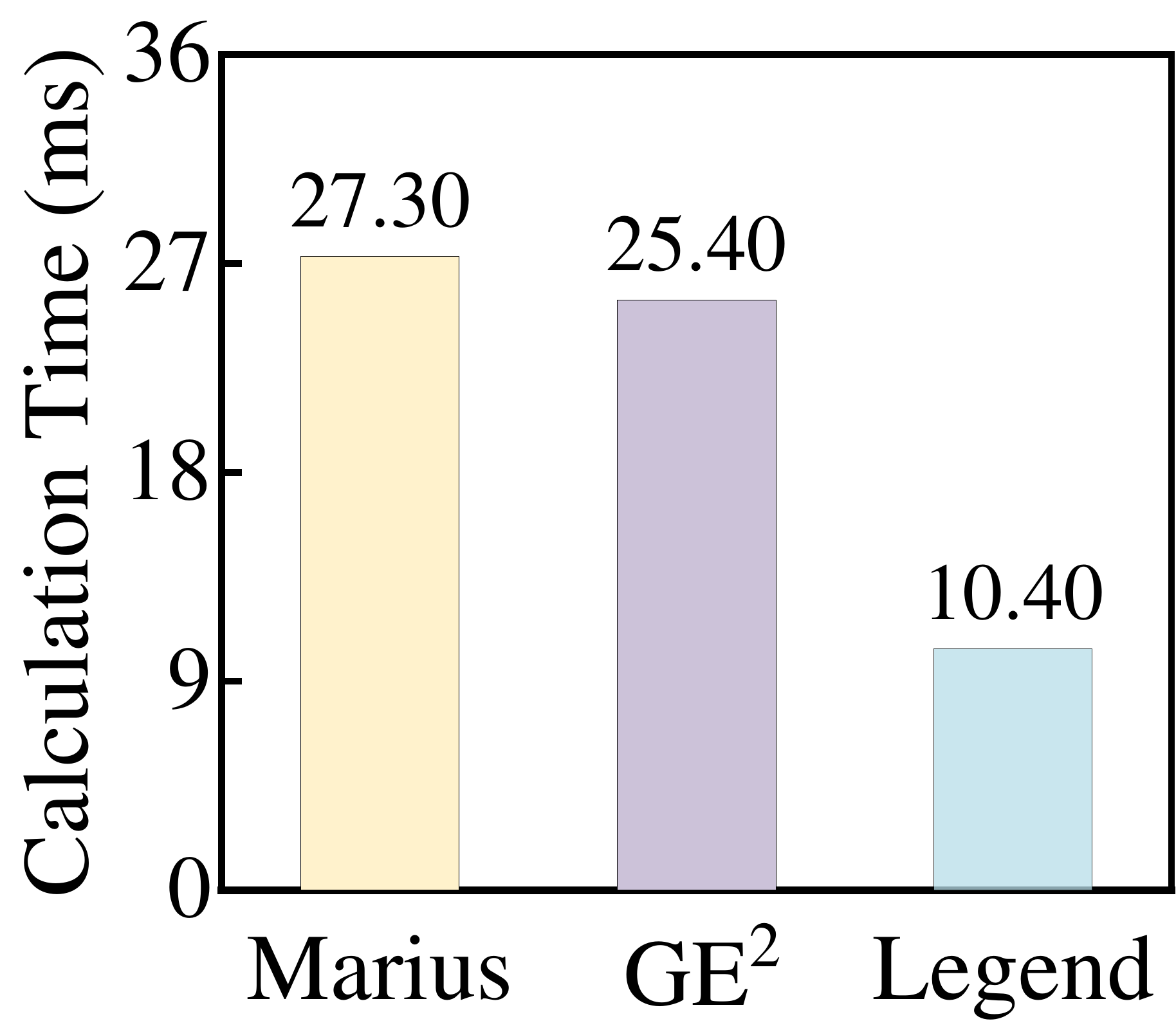}}

    \caption{Comparison of the average batch calculation time for a single batch.}
    \label{fig:cal_time}
    \vspace{-2mm}
\end{figure}

\subsection{Optimizations on the GPU}
\label{subsec:eva_gpu}
In this subsection, we evaluate the optimization techniques applied to GPU computing, as proposed in Section \ref{subsec:optGPU}. To achieve this, we measure the average calculation time per batch. The results are reported in Figure \ref{fig:cal_time}. 
% As {\sf Marius} exists a floating point exception with batch size $10^5$ on \textit{FB}, we omit its result on \textit{FB}. 
{\sf Marius} and {\sf GE$^2$} exhibit similar performance across the four datasets, as they utilize the same training engine. The training overhead for both systems on \textit{FB} and \textit{FM} is greater compared to \textit{LJ} and \textit{TW}. This discrepancy arises because these datasets are different types of graphs and employ distinct embedding models, as discussed in subsection \ref{subsec:setting}. Specifically, \textit{FB} and \textit{FM} are knowledge graphs with multiple types of edges, which use {\sf ComplEx} model. These methods calculate embeddings of edges while \textit{LJ} and \textit{TW} don't. Moreover, the computing process of {\sf ComplEx} involves cross calculation, which is more complex and leads to higher overhead.  
In {\sf Legend}, we combine the batch calculation processes for edge embeddings with those of node embeddings, eliminating redundant calculation. We also devise a generalized parallel strategy to fully utilize the computing and storage resources on the GPU. As a consequence, the calculation overhead remains consistent across the 4 datasets. 

\begin{figure}
\centering 
    \includegraphics[width=0.43\textwidth]{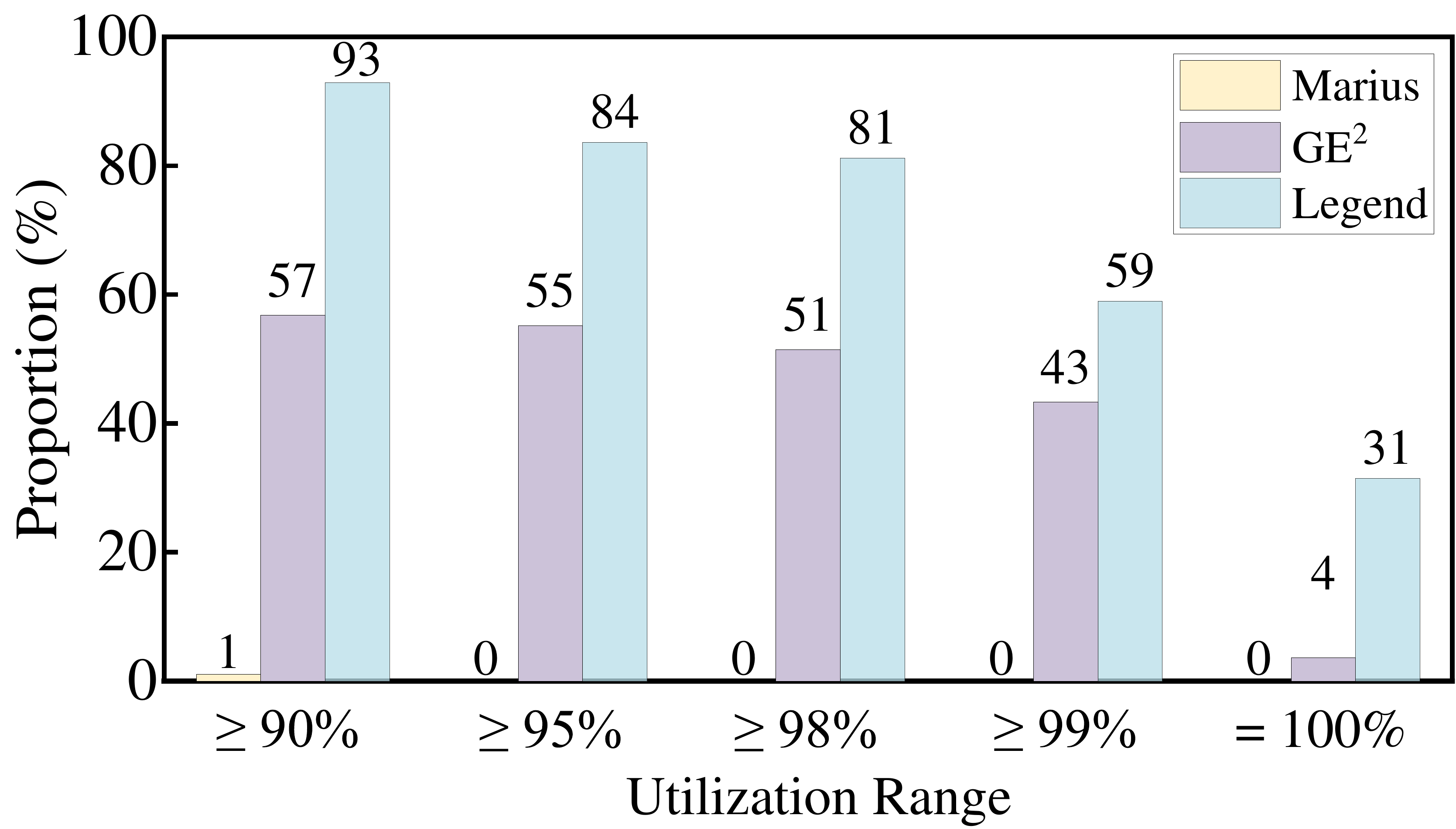}
    \caption{{Statistical information on the GPU utilization of graph embedding systems during training on \textit{TW}. The utilization of {\sf Marius} remains below 95\%.}}
    \label{fig:uti_fig}
    \vspace{-3mm}
\end{figure}

\begin{figure}
\centering 

    \subfigure[\textit{Dim=50}]{
    \includegraphics[width=0.22\textwidth]{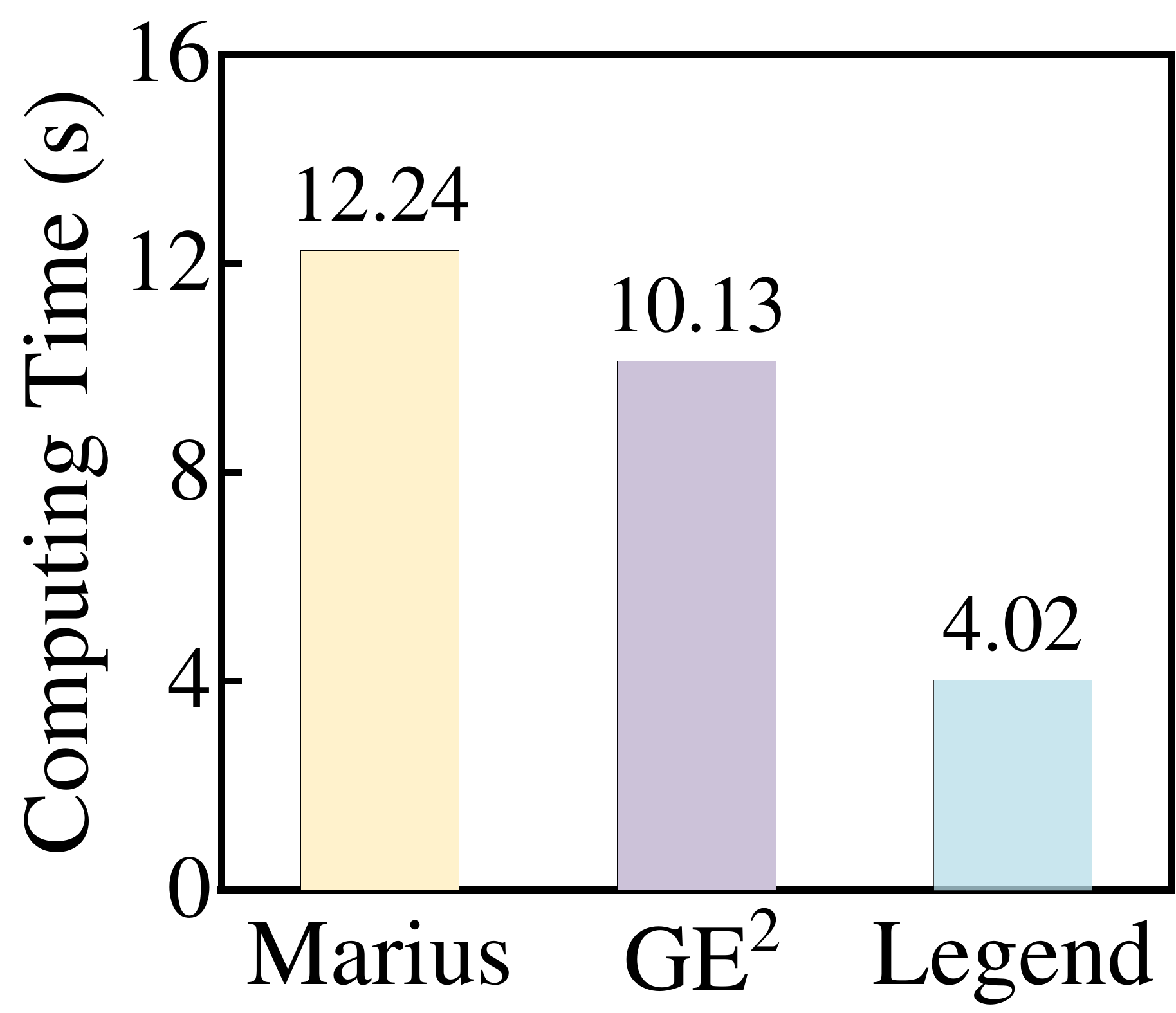}}
    \hspace{2mm}
    \subfigure[\textit{Dim=100}]{
    \includegraphics[width=0.22\textwidth]{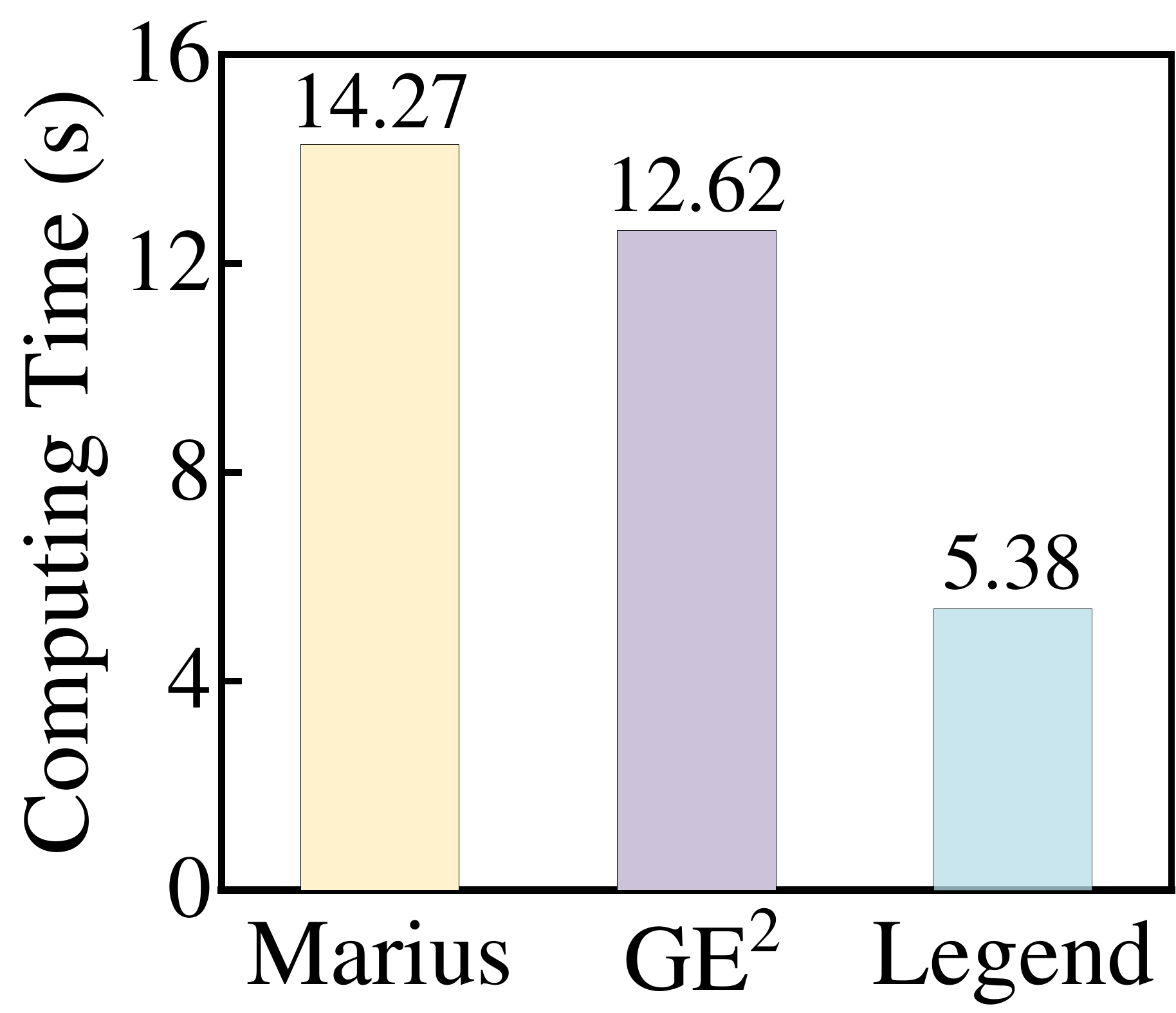}}
    \hspace{2mm}
    \subfigure[\textit{Dim=150}]{
    \includegraphics[width=0.22\textwidth]{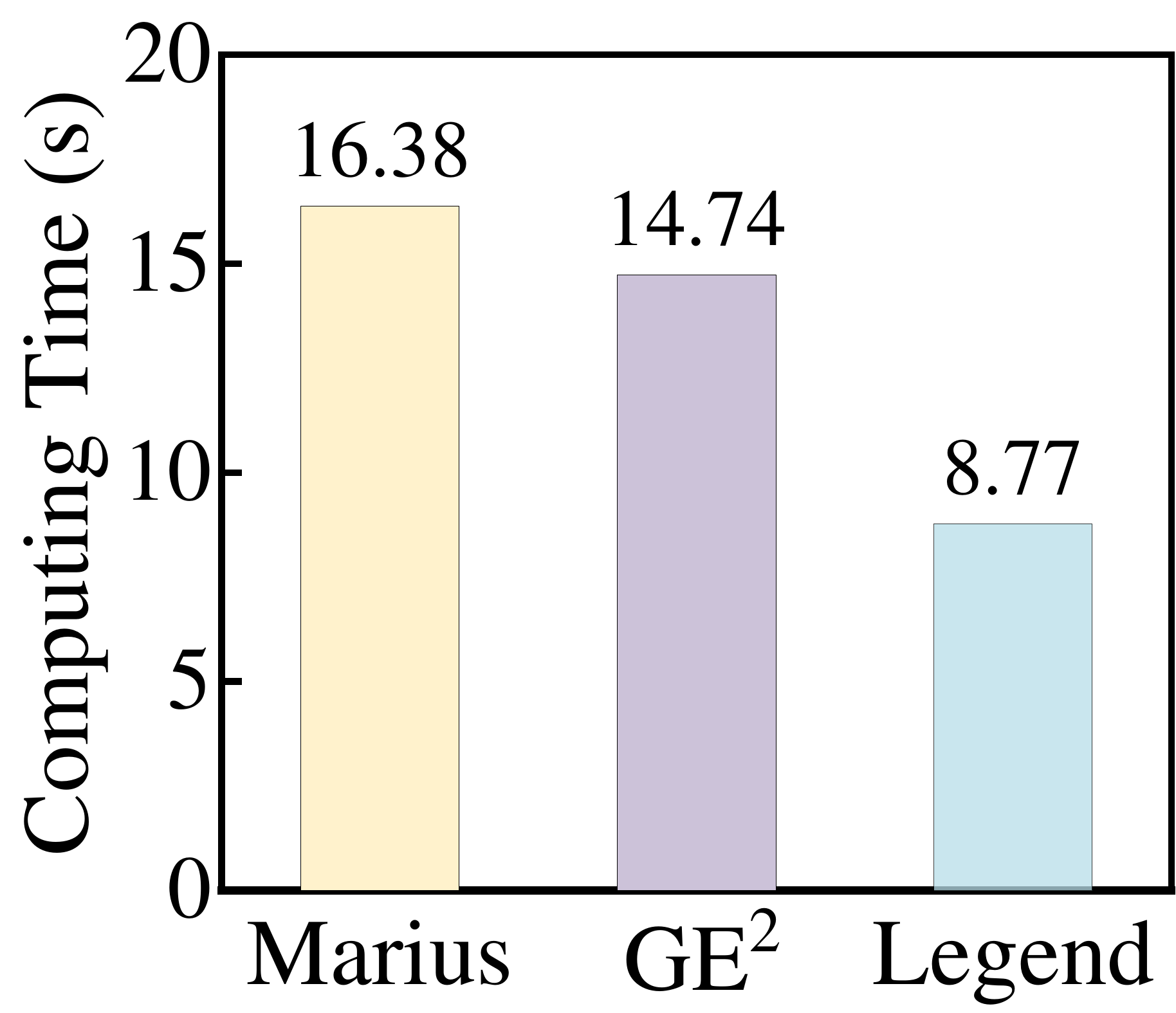}}
    \hspace{2mm}
    \subfigure[\textit{Dim=200}]{
    \includegraphics[width=0.22\textwidth]{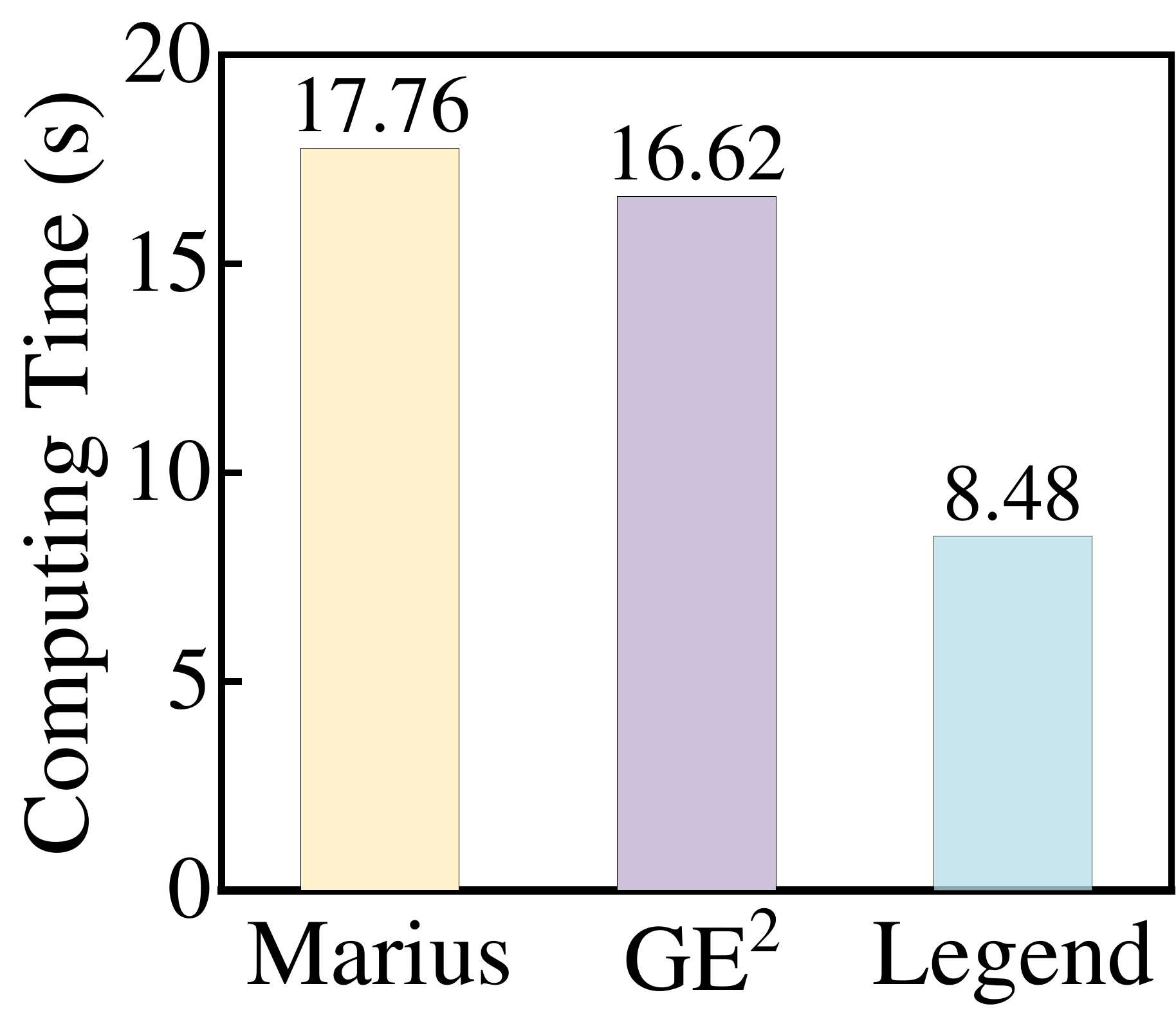}}

    \caption{{Comparison of the batch computing overhead with various embedding dimensions in the batch size of 50000 using the {\sf ComplEx} embedding model on \textit{FB}.}}
    \label{fig:dim_cal_time}
\end{figure}

On \textit{LJ} and \textit{TW}, {\sf Legend} achieves a speedup ratio of 1.5$\times$-1.6$\times$, while the speedup exceeds 2$\times$ on \textit{FB} and \textit{FM}. This indicates that the parallel strategy and intermediate results reuse techniques proposed in Section \ref{subsec:optGPU} are effective. Furthermore, we conduct statistics on the GPU utilization of each graph embedding system during the training process. As shown in Figure \ref{fig:uti_fig}, the GPU utilization for {\sf Legend} remains above 98\% for 81.22\% of the time, above 99\% for 59\% of the time, and reaches 100\% for 31.49\% of the time. In contrast, the proportions of time during which GPU utilization exceeds 98\%, 99\%, and 100\% for {\sf GE$^2$} are 51.48\%, 43.28\%, and 3.64\%, respectively. Thus, the optimization of training on the GPU significantly improves GPU utilization, leading to a substantial reduction in calculation overhead. 

{Furthermore, to evaluate the scalability of the three graph embedding systems with varying embedding dimensions, we conduct experiments using {\sf ComplEx} embedding model with a batch size of 50000 by setting the embedding dimension to 50, 100, 150, and 200. 
The execution overheads of the three graph embedding systems are depicted in Figure \ref{fig:dim_cal_time}. 
{\sf Legend} consistently outperforms the other graph embedding systems with various embedding dimensions. It achieves a speedup ratio of 2.13$\times$ over {\sf GE$^2$} and 2.42$\times$ over {\sf Marius}, which demonstrates the computational scalability of our proposed {\sf Legend} with various embedding dimensions. 
The superior performance of {\sf Legend} is attributed to the optimizations of parallel strategy, memory access, and computing process in Section \ref{subsec:optGPU}.
When the embedding dimension is 150, the computing overhead of {\sf Legend} is similar to that when the dimension is 200. This is attributed to the warp scheduling model of the GPU, where a small number of warps can be scheduled to execute concurrently on a single SM. }
% \begin{table}[tbp]
% \renewcommand{\arraystretch}{1.2}
%     \centering
%     \small
%     \caption{Average gradient calculation time of a batch. }
%     % \vspace{-2mm}
%     \begin{tabular}{p{1.2cm}<{\centering}p{1.2cm}<{\centering}p{1.2cm}<{\centering}p{1.3cm}<{\centering}p{1.3cm}<{\centering} }
%     \toprule
%         Systems & \textit{FB} & \textit{LJ} & \textit{TW} & \textit{FM} \\
%     \toprule
%         {\sf Marius} & - & 15.6ms & 16.0ms & 27.3ms \\
%         {\sf GE$^2$} & 24.1ms & 16.6ms & 16.3ms & 25.4ms \\
%         {\sf \textbf{Legend}} & \textbf{10.2ms} & \textbf{10.5ms} & \textbf{10.5ms} & \textbf{10.4ms} \\
%     \bottomrule
%     \end{tabular}
%     \label{tab:batch}
%     % \vspace{-2mm}
% \end{table}

% \vspace{-1mm}
\section{Related Work}
\label{relatedworks}

% \vspace{-2mm}

\noindent \textbf{Graph embedding models}. Extensive studies have been conducted to enhance the quality of graph embeddings. For general graphs, existing algorithms typically sample edges based on random walks~\cite{ribeiro2017struc2vec,wang2018billion}. For example,
% and employ the idea of \textsf{Word2Vec}~\cite{tom2013efficient} to train the embedding by skip-gram model~\cite{perozzi2014deepwalk,grover2016node2vec,tang2015line,ribeiro2017struc2vec,wang2018billion}. 
\textsf{DeepWalk}~\cite{perozzi2014deepwalk} employs the idea of \textsf{Word2Vec}~\cite{tom2013efficient}, generating a series of random walk paths and training the embedding by the skip-gram model. \textsf{Node2Vec}~\cite{grover2016node2vec} improves \textsf{DeepWalk}, which balances the embedding results between homogeneity and structure of the network by adjusting the weights of random walks. \textsf{LINE}~\cite{tang2015line} defines first-order and second-order similarity on the graph to constrain the learning of embeddings. 
In the context of multi-relation graphs, all edges in the graph are used for embedding learning without sampling~\cite{lerer2019pytorch,zheng2020dgl,mohoney2021marius,zheng2024ge2}. Extensive multi-relation graph embedding models have been developed, categorized into two primary types: translational distance models and semantic matching models~\cite{wang2017knowledge}. Translational distance models, such as \textsf{TransE}~\cite{bordes2013translating} and \textsf{TransH}~\cite{wang2014knowledge}, employ distance-based scoring functions to evaluate the plausibility of facts between entities. Semantic matching models, such as \textsf{DistMult}~\cite{yang2015embedding} and \textsf{ComplEx}~\cite{trouillon2016complex}, utilize similarity-based scoring functions to assess plausibility based on entities' semantics. These multi-relation graph embedding models can be easily integrated into {\sf Legend}. 

\noindent \textbf{Graph embedding systems}. Significant efforts have been dedicated to developing efficient systems for graph embedding training. \textsf{GraphVite}~\cite{zhu2019graphvite} employs the CPU to generate random walks and sample negative edges on GPUs to achieve efficient embedding learning on general graphs. \textsf{HET-KE}~\cite{dong2022het} proposes a cache embedding table to reduce the communication overhead among distributed machines. \textsf{DistGER}~\cite{fang2023distributed} proposes an efficient distributed graph embedding system. For massive knowledge graphs, \textsf{PBG}~\cite{lerer2019pytorch} proposes a batched negative sampling method to reduce memory access overhead. \textsf{DGL-KE}~\cite{zheng2020dgl} overlaps the gradient update with batch processing to reduce GPU idle time. Kochsiek et al. provide a comprehensive experimental study of the existing knowledge graph embedding training techniques~\cite{kochsiek2021parallel}. Following \textsf{PBG}, \textsf{Marius}~\cite{mohoney2021marius} proposes a partition loading order \textsf{BETA} to reduce the I/O times and pipelines the training procedure on the CPU and GPU. \textsf{GE$^2$}~\cite{zheng2024ge2} designs a general negative sampling execution model, and proposes a loading order to reduce I/O overhead between RAM and GPUs. 
Different from them, our proposed graph embedding system, {\sf Legend}, employs GPU-SSD direct access and prefetch-friendly order to optimize the I/O efficiency, while utilizing a customized GPU kernel to optimize the computing efficiency. {{\sf Legend} is specifically designed for graph embedding, but the framework also has the potential to be adopted in other areas such as out-of-core GNN training~\cite{jeongmin2024accelerating}, DNN training~\cite{bae2021flashneuron}, and large-scale vector search~\cite{huang2024neos} with customized optimizations. }

\noindent \textbf{GPU direct access to NVMe SSD}. Recent research has studied the GPU-SSD direct access to meet the demand for low latency and large capacity. GPUDirect Storage (\textsf{GDS})~\cite{refgds} is a library supporting data transmission between GPU and NVMe SSD through a bounce buffer in the CPU’s memory and a direct memory access (DMA) engine. However, \textsf{GDS} is still restricted by the high-overhead software stacks. To completely break free from the limitations of the Linux software stacks, \textsf{BaM}~\cite{qureshi2023gpu} proposes a queue management mechanism and caching strategy completely on the GPU to achieve high-throughput access to storage. There are also various customized GPU direct access methods for specific applications such as DNN training~\cite{bae2021flashneuron}, GNN training~\cite{jeongmin2024accelerating}, vector retrieval~\cite{huang2024neos}, and data analysis in OLAP~\cite{li2016hippogriffdb}. These methods simply adopt a GPU-SSD direct access library without optimizing the underlying access mechanism. The graph embedding workflow has its specific I/O pattern as discussed in Section \ref{subsec:optNVMe}.
To maximize the bandwidth between the GPU and SSD during graph embedding, {\sf Legend} customizes the queue management, doorbell ringing, and polling mechanism for the GPU-SSD direct access driver according to the graph embedding workload, significantly reducing the data access overhead. 

\noindent {\textbf{Specialized GPU kernels for graph learning.} Computational overhead is always a primary challenge for graph learning. To tackle this challenge, recent studies have designed custom GPU kernels for each module in graph learning. To achieve efficient graph sampling, {\sf gSamlper}~\cite{gong2023gsampler} proposes efficient sampling kernels with operator fusion, while {\sf FlowWalker}~\cite{mei2024flowwalker} implements a sampling kernel framework with memory-efficient optimizations. 
Another general operation in graph learning is SDDMM/SpMM. {\sf FusedMM}~\cite{rahman2021fusedmm} design a unified SDDMM-SpMM kernel to avoid redundant intermediate results. {\sf TC-GNN}~\cite{wang2023tc} employs Tensor cores to achieve more efficient SpMM on graphs. Moreover, {\sf HC-SpMM}~\cite{li2025hc} proposes a hybrid CUDA-Tensor kernel to fully utilize the heterogeneous GPU cores. 
GNN, as a primary framework in graph learning, has also been studied to design specialized GPU kernels. {\sf GNNAdvisor}~\cite{wang2021gnnadvisor} designs GPU kernels specially for GNN workload to improve memory access and GPU utilization. {\sf PruneGNN}~\cite{gurevin2024prunegnn} proposes SIMD-aware kernels to exploit matrix-operator-level parallelism. 
The graph embedding models, such as {\sf DistMult}, have a different computing paradigm from graph sampling and SpMM. {\sf Legend} designs an optimized GPU kernel to enhance the computing efficiency of these models. }
% \vspace{-2.5mm}
\section{Conclusion}
% \vspace{-1.5mm}
\label{sec:conclusion}

We introduce {\sf Legend}, a lightweight graph embedding system. {\sf Legend} systematically integrates CPU, GPU, and NVMe SSD resources, which perform efficient and scalable embedding training. 
% We reduce the demand for expensive and limited RAM and enhance the efficiency of graph learning by employing the techniques of GPU-NVMe SSD direct access. 
We carefully design the workflow to enable a seamless introduction of the NVMe SSD into the system and distribute tasks according to the unique characteristics of each hardware component. Meanwhile, we design an edge bucket iteration order that minimizes the I/O times between GPU and SSD while supporting efficient prefetching, and a customized GPU-SSD direct access driver to significantly reduce I/O overhead. 
Furthermore, we propose an efficient parallel strategy for graph embedding workload to optimize the computation on the GPU, ensuring efficient handling of billion-scale datasets. Experimental results consistently demonstrate the superiority of {\sf Legend}.

%\begin{acknowledgements}
%If you'd like to thank anyone, place your comments here
%and remove the percent signs.
%\end{acknowledgements}

% BibTeX users please use one of
%\bibliographystyle{spbasic}      % basic style, author-year citations
%\bibliographystyle{spmpsci}      % mathematics and physical sciences
%\bibliographystyle{spphys}       % APS-like style for physics
%\bibliography{}   % name your BibTeX data base

% Non-BibTeX users please use
% \begin{thebibliography}{}
%
% and use \bibitem to create references. Consult the Instructions
% for authors for reference list style.
%
% \bibitem{ref.bib}
% % Format for Journal Reference
% Author, Article title, Journal, Volume, page numbers (year)
% % Format for books
% \bibitem{RefB}
% Author, Book title, page numbers. Publisher, place (year)
% etc
% \end{thebibliography}
\bibliographystyle{spmpsci}
\bibliography{ref}

\end{document}